\documentclass{article}
\usepackage[utf8]{inputenc}
\usepackage[letterpaper,margin=.97in]{geometry}
\usepackage{graphicx}
\graphicspath{{img/}}
\usepackage{csquotes}
\MakeOuterQuote{"}

\usepackage{setspace}
\usepackage{url} 
\usepackage{hyperref}
\usepackage{amsthm,amsmath,amssymb}
\usepackage{color}
\usepackage[ruled,vlined]{algorithm2e}
\usepackage{tabularx}
\usepackage{pdflscape,latexsym,bm,array,caption,textcomp}
\usepackage{xspace}
\newcommand{\zell}[2][c]{ \begin{tabular}[#1]{@{}c@{}}#2\end{tabular}}
\newcolumntype{e}{>{\centering\arraybackslash}p{10.6em}}
\newcolumntype{w}{>{\centering\arraybackslash}p{15.5em}}
\newcolumntype{t}{>{\centering\arraybackslash}p{14.5em}}

\newtheorem{theorem}{Theorem}[section]

\newtheorem{lemma}[theorem]{Lemma}
\newtheorem{claim}[theorem]{Claim}

\theoremstyle{definition}
\newtheorem{definition}{Definition}[section]
\newtheorem{observation}[theorem]{Observation}

\newcommand{\A}{{\cal A}}
\newcommand{\C}{{\cal C}}
\newcommand{\D}{{\cal D}}

\newcommand{\F}{{\cal F}}
\newcommand{\G}{{\cal G}}
\renewcommand{\H}{{\cal H}}
\renewcommand{\L}{{\cal L}}
\renewcommand{\O}{{\cal O}}
\renewcommand{\P}{{\cal P}}
\newcommand{\Q}{{\cal Q}}

\newcommand{\cv}{{\cal CV}}
\newcommand{\dist}{\mathrm{dist}}

\newcommand{\slp}{\textsc{Short-Length-Preserver}}

\newcommand{\ftrs}{\textsc{FTRS}}
\newcommand{\kftrs}{\textsc{$k$-FTRS}}
\newcommand{\ftro}{\textsc{FTRO}}
\newcommand{\kftro}{\textsc{$k$-FTRO}}

\newcommand{\new}{\textcolor{red}{(New)}}
\newcommand{\freq}{\textsc{freq}}
\newcommand{\first}{\textsc{first}}
\newcommand{\last}{\textsc{last}}

\title{Pairwise Reachability Oracles and Preservers under Failures}  

\author{}
\author{Diptarka Chakraborty\footnote{National University of Singapore, Singapore.  Supported in part by NUS ODPRT Grant, WBS No. R-252-000-A94-133. Email: diptarka@comp.nus.edu.sg} \and Kushagra Chatterjee \footnote{National University of Singapore, Singapore.  Supported in part by NUS ODPRT Grant, WBS No. R-252-000-A94-133. Email: e0823067@u.nus.edu} \and Keerti Choudhary\footnote{Indian Institute of Technology Delhi, India. Email: keerti@cse.iitd.ac.in}}

\date{}

\newcommand{\Todo}[1]{\textcolor{magenta}{[\textbf{TODO}: #1]}}

\begin{document}
\pagenumbering{gobble}
\maketitle

\begin{abstract}
In this paper, we consider reachability oracles and reachability preservers for directed graphs/networks prone to edge/node failures. Let $G = (V, E)$ be a directed graph on $n$-nodes, and $\P\subseteq V\times V$ be a set of vertex pairs in $G$. We present the first non-trivial constructions of single and dual fault-tolerant 
pairwise reachability oracle with constant query time. Furthermore, we provide extremal bounds for
sparse fault-tolerant reachability preservers, resilient to two or more failures.
Prior to this work, such oracles and reachability preservers were widely studied for the special scenario of single-source and all-pairs settings. However, for the scenario of arbitrary pairs, 
no prior (non-trivial) results were known for dual (or more) failures, except those implied from the single-source setting. One of the main questions is whether it is possible to beat the $O(n |\P|)$ size bound (derived from the single-source setting) for reachability oracle and preserver for dual failures (or $O(2^k n|\P|)$ bound for $k$ failures). We answer this question affirmatively. Below we summarize our contributions.
 
\begin{itemize}
\item For an $n$-vertex directed graph $G = (V, E)$ and $\P \subseteq V\times V$,  we
present a construction of $O(n \sqrt{|\P|})$ sized dual fault-tolerant 
 pairwise reachability oracle with constant query time.
We further provide a matching (up to the word size) lower bound of $\Omega(n \sqrt{|\P|})$ 
on the size (in bits) of the oracle for the dual fault setting,
thereby proving that our oracle is (near-)optimal.

\item Next, we provide a construction of $O(n + \min\{|\P|\sqrt n,~n\sqrt{|\P|}\})$ sized oracle with $O(1)$ query time, resilient to single node/edge failure. In particular, for $|\P|$ bounded by $O(\sqrt{n})$ this yields an oracle of just $O(n)$ size. We complement the upper bound with a lower bound of $\Omega(n^{2/3}|\P|^{1/2})$ (in bits), refuting the possibility of a linear-sized oracle for $\P$ of size $\omega(n^{2/3})$.
 
 \item We also present a construction of $O(n^{4/3} |\P|^{1/3})$ sized pairwise reachability preservers resilient to dual edge/vertex failures. Previously, such preservers were known to exist only under single failure and had $O(n+\min\{|\P|\sqrt{n},~n\sqrt {|\P|}\})$ size [Chakraborty and Choudhary, ICALP'20].
We also show a lower bound of $\Omega(n \sqrt{|\P|})$ edges on the size of dual fault-tolerant reachability preservers, thereby providing a sharp gap between single and dual fault-tolerant reachability preservers for $|\P|=o(n)$. 

\item Finally, we provide a generic pairwise reachability preserver construction that provides a $o(2^k n |\P|)$ sized subgraph resilient to $k$ failures, for any $k \ge 1$. Before this work, we only knew of an $O(2^k n |\P|)$ bound implied from the single-source setting [Baswana, Choudhary, and Roditty, STOC'16].
 \end{itemize}

\end{abstract}

\newpage
\pagenumbering{arabic}

\section{Introduction}

Networks in most real-life applications are prone to failures. These failures, though unpredictable, 
are transient due to some simultaneous repair process that is undertaken in the
application. This motivates the research on designing fault-tolerant structures for various
graph problems. In the past few years, a lot of work has been done in designing fault-tolerant
structures for various graph problems
like connectivity~\cite{PP14, Parter15, BCR16, GK17, ChakrabortyC20}, finding shortest paths~\cite{DTCR08}, 
graph-structures preserving approximate distances~\cite{Luk99, CZ04, CLPR09, DK11, BK13, BGGLP15, BGPW17, BCHR18} etc. 
Reachability is one of the fundamental graph properties which is as ubiquitous as graphs themselves. 
In this paper, we study pairwise reachability structures under edge/node failures. In particular, 
given any set $\P$ of node-pairs, we provide 
design of graph sparsification structures, and sensitivity oracles for the reachability problem. We present our results in terms of edge failures. However, all our upper bound results also hold for node failures.\footnote{In the input graph, each vertex $v$ can be replaced by an edge $(v_{in},v_{out})$, where all the incoming and outgoing edges of $v$ are directed into $v_{in}$ and directed out of $v_{out}$ respectively.
Thus the failure of vertex $v$ is equivalent to failure of edge $(v_{in},v_{out})$.
}

\subsection{Sensitivity Oracle}
\label{sec:intro-ftro}
In the  Sensitivity oracle, the goal is to design a data structure for a network prone to edge/vertex failures to efficiently answer queries pertaining to the graph structure (e.g., connectivity, reachability, distance, etc.). 
 We first formally define the notion of  Fault-Tolerant Reachability Oracle (FTRO).

\begin{definition}[$\ftro$]
Let $\P\in V\times V$ be any set of pairs of vertices. For a graph $G$, a data structure $DS(G)$ is said 
to be a \emph{$k$-Fault-Tolerant Reachability Oracle} of $G$ for $\P$, denoted as $\kftro(G,\P)$, if 
given a query with any pair $(s,t)\in \P$ and any subset $F\subseteq E$ of at most $k$ edges, $DS(G)$ 
efficiently decides whether or not $t$ is reachable from $s$ in $G \setminus F$. 
\label{definition:FTRO}
\end{definition}

To date, no non-trivial bounds were known for FT-pairwise reachability oracle. 
The only known results are for single-source setting (i.e., $\P=\{s\}\times V$) and all-pairs setting $\P=V\times V$.

For single-source setting, i.e., when $\P=\{s\}\times V$ for some source vertex $s \in V$, under (single and) dual failure, we have an $O(n)$ size oracle with $O(1)$ query time due to~\cite{LT79,Choudhary16}.
As an immediate corollary, for arbitrary $\P$ pairs, we get an $O(n|\P|)$-sized single/dual failure
pairwise $\ftro$ with constant query time. The bound is extremely bad for a large-sized set $\P$. By storing a subgraph that preserves pairwise reachability (to be discussed in detail in Section~\ref{sec:intro-ftrs}) under single failure due to~\cite{ChakrabortyC20}, we get an 1-$\ftro$ of size $O(n+\min\{\sqrt{n} |\P|, n \sqrt{|\P|}\})$ but with $O(n)$ query time. The $O(n)$ query time is due to the fresh reachability computation over the stored subgraph on each query, which is entirely undesirable in terms of the efficiency of a data structure. For the special setting of all-pairs, i.e., $\P=V\times V$, Brand and Saranurak~\cite{BrandS19} provided a $O(n^2)$ sized $\kftro$ that has $O(k^\omega)$ query time, where $\omega$ is the constant of matrix-multiplication. 

One of the main questions is the following: Does there exist a pairwise reachability oracle of size $o(n |\P|)$ and query time $o(n)$ even for a single failure? In this paper, we answer this question affirmatively. We provide an efficient construction of a $O(n \sqrt{|\P|})$ sized $\ftro$ with constant query time that is resilient to dual failure (not just single failure).

\begin{theorem}[Upper Bound on 2-$\ftro$]
\label{thm:dual-oracle}
A directed graph $G=(V,E)$ with $n$ vertices can be processed in randomized polynomial time for a given set $\P \subseteq V \times V$ 
of vertex-pairs, to build a data structure of size $O(n \sqrt{|\P|})$, such that for any pair $(s,t)\in \P$ and 
any set $F$ of (at most) two edge failure, it decides whether there is an $s$ to $t$ path in $G\setminus F$ in time $O(1)$.
\end{theorem}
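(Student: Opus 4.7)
The plan is a heavy--light decomposition of $\P$ bolted onto the single-source $O(n)$-sized dual-fault oracle of \cite{Choudhary16}. Fix the threshold $\tau=\sqrt{|\P|}$ and call $s\in V$ a \emph{heavy source} if at least $\tau$ pairs in $\P$ start at $s$; define \emph{heavy sinks} symmetrically. A standard averaging argument gives at most $\sqrt{|\P|}$ heavy sources and at most $\sqrt{|\P|}$ heavy sinks. For each heavy source $s$ I would install the single-source 2-$\ftro$ of \cite{Choudhary16} rooted at $s$, and for each heavy sink $t$ the analogous oracle in the reverse graph $G^R$ rooted at $t$. This uses $O(n\sqrt{|\P|})$ space in total and resolves every query $(s,t,F)\in\P\times\binom{E}{\le 2}$ whose source or sink is heavy in $O(1)$ time.

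For the residual pairs $\P'\subseteq\P$ (both endpoints light), I would build a sparse common backbone. Take $H$ to be the pairwise $1$-fault reachability preserver of \cite{ChakrabortyC20} for $\P'$, which has $O(n\sqrt{|\P|})$ edges, and store it. Any dual-failure query $(s,t,\{e_1,e_2\})$ with $(s,t)\in\P'$ and both $e_i\notin E(H)$ reduces to plain $s$-to-$t$ reachability in $G$ (precomputed in one bit per pair), since $H\subseteq G\setminus\{e_1,e_2\}$; a query with exactly one failed edge in $E(H)$ reduces, by the $1$-preserver property, to a single-fault query on $H$, which I would tabulate via per-pair dominator- and post-dominator-tree labels on $H$ in $O(n\sqrt{|\P|})$ additional space.

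The remaining case---both failures inside $E(H)$---is where the real work lies. My plan is to pick an auxiliary set $R\subseteq V$ of size $O(\sqrt{|\P|})$, randomised to hit each pair's dual-edge cut witnesses inside $H$ with high probability, and to install at each $r\in R$ both a single-source 2-$\ftro$ of $G$ and a single-source 2-$\ftro$ of $G^R$. A query with both failures in $H$ then decomposes into a pair of $r$-rooted queries (``$s\to r$ in $G\setminus\{e_1,e_2\}$'' and ``$r\to t$ in $G\setminus\{e_1,e_2\}$'') for some witness $r\in R$, each answered in $O(1)$ using the stored oracles. The total additional storage is $O(n|R|)=O(n\sqrt{|\P|})$, matching the target.

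\emph{Main obstacle.} The technical heart is establishing the existence and randomised polynomial-time construction of such a set $R$ of size $O(\sqrt{|\P|})$, i.e.\ showing that every dual-edge cut of every pair in $\P'$, when restricted to the sparse backbone $H$, admits a witness vertex of $R$ lying on a surviving path on the ``correct'' side of the cut. This requires leveraging the structure of dual-edge cuts in directed graphs (the interleaving of dominator and post-dominator trees inside $H$) together with the already-sparse nature of $H$, and a union bound over the $|\P'|$ pairs. This step is also where the randomisation promised by the theorem enters.
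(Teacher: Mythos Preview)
Your reductions for ``zero failures in $H$'' and ``one failure in $H$'' are correct: when $e_2\notin E(H)$, the 1-preserver property applied with failure $e_1$ gives reachability in $G\setminus\{e_1,e_2\}$ iff reachability in $H\setminus\{e_1\}$. But the one-failure case already needs more than ``per-pair dominator-tree labels'' --- distinct light sources can number $\Theta(|\P'|)$, each costing $\Theta(n)$ for a dominator tree, so you would in effect need the paper's own non-trivial 1-FTRO (Theorem~\ref{thm:single-oracle}) here. The genuine gap is the ``both failures in $H$'' case. You posit a random set $R$ of size $O(\sqrt{|\P|})$ that, for every pair and every non-separating dual failure, contains a vertex on some surviving $s$--$t$ path; but surviving paths can have length $O(1)$, so no sampling argument yields this, and you offer no alternative mechanism. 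Even if $R$ existed, identifying the correct $r\in R$ at query time in $O(1)$ (rather than $\Theta(\sqrt{|\P|})$) is unaddressed. You flag this as the main obstacle, but the plan does not surmount it, and the heavy--light split at the start buys nothing towards it since $\P'$ retains the full difficulty of $\P$.

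The paper's route is structurally different and does not use a global witness set for the hard case. It proves a \emph{nice-path lemma} (Claim~\ref{clm:special-path}): in the single-pair 2-\textsc{ftrs} $H_{(s,t)}$ built from two maximally-disjoint strands plus coupling paths, some surviving $s$--$t$ path after any two failures has the shape strand--coupling--strand. This confines the ``long'' case to four per-pair regions (the first/last $L=\Theta(n/\sqrt{|\P|})$ vertices of each strand), hit by a deterministic fractional hitting set of size $O(\sqrt{|\P|})$ and recorded in a size-$4$ table per pair, so the right witness is found in $O(1)$. The remaining ``short'' case is handled by constructing, per pair, an \emph{auxiliary graph} $A_{(s,t)}$ on $O(L)$ vertices and installing a single-source 2-\textsc{ftro} on it; Lemma~\ref{lem:dual-oracle-correctness} shows reachability under the two failures agrees between $G$ and $A_{(s,t)}$. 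This auxiliary-graph device has no analogue in your plan and is precisely what dispatches the case you leave open.
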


We further show that the above size bound cannot be improved further by providing a matching (up to the word size) lower bound for two failures. To date, no non-trivial (better than linear) size lower bound is known for any pairwise $\ftro$.

\begin{theorem}[Lower Bound on 2-$\ftro$]
\label{thm:lb-oracle-pair}
For any positive integers $n,r~(r\leq n^2)$, there exists an $n$-vertex directed graph with a vertex-pair set $\P$ of size $r$, such that any 2-$\ftro(G,\P)$ must be of size $\Omega(n\sqrt{|\P|})$ (in bits).
\end{theorem}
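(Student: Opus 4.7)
The plan is a counting argument: I will construct a family of $2^{\Omega(n\sqrt{r})}$ $n$-vertex digraphs sharing a common pair set $\P$ of size $r$, such that any two graphs in the family disagree on some 2-fault reachability query. Since a 2-FTRO must answer every query correctly, pairwise distinct query behaviors force pairwise distinct oracle bit-strings, yielding $\Omega(n\sqrt{r})$ bits of storage.

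For the construction, assume for readability that $r$ is a perfect square (the general case is handled by rounding/padding), and set $p=\sqrt{r}$ and $L=\lfloor (n-2p)/(2p)\rfloor=\Theta(n/\sqrt{r})$. The vertices are sources $s_1,\ldots,s_p$, targets $t_1,\ldots,t_p$, and two grids of middle vertices $\{x_{i,k}\}_{i\in[p],k\in[L]}$ and $\{y_{j,k}\}_{j\in[p],k\in[L]}$, totalling at most $n$; the pair set is $\P=\{(s_i,t_j):i,j\in[p]\}$ of size $r$. Every graph in the family contains the backbone paths $s_i\to x_{i,1}\to\cdots\to x_{i,L}$ and $y_{j,1}\to\cdots\to y_{j,L}\to t_j$. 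I then parameterize the family by bit-strings $b\in\{0,1\}^{p^2L}$ by including the ``crossing'' edge $(x_{i,k},y_{j,k})$ in $G_b$ iff $b_{ijk}=1$; this yields $2^{p^2L}=2^{\Omega(n\sqrt{r})}$ graphs.

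The heart of the argument is the bit-isolation claim: for each triple $(i,j,k)$, there is a 2-fault query whose answer is exactly $b_{ijk}$. For $k\in[2,L-1]$, take the pair $(s_i,t_j)$ and the failure set $F_{ijk}=\{(x_{i,k},x_{i,k+1}),(y_{j,k-1},y_{j,k})\}$. The first failure restricts the forward reach of $s_i$ along the $x$-backbone to $\{x_{i,1},\ldots,x_{i,k}\}$, while the second failure leaves only $\{y_{j,k},\ldots,y_{j,L}\}$ with a path to $t_j$. Since the only crossings lie on the diagonal $k'=k''$, any surviving $s_i$-to-$t_j$ path must cross at a level $k'$ with $k'\le k$ and $k'\ge k$, forcing $k'=k$, so the answer is exactly $b_{ijk}$. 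The boundary cases $k=1$ and $k=L$ are handled with a single failure ($(x_{i,1},x_{i,2})$ or $(y_{j,L-1},y_{j,L})$, respectively), which is permitted because the definition of $\kftro$ requires only $|F|\le 2$.

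With bit-isolation in place the conclusion is immediate: any two distinct bit-strings $b\ne b'$ yield graphs that disagree on the query $((s_i,t_j),F_{ijk})$ for any triple where $b$ and $b'$ differ, so their oracles are distinct bit-strings; hence the oracle count is at least $2^{p^2L}=2^{\Omega(n\sqrt{r})}$, giving the claimed $\Omega(n\sqrt{r})$-bit lower bound. The degenerate regime $r=\Omega(n^2)$ (where $L$ shrinks to $O(1)$) is covered by the trivial $\Omega(r)=\Omega(n\sqrt{r})$ per-pair bound. The main obstacle is structural: designing a shared middle gadget in which two failures pin the crossing level to exactly one of $L=\Theta(n/\sqrt{r})$ candidates per pair; the crucial device is the pair of \emph{oppositely oriented} backbones, whose asymmetry lets a single $x$-cut and a single $y$-cut intersect at a unique level.
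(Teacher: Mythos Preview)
Your proof is correct and uses essentially the same construction and bit-isolation queries as the paper (the paper's $P_i,Q_j$ paths and the failures $(p_{k,i},p_{k+1,i}),(q_{k-1,j},q_{k,j})$ are exactly your $x$- and $y$-backbones with $F_{ijk}$). The only difference is cosmetic: the paper derives the $\Omega(n\sqrt{|\P|})$-bit bound by reducing from one-way communication complexity of \textsc{Index}, whereas you give the equivalent direct counting argument over the $2^{p^2L}$-element graph family; both amount to the same information-theoretic statement.
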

In case of source-wise 2-{\ftro} for a source set $S$ (i.e, when $\P=S \times V$), our lower bound construction provides a lower bound of $\Omega(n|S|)$ (in bits). It is again a matching (up to the word size) lower bound because we know of an $O(n|S|)$-sized 2-{\ftro} for any source set $S$ due to~\cite{Choudhary16}. It is also worth noting that our lower bound holds irrespective of the query time and also for directed acyclic graphs.

The above lower bound does not hold for a single failure. So it is natural to ask whether we can design a smaller data structure, more specifically, $O(n)$-sized oracle that is resilient to a single failure. We provide a construction of $O(n + \min\{|\P|\sqrt{n},n\sqrt{|\P|}\})$
sized 1-$\ftro$ with constant $O(1)$ query time. In particular, we show that as long as the number of pairs is bounded by $O(\sqrt{n})$, we can achieve an oracle with $O(n)$ size and $O(1)$ query time. This result provides us a sharp separation in optimum size of a $\ftro$ between single and dual failure. To the best of our knowledge, this is the first separation result between single and dual failure reachability oracle.

\begin{theorem}[Upper Bound on 1-$\ftro$]
\label{thm:single-oracle}
A directed graph $G=(V,E)$ with $n$ vertices can be processed in polynomial time for a given set 
$\P \subseteq V \times V$ of vertex-pairs, to build a data structure of size $O(n + \min\{|\P|\sqrt{n},n\sqrt{|\P|}\})$, such that for any pair $(s,t)\in \P$ and a failure edge $f$, it decides whether there is an $s$ to $t$ path in $G\setminus \{f\}$ in time $O(1)$.
\end{theorem}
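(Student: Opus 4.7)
The plan is to reduce the oracle problem to a pairwise reachability preserver ($1$-FTRS) and then to layer a family of edge-dominator-type structures on top, so that single-failure queries can be answered in $O(1)$ while the total storage stays proportional to the preserver's size.

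As a first step, I would invoke the $1$-FTRS construction of Chakraborty--Choudhary [ICALP'20] to obtain a subgraph $H\subseteq G$ of size $O(n+\min\{|\P|\sqrt{n},n\sqrt{|\P|}\})$ such that for every $(s,t)\in\P$ and every single edge failure $f$, $t$ is reachable from $s$ in $G\setminus\{f\}$ iff $t$ is reachable from $s$ in $H\setminus\{f\}$. This collapses the query problem to answering $1$-failure reachability on the sparse graph $H$. Next, for each source $s$ that appears in $\P$, I would construct the edge-dominator tree $T_s$ of $H$ rooted at $s$: the set of edges whose removal disconnects $t$ from $s$ in $H$ is exactly the set of edges lying on the unique $s$-to-$t$ path in $T_s$. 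Equipped with Euler-tour labels and standard LCA/ancestor preprocessing on $T_s$, one can decide in $O(1)$ whether a queried edge $f$ lies on the $s$-to-$t$ tree path, and hence whether the query answer is yes or no. This already recovers, for a single source, the classical $O(n)$-size, $O(1)$-query $1$-FTRO.

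The central challenge is extending this across all sources of $\P$ without incurring $\Omega(n)$ space per distinct source, which would blow up to $\Omega(n|\P|)$. My plan is to replace each $T_s$ by its Steiner restriction spanning only the relevant targets $\P_s=\{t:(s,t)\in\P\}$, together with a single global Euler-tour labeling on $H$ that is shared across all sources. The key observation is that every edge appearing on some stored dominator-tree path is a critical (single-failure-disconnecting) edge for the corresponding pair of $\P$, and therefore already belongs to $H$ by the FTRS property. Charging each stored edge to its home copy in $H$ caps the total storage by $O(|E(H)|)=O(n+\min\{|\P|\sqrt{n},n\sqrt{|\P|}\})$, and a symmetric argument in the reverse graph handles the regime in which the targets (rather than the sources) are heavily shared.

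The main obstacle I anticipate is the amortized accounting in the previous paragraph: carefully showing that the per-source Steiner-restricted dominator structures coexist inside the preserver while still supporting $O(1)$ ancestor tests simultaneously for all $s$. My hope is that this follows by exploiting the specific structure of the Chakraborty--Choudhary preserver, whose edges arise from path-pairs that naturally align with dominator-tree paths of the relevant sources, so that a single coherent preprocessing phase yields valid LCA/ancestor labels uniformly across all $T_s$.
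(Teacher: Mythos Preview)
Your proposal has a genuine gap in the amortization step, and the ``main obstacle'' you flag is in fact fatal to the approach as stated.

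The difficulty is this: to answer a query $(s,t,f)$ via the dominator tree $T_s$, you must test whether the (edge corresponding to) $f$ is an ancestor of $t$ in $T_s$. An ancestor test requires knowing the position of \emph{both} nodes in the tree. If you keep only the Steiner restriction of $T_s$ to the targets in $\P_s$, you have no handle on where an arbitrary edge $f\in E(H)$ sits inside $T_s$, so the test is not implementable. Conversely, if you retain enough of $T_s$ to locate every possible $f$, you are back to $\Omega(n)$ space per source. A ``single global Euler-tour labeling on $H$ shared across all sources'' cannot help, because the dominator trees $T_s$ for distinct sources are structurally unrelated: the same edge $f$ has entirely different ancestor/descendant relationships in different $T_s$'s, and no source-independent labeling encodes all of them. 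Your charging argument (``each stored edge is a cut-edge for some pair, hence lies in $H$'') bounds only the number of \emph{distinct} cut-edges, not the information needed to decide, for each of $|\P|$ pairs separately, whether a given $f$ is among that pair's cut-edges. The same edge of $H$ can be a cut-edge for arbitrarily many pairs, so the per-pair membership problem is not reducible to membership in $E(H)$.

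For contrast, the paper's construction does not route through the $1$-\textsc{FTRS} at all. It works directly with the cut-vertex sets $\cv(p)$ of the pairs and exploits the fact that these sets are totally ordered along any $s$--$t$ path. The core technical ingredient is an $O(|C|)$-size oracle (based on predecessor/successor forests in the spirit of loop-nesting forests) that answers all-pairs reachability under a single failure \emph{when all three query vertices lie in a fixed $(s,t)$-cut-set $C$}. A greedy partition of the union of cut-vertex sets into at most $\sqrt{n}$ blocks, together with storing for each pair $p$ and each block only the \emph{first} and \emph{last} cut-vertex of $p$ in that block, then yields the $O(n+|\P|\sqrt n)$ bound; edge failures are reduced to vertex failures by a short structural argument, and the $n\sqrt{|\P|}$ term in the minimum comes separately from the dual-failure oracle. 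The point is that the paper avoids per-source linear structures precisely by the first/last trick, which has no analogue in your Steiner-restricted dominator-tree picture.
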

Note, the size bound of the above theorem matches the current best known bound for the pairwise reachability preserving subgraph for single failure~\cite{ChakrabortyC20}.

The above upper bound gives $O(n)$ sized oracle only when the number of pairs is $O(\sqrt{n})$. Is it always possible to get a linear-sized pairwise 1-$\ftro$? More specifically, does any $n$-node graph $G$ and a set $\P$ of node-pairs always possess a 1-$\ftro(G,\P)$ of size $O(n + |\P|)$?\footnote{The presence of $|\P|$ term in the bound is justifiable by the fact that for non-failure case (i.e., the standard static setting), we can get a trivial $O(|\P|)$ sized oracle.} In this paper, we refute this possibility by showing the following. 

\begin{theorem}[Lower Bound on 1-$\ftro$]
\label{thm:1-ftro-lb}
For any positive integers $n,d \ge 2$, any $p=p(n)$, there exists an $n$-vertex directed graph and a node-pair set $\P$ of size $p$, such that any 1-$\ftro(G,\P)$ must be of size $\Omega\big(n^{2/(d+1)}p^{(d-1)/d}\big)$ (in bits).
\end{theorem}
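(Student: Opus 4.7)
The plan is an incompressibility argument: I will exhibit a family $\{G_\sigma\}_{\sigma \in \{0,1\}^\ell}$ of $n$-vertex directed graphs on a common vertex-pair set $\P$ of size $p$, with $\ell = \Theta\!\bigl(n^{2/(d+1)}p^{(d-1)/d}\bigr)$, and show that distinct $\sigma$ are forced to yield distinct 1-$\ftro(G_\sigma,\P)$ behavior on at least one query. A standard counting argument then gives the lower bound of $\Omega(\ell)$ bits.

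\textbf{Step 1 (Extremal skeleton).} I start from a K\H{o}v\'ari--S\'os--Tur\'an extremal bipartite graph $H=(A\cup B, E_H)$ with $|A|=a$, $|B|=b$, chosen $K_{d,d}$-free and achieving $|E_H|=\Theta\bigl(b\cdot a^{1-1/d}\bigr)$. Explicit constructions exist (norm graphs of Koll\'ar--R\'onyai--Szab\'o, or incidence graphs of projective planes for $d=2$). The crucial consequence of the $K_{d,d}$-free condition is that any pair of vertices in $A$ shares at most $d-1$ common neighbours in $B$; this will localize the effect of any single failure to a small set of switches.

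\textbf{Step 2 (Embedding and bit encoding).} For each $\sigma\in\{0,1\}^{E_H}$, construct $G_\sigma$ by embedding $H$ as a middle layer inside an $n$-vertex DAG, attaching $O(p)$ source and sink vertices above and below and connecting them through $A$ and $B$ via simple "spine" paths. For every $e=(u,v)\in E_H$, the bit $\sigma_e$ controls the presence of a single "switch edge" lying on a short detour routed through $u$ and $v$. I then designate, for each $e$, a pair $(s_e,t_e)\in\P$ and a failure edge $f_e$ such that in $G_\sigma\setminus\{f_e\}$, the vertex $t_e$ is reachable from $s_e$ if and only if $\sigma_e=1$. To keep $|\P|\le p$, each pair is reused across multiple switches (with distinct choices of $f_e$), so that a single pair is responsible for roughly $|E_H|/p$ bits.

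\textbf{Step 3 (Parameter tuning).} Writing $n=\Theta(a+b+\text{auxiliary sources/sinks})$ and $|\P|\le p$, I maximize the bit count $|E_H|=\Theta(b\cdot a^{1-1/d})$ subject to these budgets. Balancing the auxiliary-vertex term against $a,b$ via the K\H{o}v\'ari--S\'os--Tur\'an exponent $1-1/d$ leads to the optimum $|E_H|=\Theta\!\bigl(n^{2/(d+1)}p^{(d-1)/d}\bigr)$, which is exactly the bound claimed. Since different $\sigma$ differ in at least one of these $|E_H|$ probe answers, the oracle must store at least $|E_H|$ bits.

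\textbf{Main obstacle.} The delicate part is Step~2: engineering each switch so that a single edge failure reveals exactly one bit of $\sigma$ with no reachability "leaking" through other switches. The $K_{d,d}$-free property of $H$ is precisely what rules out such leaks: if two source-side vertices shared $d$ or more $B$-neighbours, a single failure could simultaneously cut many witness paths, collapsing the count of distinguishable instances and weakening the bound. The accompanying arithmetic crux is juggling the three budgets ($n$ vertices, $p$ pairs, $|E_H|$ bits) so that the K\H{o}v\'ari--S\'os--Tur\'an bound yields exactly the exponents $2/(d+1)$ and $(d-1)/d$, which is what pins down the right choice of $a$, $b$, and the number of auxiliary vertices.
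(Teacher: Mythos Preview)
Your high-level strategy---encode $\Theta(n^{2/(d+1)}p^{(d-1)/d})$ bits into a family of graphs and recover each bit with one reachability-after-failure query---is exactly the right shape, and it is what the paper does too. But the paper's route to it is quite different, and your proposal has a real gap at the point you flag as the ``main obstacle.''

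\textbf{What the paper does.} The paper does \emph{not} build the hard family directly. Instead it proves a general transference (Theorem~\ref{thm:lb-ftrs-ftro}): for any $k$, the extremal size of a $k$-$\ftro$ on $2n$ vertices is $\Omega(k^{-1})$ times the extremal size of a $(k-1)$-$\ftrs$ on $n$ vertices. The argument takes an \emph{arbitrary} optimal $(k-1)$-preserver $\G$, notes that every edge $e\in E(\G)$ lies in some minimal $(\le k)$-cut $C_e$ for some pair $P_e$, and then encodes a bit per edge by either adding or omitting a length-$2$ detour $(x,y_0,y)$ through a fresh vertex copy. Querying reachability for $P_e$ after failing $C_e$ reveals that single bit, because $C_e$ was a minimal cut in $\G$ and detours only parallel existing edges. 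Setting $k=1$ and invoking the Abboud--Bodwin $\Omega(n^{2/(d+1)}p^{(d-1)/d})$ lower bound for (failure-free) reachability preservers as a black box gives the theorem. No bipartite extremal graph is touched directly.

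\textbf{Where your proposal is incomplete.} Your isolation claim---that the $K_{d,d}$-free property ``rules out leaks''---is asserted, not proved, and the reasoning you give is in the wrong direction. What you need is that after deleting $f_e$, \emph{every} $s_e\!\to\! t_e$ route is forced through the single switch for $e$, independent of the settings of all other $\sigma_{e'}$. Bounding common neighbourhoods limits how many switches one failure can \emph{sever}; it does not prevent other switches from \emph{supplying} alternate $s_e\!\to\! t_e$ routes when $\sigma_{e'}=1$. Until you specify the embedding, the spine paths, and the choice of $(s_e,t_e,f_e)$ precisely and verify uniqueness of the surviving route, the incompressibility step does not go through. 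Relatedly, your Step~3 arithmetic is only stated: a single $K_{d,d}$-free bipartite layer with $b\,a^{1-1/d}$ edges does not by itself produce the pair-count constraint that forces the exponent $(d-1)/d$; the AB18 construction that achieves those exponents is a multilayer object engineered so that each of the $p$ demand pairs has a unique long path, which is what makes every edge essential and hence recoverable by one failure.

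\textbf{Comparison.} If you do carry your plan through, you will in effect be re-deriving the AB18 preserver lower bound and simultaneously grafting on the detour-encoding trick. The paper's modular route buys you both pieces for free: the preserver lower bound is cited, and the detour trick is proved once for all $k$. The most economical fix to your proposal is to replace Steps~1--3 by: (i) take an extremal $0$-$\ftrs$ instance $(\G,\P)$ from AB18 with $\Omega(n^{2/(d+1)}p^{(d-1)/d})$ edges, each of which is a cut-edge for some pair; (ii) for each such edge, add or omit a length-$2$ detour through a fresh vertex; (iii) recover the bit for $e$ by querying its pair with failure $\{e\}$. That is precisely the paper's argument specialised to $k=1$.
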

By setting $d=2$ in the above theorem, for $p=O(n)$, we get a lower bound of $\Omega(n^{2/3}p^{1/2})$. This shows us that for $p=\omega(n^{2/3})$, there is an $n$-node graph $G$ and a pair set $\P$ of size $p$, for which linear size 1-$\ftro$ is not possible. Again, our lower bound holds irrespective of the query time and also for DAGs. We show the lower bound by establishing a connection between the optimal sized pairwise 1-$\ftro$ and pairwise reachability preserving subgraph without any failure. In general, we show that the optimal size of any pairwise $\kftro$ must be at least that of the reachability preserving subgraph with $k-1$ failures (see Section~\ref{sec:lb-ftro-ftrs}). Instead of just deciding the reachability between a pair of vertex, suppose the data structure is also asked to report a path between them (if exists). Then by a standard information-theoretic argument, the optimal size of any such data structure resilient to $k$ failures must be of size at least that of reachability preserving subgraph with $k$ failures. Unfortunately, such a direct argument does not work for a (Boolean) data structure that only decides the reachability. Ours is the first such connection. Readers may note that there is a gap between our upper and lower bound for pairwise 1-$\ftro$. We leave this as an interesting open question.

\subsection{Reachability Preservers}
\label{sec:intro-ftrs}
In the context of graph sparsification, {\em reachability preserver} (or {\em reachability subgraph})
for a directed graph $G$ and a set $\P$ of vertex-pairs is a sparse subgraph $H$ with as few edges as possible so that for any pair $(s,t) \in \P$ there is a path from $s$ to $t$ in $H$ if and only if there is such a path in $G$. In the standard static setting (with no failure), this object has been studied widely~\cite{CE06, Bodwin17, AB18}. We study these objects in the presence of edge/node failures.

Let us formally define fault-tolerant reachability subgraph ({\ftrs}) for a set of node-pairs.
\begin{definition}[$\ftrs$]
Let $\P\in V\times V$ be any set of pairs of vertices. A subgraph $H$ of $G$ is said to be a \emph{$k$-Fault-Tolerant 
Reachability Subgraph} of $G$ for $\P$, denoted as $\kftrs(G,\P)$, if for any pair $(s,t)\in \P$ and for any subset 
$F\subseteq E$ of at most $k$ edges, $t$ is reachable from $s$ in $G \setminus F$ if and only if $t$ is reachable from 
$s$ in $H \setminus F$.
\label{definition:FTRS}
\end{definition}


For the particular case of single-source, i.e., $\P=\{s\}\times V$, 
Baswana, Choudhary, and Roditty~\cite{BCR16} provided a polynomial-time algorithm that, given any $n$-node directed graph, constructs an $O(2^k n)$-sized $\kftrs$. As a corollary, to preserve reachability between arbitrary $\P$ pairs, we get an $O(2^kn|\P|)$-sized $k$-fault-tolerant reachability preserver. For the general setting of arbitrary pairs, the only previously known non-trivial result was for single failure~\cite{ChakrabortyC20}, wherein the authors gave an upper bound of $O(n+\min(|\P|\sqrt{n},~n\sqrt {|\P|}))$ edges. It was left open whether for dual or more failures whether keeping fewer than $O(n|\P|)$ edges sufficient to preserve the pairwise reachability. In particular, does any $n$-node graph and a set $\P$ of node-pairs always admit a $\kftrs$ of size $o(2^k n |\P|)$?

In this work, we answer the above question affirmatively. For dual failures, we provide an upper bound of $O(n^{4/3} |\P|^{1/3})$ edges on the structure of 2-$\ftrs(G,\P)$.

\begin{theorem}[Upper Bound on 2-$\ftrs$]
\label{thm:dual-ftrs}
For any directed graph $G=(V,E)$ with $n$ vertices and a set $\P \subseteq V \times V$ of vertex-pairs, there exists a $2$-$\ftrs(G,\P)$ having at most $O(n^{4/3} |\P|^{1/3})$ edges. Furthermore, we can find such a subgraph in polynomial time.
\end{theorem}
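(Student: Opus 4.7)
My plan is to combine a global pivot-based scheme handling \emph{long} surviving paths with a per-pair local scheme handling \emph{short} surviving paths, balanced via the threshold length $L=(n^{2}/|\P|)^{1/3}$.

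I would first sample a pivot set $R\subseteq V$ of size $O((n/L)\log n)$ so that every simple path of $G$ of length at least $L$ meets $R$ with high probability (standard hitting-set argument; the log factor can be shaved with a deterministic construction). For each $r\in R$, place into $H$ both an outgoing single-source $2$-{\ftrs} of $G$ rooted at $r$ and the corresponding incoming structure (i.e.\ the outgoing $2$-{\ftrs} on the reverse graph rooted at $r$). By Baswana--Choudhary--Roditty each of these has $O(n)$ edges, so the pivot contribution is $\widetilde O(n\cdot|R|)=\widetilde O(n^{2}/L)$. Separately, for each $(s,t)\in\P$ I would attach a subgraph $H_{s,t}$ of $O(L^{2})$ edges that preserves $s\to t$ reachability under any $|F|\le 2$ \emph{whenever some surviving $s$-$t$ path has length at most $L$}. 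The construction I have in mind iterates a $1$-FT short preserver: include one depth-$\le L$ shortest $s$-$t$ path, and for each candidate first-failure edge $e$ on some depth-$\le L$ $s$-$t$ path in $G$ include a short $1$-FT structure inside $G-\{e\}$ preserving $s\to t$ reachability under any further single failure along paths of length $\le L$. Roughly $O(L)$ candidate edges times $O(L)$ edges per $1$-FT slice give $|H_{s,t}|=O(L^{2})$; summed over $\P$ this yields $O(|\P|L^{2})$, and the chosen $L$ equalises the two contributions at $O(n^{4/3}|\P|^{1/3})$.

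For correctness, fix $(s,t)\in\P$ and $F\subseteq E$ with $|F|\le 2$, and suppose $s\to t$ in $G\setminus F$ via some path $P$. If $|P|\le L$, then by the design of $H_{s,t}$ some $s$-$t$ path survives in $H\setminus F$ (either the stored shortest path avoids $F$, or some $e\in F$ is a candidate edge and the attached $1$-FT slice in $G-\{e\}$ retains a short $s$-$t$ path after removing the second failure). If $|P|>L$, then $P$ passes through some $r\in R$; the incoming $2$-{\ftrs} at $r$ preserves the prefix $s\to r$ in $H\setminus F$, the outgoing $2$-{\ftrs} at $r$ preserves the suffix $r\to t$, and the concatenation is an $s\to t$ walk in $H\setminus F$.

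The main obstacle is making the per-pair short-range $2$-FT construction precise and tight. The naive idea of storing three edge-disjoint short $s$-$t$ paths per pair does \emph{not} suffice: when the global $s$-$t$ edge-connectivity is only $2$, some two-edge failure sets that are not min-cuts can kill all three stored short paths while leaving another short $s$-$t$ path in $G\setminus F$ that the pivot structure cannot recover (the surviving path is short and need not pass through $R$). A correct $H_{s,t}$ must therefore emulate the Baswana--Choudhary--Roditty farthest-min-cut procedure locally within the depth-$L$ sub-DAG rooted at $s$ (symmetrically at $t$), and one must argue that only $O(L)$ edges are added per candidate first failure; this local size analysis---rather than the pivot-balance calculation---is where the technical weight of the proof lies. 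Removing the polylogarithmic overhead in the hitting set is a secondary concern.
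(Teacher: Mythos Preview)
Your high-level scheme---pivots for long surviving paths, a per-pair ``short-range'' structure, and the balance $L=(n^{2}/|\P|)^{1/3}$---is viable, but it is \emph{not} the paper's proof; it is essentially the paper's generic $k$-{\ftrs} construction (Section~\ref{section:k-ftrs}) specialised to $k=2$, which only gives $\widetilde O(n^{4/3}|\P|^{1/3})$.

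There is a genuine gap in your case analysis. You split on the length of \emph{some} surviving path $P$, and in the case $|P|\le L$ you assert that the per-pair structure $H_{s,t}$ alone retains a path. That is not what $O(L^{2})$ edges can buy: once $f_{1}\in P_{0}$, the two maximally edge-disjoint paths $Q^{1},Q^{2}$ in $G-\{f_{1}\}$ need not be short even when a short path $P$ survives in $G-\{f_{1},f_{2}\}$. The fix is to case-split on $|Q^{i}|$ (where $Q^{i}$ is the one avoiding $f_{2}$) rather than on $|P|$: if $|Q^{i}|\le L$ it is stored; if $|Q^{i}|>L$ then $Q^{i}$ itself is a long surviving path and hits the pivot set. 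Your proposal to ``emulate the farthest-min-cut procedure locally within the depth-$L$ sub-DAG'' does not give $O(L)$ edges either, since that sub-DAG can have $\Omega(n)$ vertices. Also, your candidate edges must be the edges of one fixed shortest path $P_{0}$ (at most $L$ of them), not edges on ``some depth-$\le L$ path'', which could be $\Omega(L^{2})$ many.

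The paper's actual proof (Section~\ref{section:dual-ftrs}) takes a different route that removes the logarithm. It first builds, for each pair, the structured single-pair $2$-{\ftrs} $H_{(s,t)}$ of Section~\ref{section:single-pair}: two maximally disjoint outer strands $P^{1}_{(s,t)},P^{2}_{(s,t)}$ plus a controlled family of coupling paths, with the key \emph{nice-path} guarantee (Claim~\ref{clm:special-path}) that after any two failures some surviving path uses at most one coupling segment. A fractional hitting set is then applied only to the top and bottom $L$ vertices of each outer strand---just four sets per pair---so the slack technique (Lemma~\ref{lem:ftrs-slack}) yields a constant fraction of good pairs with a hitting set of size $O(n/L)$, no $\log$. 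Pivots at the hitting-set vertices give $H_{1}$; the top/bottom $L$ portions give $H_{2}$. The remaining case is a nice path whose single coupling segment has both endpoints in the top/bottom $L$ portions; these coupling paths are then sparsified by iteratively extracting high-frequency vertices (building single-source/sink $2$-{\ftrs} at each, forming $H_{3}$) until the residual union $H_{4}$ has bounded in-degree. This high-frequency extraction step is the idea your proposal is missing, and it is what makes the per-pair side of the ledger cost $O(n\sqrt{L|\P|})$ rather than $O(|\P|L^{2})$ while still achieving the log-free $O(n^{4/3}|\P|^{1/3})$ bound.
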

Clearly, for $\P$ of size $\omega(\sqrt{n})$, the above result breaks below the $O(n |\P|)$ bound. We complement our upper bound result the following lower bound.
\begin{theorem}[Lower Bound on 2-$\ftrs$]
\label{thm:lb-ftrs-pair}
For every $n,r~(r\leq n^2)$, there exists an $n$-vertex directed graph $G$ and a vertex-pair set $\P$ of size $r$ such that any $2$-$\ftrs(G,\P)$ requires $\Omega(n\sqrt{|\P|})$ edges.
\end{theorem}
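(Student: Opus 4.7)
The plan starts with a simple reduction: every $2$-$\ftrs(G,\P)$ is automatically a $1$-$\ftrs(G,\P)$ (restrict Definition~\ref{definition:FTRS} to failure sets $F$ of size one), so it suffices to exhibit, for every $n$ and $r \le n^2$, an $n$-vertex directed graph $G$ and pair set $\P$ of size $r$ such that every $1$-$\ftrs(G,\P)$ uses $\Omega(n\sqrt{r})$ edges. The 2-fault bound then follows immediately.

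The construction will be a layered DAG with $p=\lceil\sqrt r\rceil$ sources $S=\{s_1,\dots,s_p\}$, $p$ sinks $T=\{t_1,\dots,t_p\}$, and the remaining $n-2p$ vertices distributed across internal layers. Set $\P:=S\times T$, so $|\P|=p^2=\Theta(r)$. The design target is that each internal vertex has $\Theta(\sqrt r)$ ``branching'' out-edges, giving a total of $\Theta(n\sqrt r)$ edges, and that each such out-edge $e$ is \emph{1-fault-critical}: there exist $(s_i,t_j)\in\P$ and a single alibi edge $f(e)\neq e$ such that in $G\setminus\{f(e)\}$ every $s_i\to t_j$ path uses $e$, while $G\setminus\{e,f(e)\}$ has no $s_i\to t_j$ path.

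Given such a construction, every 1-fault-critical edge $e$ must belong to every $1$-$\ftrs(G,\P)$ $H$: otherwise $H\setminus\{f(e)\}\subseteq G\setminus\{e,f(e)\}$ contains no $s_i\to t_j$ path, contradicting the FTRS guarantee at the fault $\{f(e)\}$ (since $G\setminus\{f(e)\}$ does contain such a path). Summing over all $\Omega(n\sqrt r)$ critical edges yields the required edge lower bound.

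The main technical obstacle is designing the internal layers so that every branching out-edge is simultaneously 1-fault-critical. A natural template is to let each source $s_i$ ``claim'' one particular out-edge at every internal vertex $u$ (via $s_i$'s chosen forwarding neighbour), and route the remaining $p-1$ out-edges through alternative paths that can each be knocked out by a single failure; the alibi $f(e)$ for the edge $e=(u,v)$ claimed by $s_i$ is an edge on an alternative path that $s_i$ would otherwise use to bypass $u$. Choosing these claims consistently across layers---essentially an extremal argument over well-spread families of permutations of $[p]$---is where the principal combinatorial work lies, and adapting the known source-wise FT-reachability lower bound templates so that both the source index $i$ and the sink index $j$ play their roles simultaneously in the pair setting is the crux of the proof.
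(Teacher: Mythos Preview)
Your reduction in the first paragraph is logically valid---every $2$-$\ftrs$ is a $1$-$\ftrs$---but it commits you to proving an $\Omega(n\sqrt{|\P|})$ lower bound for \emph{single}-fault reachability preservers, and that statement is false. The paper itself cites (and Table~\ref{table:1} records) the upper bound of \cite{ChakrabortyC20}: every $n$-vertex graph and pair set $\P$ admits a $1$-$\ftrs$ of size $O\big(n+\min\{|\P|\sqrt n,\,n\sqrt{|\P|}\}\big)$. Whenever $|\P|=o(n)$ the $\min$ is $|\P|\sqrt n$, and then $n+|\P|\sqrt n = o(n\sqrt{|\P|})$; for instance at $|\P|=\Theta(\sqrt n)$ every graph has a $1$-$\ftrs$ with $O(n)$ edges, while your target lower bound is $\Omega(n^{5/4})$. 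So no layered DAG with ``$1$-fault-critical'' edges can realise the plan---the combinatorial obstacle you flag at the end is not merely technical, it is an outright impossibility. Indeed, the paper explicitly advertises Theorem~\ref{thm:lb-ftrs-pair} as establishing a \emph{separation} between single and dual failures in exactly this regime.

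The paper's argument therefore uses both failures in an essential way. The construction (Section~\ref{subsection:lower_bound_subgraph}) takes $r=\Theta(\sqrt{|\P|})$ left paths $P_1,\dots,P_r$ and $r$ right paths $Q_1,\dots,Q_r$, each of length $N=\Theta(n/r)$, and at every level $k$ inserts a complete bipartite graph $A_k\times B_k$ between the $k$th vertices of the left and right paths; the pair set is $\P=S\times S$ where $S$ collects the $2r$ start/end vertices. To force the cross edge $(p_{k,i},q_{k,j})$ one fails $f_1=(p_{k,i},p_{k+1,i})$ on $P_i$ \emph{and} $f_2=(q_{k-1,j},q_{k,j})$ on $Q_j$; only then is $P_i[s\!-\!p_{k,i}]\circ(p_{k,i},q_{k,j})\circ Q_j[q_{k,j}\!-\!t]$ the unique surviving $a_i$-$b_j$ path. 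With a single failure there are always alternate cross edges at other levels, which is precisely why your $1$-fault-critical scheme cannot succeed. Counting all $Nr^2=\Theta(n\sqrt{|\P|})$ cross edges gives the bound.
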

Again, we show a lower bound for source-wise 2-$\ftrs$ for a source set $S$ (i.e, when $\P=S \times V$), of $\Omega(n|S|)$. This matches the $O(n|S|)$ upper bound~\cite{BCR16} of 2-$\ftrs$ for any source set $S$. So for the source-wise preserver, we completely resolve the question regarding the size of an optimal preserver resilient to two or more failures. For general $k>2$ failures, we provide a lower bound of $\Omega(2^{k/2}n \sqrt{|\P|})$ on the size of pairwise {\kftrs} (Theorem~\ref{thm:lb-kftrs-source}).

Previously, seemingly a much weaker lower bound was known~\cite{ChakrabortyC20}, where the authors could only show a lower bound of $\Omega(n|\P|^{1/8})$, for $\P$ of size $n^{\epsilon}$ with $\epsilon \le 2/3$. Our result provides a sharp separation between single and dual fault-tolerant reachability preservers for 
any $\P$ satisfying $\omega(1)\leq|\P|\leq o(n)$. 

We also consider the question of beating $O(2^k n|\P|)$ bound for general $\kftrs$. We show that for a certain regime of the size of $\P$, it is indeed possible to attain $o(2^k n |\P|)$ bound.

\begin{theorem}[Upper Bound on $\kftrs$]
\label{thm:ub-k-ftrs}
For any $k \ge 1$, a directed graph $G=(V,E)$ with $n$ vertices and a set $\P \subseteq V \times V$ of vertex-pairs satisfying $|\P|=\omega(k n^{1-\frac{1}{k}} \log n)$, there exists a $\kftrs(G,\P)$ having only $o(2^k n|\P|)$ edges.
\end{theorem}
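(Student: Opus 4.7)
The plan is to lift the long-path/short-path decomposition used in Theorem~\ref{thm:dual-ftrs} (the $k=2$ case) to arbitrary $k$, by combining random vertex sampling with the single-source $O(2^k n)$-size $k$-\ftrs{} of Baswana, Choudhary, and Roditty. Fix a depth parameter $h$ to be optimised. First, I would sample a uniformly random set $R\subseteq V$ of size $|R|=\Theta((kn\log n)/h)$; a standard union bound over the pairs $(s,t)\in\P$ and the $n^{O(k)}$ fault sets $F\subseteq E$ with $|F|\le k$ shows that, with high probability, whenever there is an $s$-$t$ path of length at least $h$ in $G\setminus F$, some such path hits $R$. For every $r\in R$, include in $H$ both the out-$k$-\ftrs{} rooted at $r$ and (on the reverse graph) the in-$k$-\ftrs{} into $r$, contributing $\tO(2^k n^2/h)$ edges in total. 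This component preserves reachability whenever the shortest surviving $s$-$t$ path has length $\ge h$: such a path meets some $r\in R$, so $s\leadsto r$ survives in the in-structure at $r$ and $r\leadsto t$ in its out-structure, giving $s\leadsto r\leadsto t$ in $H\setminus F$.

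Next, for each pair $(s,t)\in\P$ I would add a per-pair ``short-path'' gadget $P_{s,t}$ guaranteeing that $s$-$t$ reachability is preserved under any $F$ of size $\le k$ admitting a surviving path of length $<h$. Using an iterative BCR-style peeling restricted to the depth-$h$ subgraph of $G$ between $s$ and $t$, I would bound $|P_{s,t}|$ by $\tO(2^k h^k)$, contributing $\tO(2^k |\P| h^k)$ edges overall. Choosing $h=(n^2/|\P|)^{1/(k+1)}$ balances the two contributions and yields a total size of $\tO\bigl(2^k n^{2k/(k+1)}|\P|^{1/(k+1)}\bigr)$ edges. A direct algebraic check shows that this quantity is $o(2^k n|\P|)$ exactly when $|\P|=\omega(n^{1-1/k})$, and the hypothesis $|\P|=\omega(k n^{1-1/k}\log n)$ absorbs the polylogarithmic slack coming from the sampling argument.

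The principal obstacle is the short-path gadget of the second step: establishing the $\tO(2^k h^k)$ size bound, rather than the trivial $O(2^k n)$ one gets by invoking BCR verbatim, requires a careful ``depth-accounting'' invariant tracking how each of the $k$ successive failure-induced detours can lengthen a surviving short path by at most $h$, thereby keeping the set of ever-relevant intermediate vertices polynomially controlled by $h$ rather than by $n$. A secondary, more routine, technical point is arranging the hitting-set property of the first step so that it applies uniformly across all $n^{O(k)}$ post-failure canonical shortest paths of interest; this is what forces the $k\log n$ factor in the statement. Once both of these are in place, the $o(2^k n|\P|)$ conclusion follows by plugging the optimal $h$ into the two-term bound.
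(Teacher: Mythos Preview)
Your high-level decomposition (random hitting set of size $\Theta(kn\log n/h)$ for long surviving paths, per-pair gadget for short ones, then balance) is exactly the paper's approach, and your long-path component is identical to theirs. The divergence, and the gap, is entirely in the short-path gadget.

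The paper does \emph{not} use any ``BCR-style peeling on the depth-$h$ subgraph.'' Instead, for each pair $p=(s,t)$ it directly enumerates a family $\A(p)$ of failure sets of size only $k-1$ (not $k$): start from $\emptyset$, and at each step branch over the $\le \ell$ edges of the current shortest $s$--$t$ path, keeping only those $F$ with $\dist(s,t,G\setminus F)\le \ell$. This gives $|\A(p)|\le \ell^{k-1}$. For each such $F$ it adds \emph{two} maximally edge-disjoint short $s$--$t$ paths $Q^1_{p,F},Q^2_{p,F}$, so that any $k$-th failing edge misses at least one of them. The per-pair cost is therefore $O(\ell^{k})$ edges with no $2^k$ factor, and the whole argument is an elementary shortest-path enumeration.

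Your plan, by contrast, is to run BCR on the ``depth-$h$ subgraph between $s$ and $t$'' and hope for an $\tO(2^k h^k)$ bound. This is where the proposal breaks: the depth-$h$ subgraph can have $\Theta(n)$ vertices, so BCR applied to it still gives $O(2^k n)$, not $O(2^k h^k)$. The ``depth-accounting invariant'' you sketch bounds the \emph{length} of any surviving path, but BCR's size depends on the vertex count of the ambient graph, not on path lengths, so controlling lengths does not by itself control the size of the resulting preserver. You correctly flag this step as the principal obstacle, but the mechanism you describe does not close it; the paper's enumeration-of-failure-sets trick is precisely the missing ingredient that replaces a vertex-count bound by a path-length bound.
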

We summarize our results on single and dual failures in Table~\ref{table:1}. Readers may note that there is a gap between the size of 2-$\ftro$ and 2-$\ftrs$ in our results. We pose closing this embarrassing gap as an interesting open question.

\begin{table}[!ht]
\begin{center}
\def\arraystretch{1.4}
\begin{tabular}{|e|w| t|}
\hline \rule{0pt}{12pt}
	 \textbf{Problem} & \textbf{Single Failure} & \textbf{Dual Failure} 
\\ \hline	 \rule{0pt}{32pt}
		\zell{Reachability\\ Oracle}
	& \zell{ $O(n + \min(|\P|\sqrt{n},$ $n\sqrt{|\P|}))$ \\ $\Omega(n^{2/3}|\P|^{1/2})$ (in bits) \\ \new}
	& \zell{ $O(n \sqrt{|\P|})$\\$\Omega(n\sqrt{|\P|})$ (in bits)\\ \new}	 
\\ \hline
	\zell{Reachability\\Preserver} 
	& \zell{$O(n+\min(|\P|\sqrt{n},~n\sqrt {|\P|}))$ \\ \cite{ChakrabortyC20}}
	& \zell{$O(n^{4/3} |\P|^{1/3})$ \\$\Omega(n\sqrt{|\P|})$ \\ \new}
\\ \hline		
\end{tabular}
\caption{A comparison of size of $\ftro$ and $\ftrs$ for single and dual failures.}
\label{table:1}
\end{center}
\end{table}


\subsection{Related Work}
A simple version of reachability preserver is when there is a single source vertex $s$, and we would like to preserve reachability from $s$ to all other vertices. Baswana \emph{et al.}~\cite{BCR16} provided an efficient construction of a $k$-fault-tolerant single-source reachability preserver of size $O(2^kn)$. Further, they showed that this upper bound on the size of a preserver is tight up to some constant factor. As an immediate corollary, we get a $\kftrs$ of size $O(2^kn|\P|)$ (by applying the algorithm of~\cite{BCR16} to find subgraph for each source vertex in pairs of $\P$ and then taking the union of all these subgraphs). We do not know whether this bound is tight for general $k$. However, for the standard static setting (with no faulty edges) much better bound is known. We know that even to preserve all the pairwise distances, not just reachability, there is a subgraph of size $O\big(n+\min(n^{2/3}|\P|, n\sqrt{|\P|})~\big)$~\cite{CE06, Bodwin17}. 
Later Abboud and Bodwin~\cite{AB18} showed that for any directed graph $G=(V,E)$ given a set $S$ of source vertices and a pair-set $\P\subseteq S \times V$ we can construct a pairwise reachability preserver of size 
$O\big(n+\min(\sqrt{n |\P| |S|}, (n|\P|)^{2/3})~\big)$. 
It is further shown that for any integer $d \ge 2$ there is an infinite family of $n$-node graphs and vertex-pair sets $\P$ for which any pairwise reachability preserver must be of size $\Omega\big(n^{2/(d+1)}|\P|^{(d-1)/d}\big)$. Note, for undirected graphs, storing spanning forests is sufficient to preserve pairwise reachability information, and thus we can always get a linear size reachability preserver for undirected graphs. We would like to emphasize that all our results in this paper hold for directed graphs. 

By~\cite{BCR16} we immediately get an oracle of size $O(2^k n)$ for $k$ edge (or vertex) failures with query time $O(2^k n)$. 
For just dual failures, we have an $O(n)$ size oracle with $O(1)$ query time due to~\cite{Choudhary16}. 

For undirected graphs, the optimal bound of $O(kn)$ edges for $k$-fault-tolerant connectivity preserver directly follows from $k$-edge (vertex) connectivity certificate constructions provided by Nagamochi and Ibaraki~\cite{NagamochiI:92}.
For connectivity oracle, P\v{a}tra\c{s}cu and Thorup~\cite{PatrascuT:07} presented a data structure of $O(m)$ size 
that can handle any $k$ edge failures in $O(k\log^2n\log\log n)$ time to subsequently answer 
connectivity queries between any two vertices in $O(\log\log n)$ time.
For small values of $k$,
Duan and Pettie~\cite{DuanP:10} improved the update time of \cite{PatrascuT:07} to $O(k^2\log\log n)$ by
presenting a data structure of $\widetilde O(m)$ size.
For handling vertex failures, Duan and Pettie~\cite{DuanP:17} provided a data structure
of $O(mk\log n)$ size with $O(k^3 \log^3n)$ update time and $O(k)$ query time. 

Other closely related problems that have been studied in the fault-tolerant model include computing distance preservers~\cite{DTCR08, PP13, Parter15}, depth-first-search tree~\cite{BCCK19}, spanners~\cite{CLPR09, DK11}, approximate distance preservers~\cite{BK13, PP14, BGLP16}, approximate distance oracles~\cite{DP09, CLPR10}, compact routing schemes~\cite{CLPR10, Chechik13}.


\subsection{Technical Overview}
\paragraph*{Construction of a 2-{\ftrs} for a single pair. }Our starting point is a simple construction of a linear (in the number of vertices) sized 2-{\ftrs} for a single pair. Recall, we already know of such a subgraph by~\cite{BCR16}. However, this new alternate construction will shed more light on the specific structure of a 2-{\ftrs}, which will play a pivotal role in our constructions of pairwise 2-{\ftrs} and 2-{\ftro}. Given a directed graph $G$ and a vertex-pair $(s,t)$, we construct a subgraph $H_{(s,t)}$ as follows: First, consider two "maximally disjoint" $s-t$ paths $P_{(s,t)}^1$ and $P_{(s,t)}^2$ (that meet only at the cut-edges and cut-vertices). We refer to these two paths as \emph{outer strands}. Next, we add several \emph{coupling paths} between these two outer strands, which are edge-disjoint with the outer strands. For each vertex $v$ on the outer strands, we check for the "earliest" vertex on the strand $P_{(s,t)}^1$ (and $P_{(s,t)}^2$), from which there is a path $Q_{(s,t),v}^1$ (and $Q_{(s,t),v}^2$) to $v$ that is edge-disjoint with both the outer strands. We refer to these path $Q_{(s,t),v}^i$ 
as coupling paths. Roughly speaking, two outer strands together with the coupling paths constitute the subgraph $H_{(s,t)}$ (see Figure~\ref{fig:single-pair-ftrs}). The actual construction is slightly different. Let us first briefly discuss why the above subgraph is a 2-{\ftrs} for the pair $(s,t)$. Then we will comment on the issue with the above simple construction and how we overcome that.

\begin{figure}[!ht]
\centering
\includegraphics[width=200pt,height=300pt,keepaspectratio]{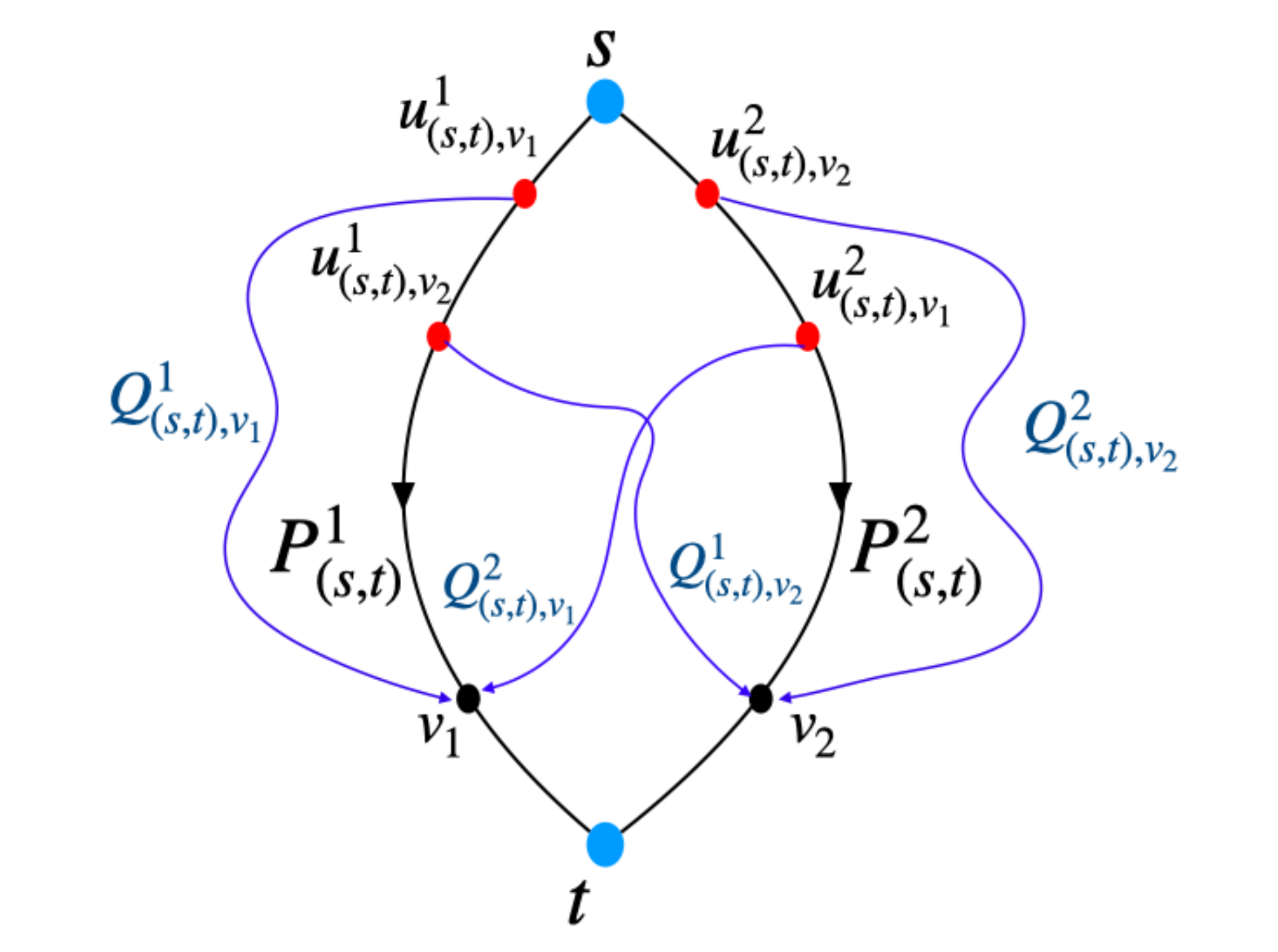}
\caption{$H_{(s,t)}$=2-{\ftrs} for a single pair $(s,t)$. Two black paths are the outer strands and the purple paths are the coupling paths between them.}
\label{fig:single-pair-ftrs}
\end{figure}

Consider any two failure edges $f_1$, $f_2$. W.l.o.g. assume, they do not form an $s-t$ cut-set; otherwise, after the failure there won't be any $s-t$ path. Thus if both $f_1,f_2$ lie on one of the two outer strands (i.e., either on $P_{(s,t)}^1$ or $P_{(s,t)}^2$), then since these two strands are maximally disjoint, one of them will survive after the failures. So, let $f_1$, $f_2$ lie on the strand $P_{(s,t)}^1$, $P_{(s,t)}^2$ respectively. Then consider the subpaths of $P_{(s,t)}^1$, $P_{(s,t)}^2$ above $f_1$, $f_2$, and the subpaths of $P_{(s,t)}^1$, $P_{(s,t)}^2$ below $f_1$, $f_2$. Since by assumption $f_1, f_2$ does not form an $s-t$ cut-set, there must be a coupling path (edge-disjoint with $P_{(s,t)}^1$, $P_{(s,t)}^2$) from one of the top subpaths to one of the bottom subpaths in $G$. Since $H_{(s,t)}$ consists of all the coupling paths, we get a surviving path in $H_{(s,t)}\setminus \{f_1,f_2\}$. This shows that $H_{(s,t)}$ is a 2-$\ftrs(G,(s,t))$. Moreover, one may observe from the above argument that, after failure of any two edges, one of the surviving paths in $H_{(s,t)}$ must be of the following form: It first follows one of the outer strand from $s$ to some vertex $u$, then takes a coupling path till some vertex $v$ on one of the outer strands, and finally follows the corresponding outer strand from $v$ to $t$. We refer to such a path as \emph{nice path}. The existence of such nice paths helps us in proving the correctness of our pairwise 2-{\ftro} and 2-{\ftrs} construction in the subsequent sections.

As we mentioned earlier, our actual construction is slightly different. The main issue with the above simple construction is that the constituted subgraph could be of size $\omega(n)$ after adding all the coupling paths. To mitigate this issue, instead of adding all the coupling paths, we only add the "essential" coupling paths. (See Section~\ref{section:single-pair} for details.) It allows us to achieve $O(n)$ size bound without affecting the correctness of 2-{\ftrs}. The guarantee of the existence of nice paths also remains unaffected. Of course, the correctness argument will become slightly more intricate.

Next, we use the above construction of a 2-{\ftrs} of a single pair to study the pairwise dual fault-tolerant graph structures (reachability oracle and preserver). Our input is a directed graph $G=(V,E)$ with $n$ nodes, and a node-pair set $\P \subseteq V \times V$.

\paragraph*{Pairwise 2-{\ftro}: Upper bound. }
One of the main contributions of this work is a construction of a dual fault-tolerant pairwise reachability oracle (2-{\ftro}) of size $O(n \sqrt{|\P|})$. For simplicity, below, we briefly describe a construction that provides a slightly weaker bound, in particular, $O(n \sqrt{|\P|} \log n)$. Later we will comment on how to remove this extra $\log n$ factor.

We start with the 2-{\ftrs} $H_{(s,t)}$, for each pair $(s,t) \in \P$. First, for all $(s,t)\in \P$, we consider the top and bottom $\Theta(n/\sqrt{|\P|})$ portion of the outer strands ($P_{(s,t)}^1$, $P_{(s,t)}^2$). We find a subset of vertices that intersects all these subpaths, i.e., acts as a "hitting set". Using a standard greedy algorithm we get a hitting set of size $O(\sqrt{|\P|}\log n)$. (Note, one may alternatively use random sampling to achieve the same bound for the hitting set with high probability.) Then we compute linear-sized single-source and single-destination 2-{\ftro} having query time $O(1)$, for each of the vertices in the hitting set using~\cite{Choudhary16}. For each $(s,t) \in \P$, in a table $T_{(s,t)}$, we store one vertex from each of the top and bottom $\Theta(n/\sqrt{|\P|})$ length subpaths of $P_{(s,t)}^1$, $P_{(s,t)}^2$, that is also included in the hitting set. That constitutes the first part of our data structure.

Observe, for any two failure edges $f_1$, $f_2$ and a pair $(s,t) \in \P$, we know that there must be a surviving \emph{nice} $s-t$ path in $H_{(s,t)}\setminus \{f_1,f_2\}$ (unless $f_1$, $f_2$ form an $s-t$ cut-set). Now, if that surviving path follows the bottom (or top) $\Theta(n/\sqrt{|\P|})$ length subpath of $P_{(s,t)}^j$ (for some $j \in \{1,2\}$), it must also pass through one of the stored vertices in $T_{(s,t)}$, say $v$ (due to the hitting set property). In this scenario, we can easily check for presence of an $s-v$ and $v-t$ path after the failure of $f_1$, $f_2$ in $G$ by trying with all the (at most four) stored vertices in $T_{(s,t)}$. For that, we only need $O(1)$ time during query.

Now, it only remains to consider the case when the surviving nice path in $H_{(s,t)}\setminus \{f_1,f_2\}$ passes though a coupling path $Q_{(s,t),v}^i$, for some vertex $v$ lying on the bottom $\Theta(n/\sqrt{|\P|})$ length subpath of a outer strand, that starts from some vertex $u$ lying on the top $\Theta(n/\sqrt{|\P|})$ length subpath of a outer strand. Informally, we only need to consider the scenario when both the outer strands are of length $O(n /\sqrt{|\P|})$. For simplicity, from now on we continue the description with this assumption. Intuitively, this enables us to look into a smaller graph (which need not be a subgraph of the original graph). We build an \emph{auxiliary graph} $A_{(s,t)}$ (see Figure~\ref{fig:auxiliary}) from $H_{(s,t)}$. We define the auxiliary graph entirely on the vertex set of the outer strands ($P_{(s,t)}^1$, $P_{(s,t)}^2$). First, we add both the outer strands (i.e. all their edges) to the auxiliary graph. Next, we add \emph{auxiliary edges} between the vertices if and only if there is a path between them in $H_{(s,t)}$ that is edge-disjoint with the outer strands. Then, We construct a 2-{\ftro} (having query time $O(1)$) for $A_{(s,t)}$ using~\cite{Choudhary16}. We do this for all pair $(s,t) \in \P$. Since each auxiliary graph is defined over a set of $O(n/\sqrt{|\P|})$ sized vertex set, we need total $O(n\sqrt{|\P|})$ space. This finishes the description of our data structure. So, the final data structure consists of 2-{\ftro}s of the vertices in the hitting set and 2-{\ftro} computed over $A_{(s,t)}$'s. Hence, the size of the whole data structure is $O(n\sqrt{|P|} \log n)$.

\begin{figure}[!ht]
\centering
\includegraphics[width=300pt,height=600pt,keepaspectratio]{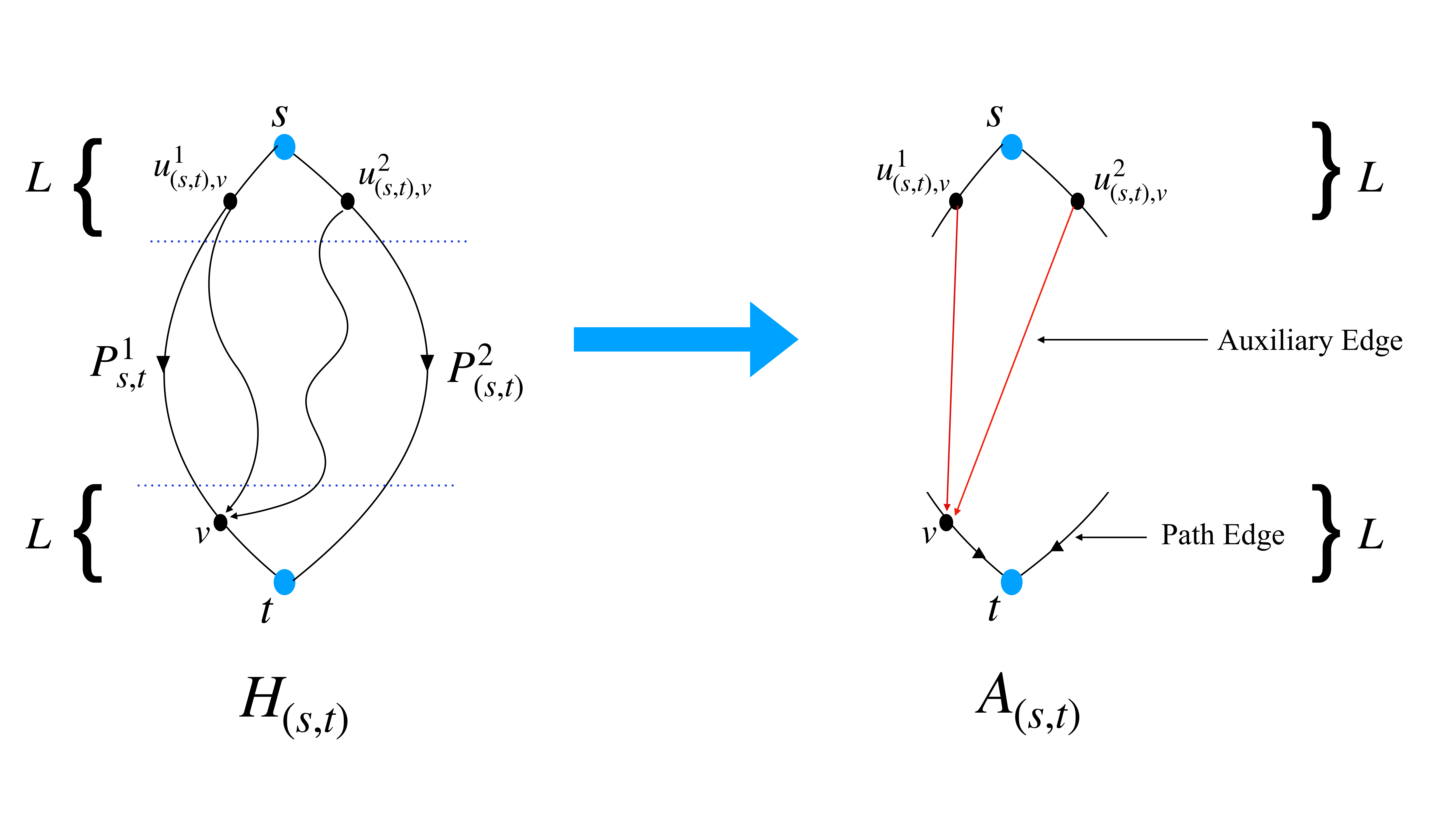}
\caption{Auxiliary graph $A_{(s,t)}$ constructed from $H_{(s,t)}$}
\label{fig:auxiliary}
\end{figure}

It is not hard to see that any $s-t$ path of $H_{(s,t)}$ also leads to a valid $s-t$ path in $A_{(s,t)}$ and vice versa. However, it is not immediate that it will be the case even after the failure of $f_1$, $f_2$. The difficulty arises because many paths in $H_{(s,t)}$ now map to one path in $A_{(s,t)}$. We show that it is indeed the case that there is an $s-t$ path in $H_{(s,t)}\setminus \{f_1,f_2\}$ if and only if there is an $s-t$ path in the auxiliary graph $A_{(s,t)}\setminus \{f_1,f_2\}$ (given $P_{(s,t)}^1$, $P_{(s,t)}^2$ are of length at most $O(n /\sqrt{P})$). The actual description is slightly more involved because we cannot make any assumption on the length of $P_{(s,t)}^1$, $P_{(s,t)}^2$. Essentially, we need only to consider the top and bottom $O(n \sqrt{|\P|})$ portion of the outer strands and define the auxiliary graph over them. Then we prove the above claim without any assumption on the length of the outer strands. The guarantee on the existence of a nice $s-t$ path in $H_{(s,t)}$ after at most two failures comes handy in this case. Recall, in a nice path, there is at most one coupling sub-path. If both the endpoints of this coupling sub-path lie on the top and bottom $O(n \sqrt{|\P|})$ portion of the outer strands, we get an auxiliary edge. As a result, we get an $s-t$ path in the auxiliary graph after the failures.  We refer the readers to Section~\ref{section:dual-ftro} for the details. Note, without the guarantee of a nice path, there could be many coupling sub-paths in a surviving $s-t$ path after failures. As a result, we may not get an auxiliary edge in our auxiliary graph.

During the query for a pair $(s,t)\in \P$, it suffices to either place $O(1)$ many reachability queries on the first part of our data structure or check the presence of an $s-t$ path in the auxiliary graph $A_{(s,t)}$ (see Lemma~\ref{lem:dual-oracle-correctness}). Hence, we get the overall query time to be only $O(1)$.

To remove the $O(\log n)$ factor from the size bound, we use the concept of \emph{sparsifier with slack}~\cite{chan2006spanners, konjevod2007compact, dinitz2007compact, bodwin2020note}. The extra $O(\log n)$ factor was coming due to the construction of the greedy hitting set. We show that if we prematurely terminate the same greedy hitting set algorithm, we get a \emph{fractional hitting set} that hits a constant fraction of the input sets (see Theorem~\ref{thm:frac-hitting-set}). As a consequence, we get a pairwise 2-{\ftro} with slack of size $O(n\sqrt{|\P|})$. Then we use an argument similar to that in~\cite{bodwin2020note} to get the same size bound for the (standard) pairwise 2-{\ftro}.

\paragraph*{Optimality of pairwise dual fault-tolerant oracle. }
In this paper, we show that for pairwise 2-{\ftro}, our $O(n \sqrt{|\P|})$ bound is essentially tight up to the word size (Theorem~\ref{thm:lb-oracle-pair}). Actually, we show that for any source-wise 2-{\ftro} for a designated source set $S$, the trivial $O(n|S|)$ upper bound followed from~\cite{Choudhary16}, is tight. To do that, we first provide a $\Omega(n|S|)$ lower bound for source-wise 2-{\ftrs} (leading to Theorem~\ref{thm:lb-ftrs-pair}) by constructing a hard instance. Then we extend that lower bound to source-wise 2-{\ftro} using communication complexity.

We construct the hard instance for 2-{\ftrs} as follows. Take any integer $N,r$. We consider two $r$-sized sets of vertex-disjoint (directed) paths each of length $N$. Then include a (directed) complete bipartite graph between the left set of $r$ vertices and the right set of $r$ vertices, for each level $k \in [N]$ (See Figure~\ref{fig:lower-bound}). The set $S$ contains the start vertices of the paths in the left set and the terminal vertices of the paths in the right set. So, $|S|=2r$. The number of vertices and the edges in this graph are $n=2Nr$ and $\Theta(Nr^2)$ respectively. It is not difficult to observe that each edge in this graph must be present in any 2-{\ftrs} for this graph with the pair set $\P=S\times S$ (See Figure~\ref{fig:lower-bound}). Consequently, we get a $\Omega(n|S|)=\Omega(n\sqrt{|\P|})$ lower bound.

\begin{figure}[!ht]
\centering
\includegraphics[scale=.45]{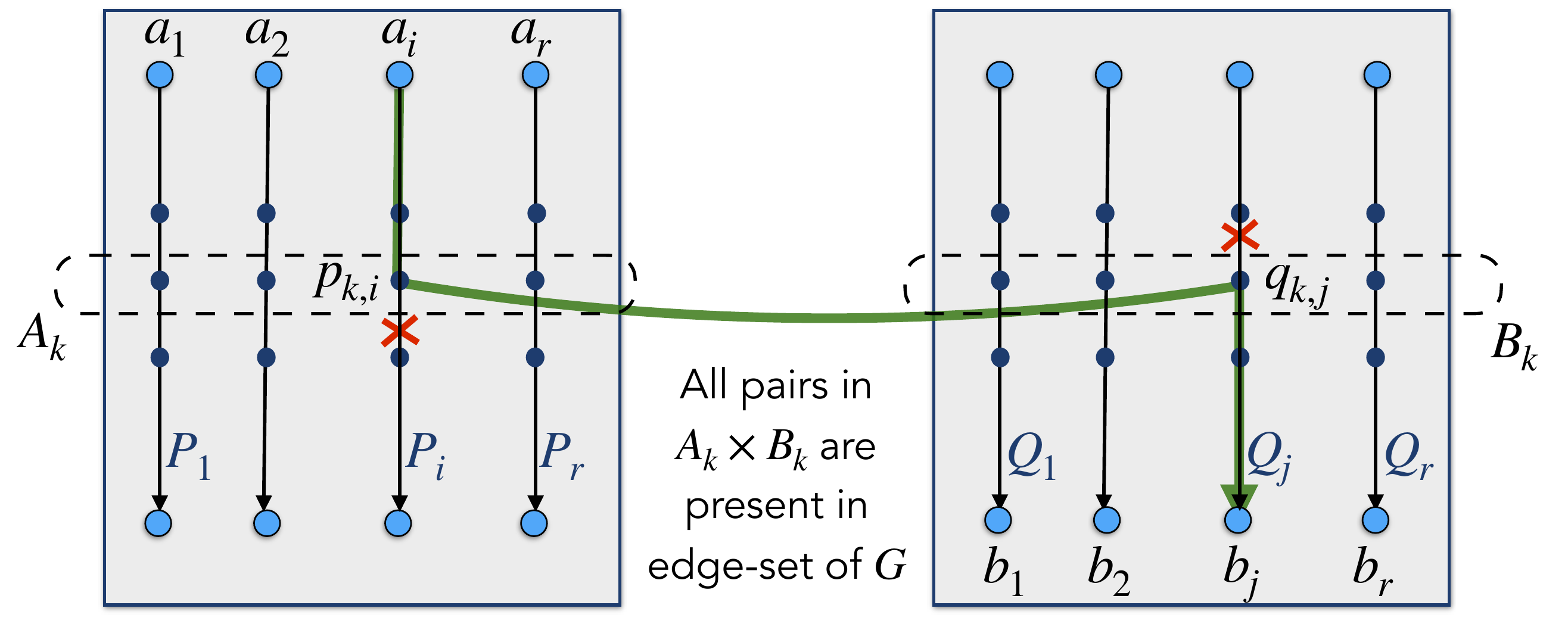}
\caption{Dual fault-tolerant reachability preserver.}
\label{fig:lower-bound}
\end{figure}

We then extend the above construction to 2-{\ftro}. Note, {\ftro} is a Boolean data structure that only decides reachability between a pair of vertices - does not report a path (if exists). If the data structure also reports a path (if exists), then using a standard information-theoretic argument, it is possible to show that the data structure must be of a size equal to that of a {\ftrs}. Such an argument does not work in general for {\ftro}. This is a general hurdle that we need to overcome to extend the lower of {\ftrs} to {\ftro}. Fortunately, the hard instance we constructed for {\ftrs} allows us to provide a reduction from (a close variant of) the \emph{Index} problem. Then we use the randomized one-way communication complexity lower bound for the Index problem~\cite{KNR99} to get a lower bound of $\Omega(n|S|)$ on the size (in bits) of any 2-{\ftro}. It is worth mentioning that our lower bound holds irrespective of the query time of 2-{\ftro}.

\paragraph*{Pairwise 1-{\ftro}: Upper and lower bound. }
The above lower bound only holds for dual failures. So it remains open whether for single failure we can attain better than $O(n\sqrt{|\P|})$ size bound with $o(n)$ (ideally, constant) query time. In this section, we provide a construction of 1-{\ftro} of size $O(n+|\P|\sqrt{n})$ and query time $O(1)$. We show the construction by first designing a reachability oracle of size $O(n+|\P|\sqrt{n})$ resilient to single vertex failure. Then we extend that vertex fault-tolerant data structure to an edge fault-tolerant data structure of the same size.

One of the key ingredients of our construction is a clever data structure that could answer all-pairs reachability query under single vertex failure as long as all three vertices involved in the query belong to a cut-set. Formally, for a pair $(s,t)$, let $C$ be any subset of $s-t$ cut-vertices in the graph $G$. We build an $O(|C|)$ size data structure that, for any three vertices $x,y,z\in C$, decides the reachability between $y,z$ upon failure of 
vertex $x$, in constant time. This data structure is inspired by the loop-nesting-forest~\cite{Tarjan76}. We consider the ordering $\sigma$ among the cut-vertices in $C$. Next, we build a predecessor forest and a successor forest with respect to this ordering $\sigma$. In the predecessor forest, the parent of any node $w$ is the immediate predecessor $u$ such that $u, w$ are strongly connected even after the removal of all the predecessors of $u$ from $G$. We symmetrically build the successor forest. Using constantly many Lowest Common Ancestor (LCA) and Level Ancestor (LA) queries on these forests, we can now decide the reachability between $y,z$ upon failure of the vertex $x$ (for any $x,y,z \in C$). By deploying any standard linear space LCA/LA data structure, we attain $O(|C|)$ space-bound and $O(1)$ query time. We refer the readers to Section~\ref{sec:all-pair-cut} for the details.

Next, we use the above data structure to construct a pairwise fault-tolerant reachability oracle for single vertex failure. Observe, to decide the reachability between a pair upon failure, we need to check whether the failure vertex is a cut-vertex for that pair or not. At the high level, we need to keep the information about the cut-vertex-set for each pair. However, we cannot store them explicitly using small space. We first consider the cut-vertex-set for all the pairs in $\P$. Then we identify a \emph{core pair-set} as follows: Take any pair in $\P$ with at least $\sqrt{n}$ cut-vertices, and add it into the core pair-set. Then iteratively add pairs from $\P$ with at least $\sqrt{n}$ new/uncovered (not part of the cut-vertex-set of any previously added pair) cut-vertices. For each pair added in the core pair-set, it \emph{owns} the corresponding cut-vertices that were uncovered till then. Once we get the core pair-set, we are left with pairs each having at most $\sqrt{n}$ uncovered cut-vertices. Further note, the size of the core pair-set is at most $O(\sqrt{n})$. Consider the union of all the cut-vertices of the core pair-set, and build the previously mentioned data structure on it. For the remaining pairs, we store the uncovered cut-vertices associated with them using any static dictionary data structure. Note, each pair might have cut-vertices that are owned by some core pair. To keep track of them, for each pair and each core pair, we store the first and the last cut-vertex shared by them. These all constitute the final data structure. It is not hard to see that the size is $O(n + |\P|\sqrt{n})$. We show that it suffices to either perform a query on the previously mentioned data structure or check whether the failure vertex is a cut-vertex in the dictionary structure during the query. (See Section~\ref{section:FTRO_vertex}.)

Now, we extend the vertex fault-tolerant oracle to edge fault-tolerant oracle. We consider the set $\C$ of all the cut-edges for all the pairs in $\P$. Then we find the subset $\C_0$ that only contains the edges whose endpoints are strongly connected in $G$, but not after the failure of that edge. Observe, all the edges in $\C$ must be part of any strong-connectivity certificate, and all the edges in $\C_0$ must be part of any reachability preserver without any failure. Thus $|\C| = O(n)$ and $|\C_0|=O(n+\min\{|\P|\sqrt{n},(n |\P|)^{2/3}\})$ by~\cite{AB18}. Next, we construct a new graph with $n+|\C_0|$ vertices such that checking the reachability upon an edge failure in the original graph reduces to checking the reachability (perhaps between a different pair) upon vertex failure in the new graph. As a consequence, we get an $O(n + |\P|\sqrt{n})$ sized 1-{\ftro} with constant query time.

We complement our upper bound result with a lower bound of $\Omega(n^{2/{d+1}}|\P|^{(d-1)/d})$ size (in bits) for any $d \ge 2$. We prove our lower bound by establishing a connection between the optimal size of a pairwise {\kftro} and that of a pairwise $(k-1)$-{\ftrs}. In particular, we show that the optimal size of {\ftro} for any $n$-node graph and $p$-sized pair-set is at least that of $(k-1)$-{\ftrs} for any $n/2$-node graph and $p$-sized pair-set (Theorem~\ref{thm:lb-ftrs-ftro}). Such a connection between {\ftro} and {\ftrs} is entirely new. To prove this relation, we use information-theoretic encoding-decoding argument. (See Section~\ref{sec:lb-ftro-ftrs}.) Then the lower bound of 1-$\ftro$ follows from the lower bound of reachability preserver without any failure by~\cite{AB18}.

\paragraph*{Pairwise 2-{\ftrs}: Upper bound. }
We have already shown a lower bound of $\Omega(n\sqrt{|\P|})$ on the size of a pairwise 2-{\ftrs}. However, so far we only known of $O(n |\P|)$ size 2-{\ftrs} for any $n$-node graph $G$ and pair set $\P$. In this work, we provide a deterministic polynomial time construction of a pairwise 2-{\ftrs} of size $O(n^{4/3} |\P|^{1/3})$.  For simplicity, below, we briefly describe a construction that provides a slightly weaker bound, in particular, $O(n \sqrt{|\P|} \log n)$. In the actual construction, we get rid of the $\log n$ factor by constructing a preserver with slack and then using the result of~\cite{bodwin2020note} - an idea similar to that used in pairwise 2-{\ftro} described earlier. Before proceeding further, let us emphasize that the main underlying idea behind our construction could be generalized to {\kftrs}, for any $k \ge 1$, albeit with the help of randomization. We describe that generic construction later.

To get a sparse 2-{\ftrs} for a node-pair set $\P$, we perform two-step sparsification. First, we apply our alternate construction of 2-{\ftrs} for each of the pairs of $\P$. Then take a union of all of these subgraphs to get a $O(n|\P|)$ size intermediate subgraph $H_{inter}$, which is clearly a 2-{\ftrs} for $\P$. Next, we further sparsify this intermediate subgraph. Similar to the technique used in oracle construction, we consider the top and bottom $\Theta(n^{2/3}|P|^{-1/3})$ portions of the outer strands ($P_{(s,t)}^1$, $P_{(s,t)}^2$) of $H_{s,t}$, for each $(s,t)\in \P$. Next, we construct a greedy hitting set containing $O(n^{1/3}|\P|^{1/3} \log n)$ vertices that intersects all these subpaths. We compute linear-sized single-source and single-destination 2-{\ftrs} for each of these vertices in the hitting set using~\cite{BCR16}. Let $H_1$ be the union of all these single-source and single-destination 2-{\ftrs}. So, $H_1$ is of size $O(n^{4/3} |\P|^{1/3} \log n)$.

Consider any two failure edges $f_1,f_2$ and a pair $(s,t)\in \P$. If there is a surviving path in $G\setminus \{f_1,f_2\}$, we know that there is a \emph{nice} $s-t$ path in $H_{(s,t)}\setminus \{f_1,f_2\}$. Using an argument similar to that in the oracle construction, if that nice path follows the top or bottom $\Theta(n^{2/3}|P|^{-1/3})$ portion of a outer strand, then there is also an $s-t$ path in $H_1 \setminus \{f_1,f_2\}$. So now on, it suffices to look into the case when the surviving nice path in $H_{(s,t)}\setminus \{f_1,f_2\}$ passes though a coupling path $Q_{(s,t),v}^i$, for some vertex $v$ lying on the bottom $\Theta(n^{2/3}|P|^{-1/3})$ length subpath of a outer strand, that starts from some vertex $u$ lying on the top $\Theta(n^{2/3}|P|^{-1/3})$ length subpath of a outer strand.

If we could include the top and bottom $\Theta(n^{2/3}|P|^{-1/3})$ portion of the outer strands, and all the "essential" coupling path $Q_{(s,t),v}^i$'s with endpoints lying on the top and bottom $\Theta(n^{2/3}|P|^{-1/3})$ portions of the outer strands, we will be done. We indeed consider a union of the top and bottom $\Theta(n^{2/3}|P|^{-1/3})$ portion of the outer strands  (for all $(s,t) \in \P$), and let us denote that by $H_2$. So, $H_2$ is of size $O(n^{2/3}|P|^{2/3})$. Unfortunately, we do not have a guarantee on the length of the coupling paths. Thus, if we include all the required coupling paths, we cannot argue about the sparsity of the final subgraph. (This portion of the construction differs significantly from that of our oracle construction.) We consider the subgraph obtained by taking a union of all the "essential" coupling paths with endpoints lying on the top and bottom $\Theta(n^{2/3}|P|^{-1/3})$ portions of the outer strands. (Let us denote this union by $B$.) Then we sparsify this subgraph further. For that purpose, we first isolate all the "high frequency" vertices (iteratively) and remove all the coupling paths containing them. Since total number of coupling paths in this subgraph is only $\Theta(n^{2/3}|P|^{2/3})$, we end up with "a few" ($O(n^{1/3}|P|^{1/3})$) high frequency vertices. Now, observe, in the remaining subgraph (denoted as $H_4$), degree of each vertex is "small" (at most $O(n^{1/3}|P|^{1/3})$). Next, for each of the high-frequency vertices, compute linear-sized single-source and single-destination 2-{\ftrs}, and take a union of them to form a subgraph $H_3$. The union of $H_1,H_2,H_3$ and $H_4$ constitute the final subgraph.

It is not difficult to see that a surviving nice $s-t$ path in $H_{(s,t)}\setminus \{f_1,f_2\}$ either passes through one of the high frequency vertices, in which case we get an $s-t$ path in $H_3$; or is included in $H_2 \cup H_4$. Thus  So, the union of all $H_1, H_2, H_3$ and $H_4$ will be a 2-{\ftrs} for $\P$. It is worth mentioning that the correctness proof works only because of a guarantee of the existence of the nice path. Note, a nice path follows at most one coupling path as a subpath. Thus either that nice path follows the top or bottom  $\Theta(n^{2/3}|P|^{-1/3})$ portion of the outer strands, or the coupling sub-path is part of $B$, which we further sparsify to get $H_3$ and $H_4$. Without the guarantee of a nice path, there could be many coupling sub-paths in a surviving $s-t$ path after failures. Endpoints of these coupling sub-paths may not lie on the top or bottom $\Theta(n^{2/3}|P|^{-1/3})$ portion of the outer strands. As a result, we miss them in $B$. As a consequence, $H_3 \cup H_4$ could not capture those coupling sub-paths.

Observe, each of the $H_i$'s is of size at most $O(n^{4/3}|\P|^{1/3} \log n)$ (the $\log n$ factor is only there for $H_1$). So the total size is also $O(n^{4/3}|P|^{1/3} \log n)$.

\paragraph*{Pairwise {\kftrs}: Beating the $O(2^k n|\P|)$ bound. }
Currently, for any $k\ge 1$, we only know there always exists a pairwise {\kftrs} of size $O(2^k n |\P|)$ for any pair set $\P$~\cite{BCR16}. Previously, we beat the above bound for $k=2$. We cannot directly extend that bound for {\kftrs}. The main obstacle is that for pairwise 2-{\ftrs}, we have used a particular structure of a 2-{\ftrs} for a single pair (i.e., the subgraph $H_{(s,t)}$). It is pretty difficult get such kind of structure for {\kftrs} for any $k >2$. However, the main underlying idea behind our pairwise 2-{\ftrs} construction is to handle the "long" and "short" surviving paths separately. Informally, in the case of 2-{\ftrs}, the hitting set helps us to look into only the short paths in $H_{(s,t)}$. We do a similar thing for {\kftrs} using randomization.

Take a parameter $\ell$, whose value we will fix later. We sample a uniformly random subset $W$ of vertices of size $\tilde{O}(kn/\ell)$. Then we build a single-source and single-destination {\kftrs} from those vertices. Let us denote the union of all these {\kftrs}s as $H_1$. The size of $H_1$ is $O(k2^k n^2/\ell)$ by~\cite{BCR16}. Suppose, for any failure edge-set $F$ of size at most $k$, the length of any shortest $s-t$ surviving path in $G\setminus F$ is at least $\ell$. Consider an (arbitrarily chosen) shortest surviving $s-t$ path in $G \setminus F$ (which is of length at least $\ell$). Then it is not hard to argue that $W$ intersects that path with high probability (Lemma~\ref{lemma:hitting_property}). Thus we get an $s-t$ path in $H_1\setminus F$. This part is similar to what we get in 2-{\ftrs} using greedy hitting set.

So now on, we only need to handle the case when a shortest surviving path is of length at least $\ell$. To do this, roughly speaking, we enumerate over all possible failure-set of size at most $k$ and add a shortest surviving path in the subgraph. Trivially, we get a bound of only $O(n^k)$ on the number of possible failure edge-sets of size at most $k$. However, observe, we only need to handle the case when a surviving shortest path is of length at most $\ell$. So before the failure also the length of a shortest path was at most $\ell$. The initial shortest path (before any failure) will get destroyed only if one or more failure edges lie on that shortest path. This observation reduces the number of possible failure-set to $O(\ell^k)$. So, for each pair, we add $O(\ell^k)$ paths, each of length at most $\ell$. For all the pairs, the total number of added edges is at most $O(\ell^{k+1}|\P|)$. We can further improve this bound to $O(\ell^k |\P|)$ by enumerating the failure-set of size up to $k-1$ and then adding two maximally edge-disjoint shortest paths. (See Section~\ref{section:k-ftrs} for the details.) At the end, we get a subgraph of size $O(k2^k n^2/\ell + \ell^k |\P|)$. By optimizing the parameter $\ell$, we get a size bound of $\widetilde O(k~ 2^k ~n^{\frac{2k}{k+1}}~ |\P|^\frac{1}{k+1})$. Note, this size bound beats the $O(2^k n |\P|)$ bound whenever $|\P|=\omega(k n^\frac{k-1}{k} \log n)$.

\subsection{Conclusion}
\label{sec:conclusion}
In this paper, we study compact oracle and sparse preservers for the problem of pairwise reachability under failures. For dual failures, we provide a construction of $O(n\sqrt{|\P|})$ sized reachability oracle
that has constant query time, along with a matching (up to the word size) lower bound. It (almost) settles down the question on the optimal sized dual fault-tolerant pairwise reachability oracle. For single failure, we achieve a bound of $O(n + \min\{|\P|\sqrt{n},n\sqrt{|\P|}\})$ on the size of oracle with constant query time. We complement our upper bound with a $\Omega(n^{2/3}|\P|^{1/2})$ size (in bits) lower bound, refuting the possibility of getting a linear-sized oracle for a single failure. We would like to pose the problem of closing the current gap between the upper and lower bound for a single fault-tolerant oracle as an open problem.

In the case of reachability preserver, we show an upper bound of $O(n^{4/3} |\P|^{1/3})$ edges for any $n$-node graph and a vertex-pair set $\P$, for dual failures setting. This improves the naive bound of $O(n|\P|)$ preserver (obtained from the result known for single-source setting). In addition, we obtain a lower bound of $\Omega(n \sqrt{|\P|})$ on the size of our reachability preserver under dual failures. Prior to our work it was known that for single failure, we can have preserver with $O(n+\min\{|\P|\sqrt{n},~n\sqrt {|\P|}\})$ edges. Thus our lower bound provides a striking difference between the single and dual fault-tolerant setting as it gives a separation of $n^{1/4}$ factor for $|\P|=\Theta(\sqrt n)$. One immediate open question after our work is whether a sparser pairwise reachability preserver exists for dual failures.

Next, we go beyond the dual failures and consider the question of getting sparse pairwise reachability preserver resilient up to $k$ failures for any $k \ge 1$. Can there always exist a pairwise preserver of size $o(2^k n |\P|)$ for any $k$ failures? We answer this question affirmatively by providing a randomized polynomial-time construction of a $o(2^k n|\P|)$ size preserver. We leave the question of finding an optimal $k$ fault-tolerant pairwise preserver as an important open problem.

\section{Preliminaries and Tools}
\label{section:prelims}
For any integer $n$, we use $[n]$ to denote the set $\{1,2,\cdots,n\}$. Given a directed graph $G=(V,E)$ on $n=|V|$ vertices and $m=|E|$ edges, 
the following notations will be used throughout the paper.\\[-2mm]
\begin{itemize}
\item $V(H)$~:~ The set of vertices present in a graph $H$.
\item $E(H)$~:~ The set of edges present in a graph $H$.
\item $H\setminus F$~:~ For a set of edges $F$, the graph obtained by deleting the edges in $F$ from graph $H$.
\item $s-t$ path~:~ A directed path from a vertex $s$ to another vertex $t$.
\item $P \circ Q$~:~ The concatenation of two paths $P$ and $Q$, i.e., a path that first follows $P$ and then $Q$.
\item $P[L]$~:~ The subpath of the path $P$ containing the first $L$ vertices of $P$.
\item $P[-L]$~:~ The subpath of the path $P$ containing the last $L$ vertices of $P$.
\item $P[u-v]$~:~ The $u-v$ subpath of the path $P$.
\end{itemize}


Our algorithm for computing {\em pairwise-reachability} preservers (and oracles) in a fault tolerant environment employs the concept of a {\em single-source} {\ftrs} which is a sparse subgraph that preserves reachability from a designated source vertex even after the failure of at most $k$ edges in $G$. Baswana \emph{et al.}~\cite{BCR16} provide a construction of sparse $\kftrs$ for any general $k\ge 1$ when there is a designated source vertex.
\begin{theorem}[\cite{BCR16}]
\label{theorem:ftrs}
For any directed graph $G=(V,E)$, a designated source vertex $s\in V$, and an integer $k\geq 1$, there exists a (sparse) subgraph $H$ of $G$ which is a $\kftrs(G,\{s\}\times V)$ and contains at most $2^k n$ edges. Moreover, such a subgraph is computable in $O(2^kmn)$ time, where $n$ and $m$ are respectively the number of vertices and edges in graph $G$.
\end{theorem}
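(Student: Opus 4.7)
The plan is to prove Theorem~\ref{theorem:ftrs} by induction on $k$, exploiting the notion of \emph{farthest minimum cut} (FMC) from the source $s$. For any vertex $v$ reachable from $s$ in $G$, define $\text{FMC}(v)$ to be the unique minimum $s$-$v$ edge-cut whose source-side is inclusion-maximal; by Menger's theorem, $|\text{FMC}(v)| = \lambda(s,v)$, the maximum number of edge-disjoint $s$-$v$ paths in $G$. The base case $k=0$ is immediate: any out-arborescence rooted at $s$, restricted to the vertices reachable from $s$, has at most $n-1 \le 2^0\cdot n$ edges, preserves reachability when no edges fail, and is computable in $O(m+n)$ time via BFS or DFS.

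For the inductive step, I would assume the claim for $k-1$ and build the $k$-$\ftrs$ $H$ as the union of two subgraphs, each produced by an inductive $(k{-}1)$-$\ftrs$ call. The first component $H'$ is a $(k{-}1)$-$\ftrs$ of $G$ itself, satisfying $|E(H')| \le 2^{k-1} n$. The second component $H''$ is a $(k{-}1)$-$\ftrs$ of an auxiliary graph $G^*$ obtained from $G$ by modifying the neighborhoods of farthest min-cuts of size at most $k$ (via appropriate contractions and added pseudo-arcs), chosen so that the edges of any $(k{-}1)$-$\ftrs$ on $G^*$ correspond bijectively to the ``extra'' edges of $G$ needed to survive the $k$-th failure once the first $k-1$ failures have been absorbed by $H'$. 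Induction gives $|E(H'')| \le 2^{k-1} n$, whence $|E(H)| \le 2^k n$. Correctness follows by case analysis on the failure set $F$ with $|F|\le k$: if $|F \cap E(H')| \le k-1$ then $H'$ alone preserves $s$-reachability by the inductive hypothesis; otherwise some edge of $F$ lies on the source-side of the relevant $\text{FMC}$ for the target vertex, and the inductive guarantee for $H''$ on $G^*$ supplies a surviving path. The running-time bound $O(2^k m n)$ then follows because each recursion level performs $O(n)$ max-flow computations of value at most $k$, each costing $O(km)$ time via $k$ successive augmenting-path rounds, and the recursion unfolds as a binary tree of depth $k$.

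The main obstacle, and the technical heart of the Baswana--Choudhary--Roditty construction, is the design of the auxiliary graph $G^*$. A naive scheme that simply retains $\text{FMC}(v)$ for every $v$ yields an $\Omega(nk)$ bound rather than the required $2^{k-1} n$, so one has to recast ``preserving reachability against one additional failure'' as another instance of $(k{-}1)$-fault-tolerant reachability on a carefully contracted/augmented graph, so that the inductive hypothesis at level $k-1$ itself governs the count of newly added edges. Getting this reduction right demands a delicate charging argument: each edge of $G^*$ must correspond to at most one edge of $G$, and the $s$-reachability structure of $G$ under $k$ failures must be faithfully mirrored by the $s$-reachability structure of $G^*$ under $k-1$ failures.
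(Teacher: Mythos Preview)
The paper does not prove Theorem~\ref{theorem:ftrs}; it is quoted from Baswana, Choudhary and Roditty~\cite{BCR16} and used purely as a black box throughout (see the sentence immediately preceding the theorem statement in Section~\ref{section:prelims}). There is therefore no proof in the present paper for your proposal to be compared against.

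As for your sketch on its own merits: you correctly name the central ingredients (farthest min-cuts, induction on $k$, a doubling that produces the $2^k$ factor), but you explicitly concede that the construction of the auxiliary graph $G^*$---which would carry the entire weight of the inductive step---is left unspecified. Without a concrete definition of $G^*$ and a verification that a $(k{-}1)$-$\ftrs$ of $G^*$ supplies exactly the edges needed beyond $H'$, what you have written is the shape of an argument rather than an argument. If you want to actually establish the theorem, consult~\cite{BCR16} directly: their construction does not literally take the union of two recursive $(k{-}1)$-$\ftrs$ calls, but instead bounds the in-degree of every vertex in the final subgraph by $2^k$, via a max-flow computation that isolates, for each vertex $t$, the at most $2^k$ incoming edges through which every surviving $s$--$t$ path can be routed after up to $k$ failures. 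Your union-of-two picture is a reasonable heuristic for why $2^k$ appears, but turning it into a proof would require filling in precisely the gap you flag.
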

For our constructions of pairwise 2-$\ftrs$ (and 2-$\ftro$), we will start with an alternate construction of single-source 2-$\ftrs$ (see Section~\ref{section:single-pair}).
 

Note, in the $\ftro$ definition we restrict ourselves to a data structure with constant query time. (The term \emph{oracle} came from its ability to answer a query in constant time.) Unlike single-source $\kftrs$, there is no non-trivial construction of single-source $\kftro$ for $k\ge 3$. For $k = 2$, the following result by Choudhary~\cite{Choudhary16} provides an $O(n)$ size oracle.

\begin{theorem}[\cite{Choudhary16}]
\label{theorem:ftro}
There is a polynomial time algorithm that given any directed graph $G=(V,E)$, a designated source vertex $s\in V$, constructs a 2-$\ftro(G,\{s\}\times V)$ of size $O(n)$.
\end{theorem}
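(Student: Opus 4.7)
My plan is to build on the classical dominator-tree framework that resolves the single-failure version of this problem, and extend it to capture all pairs of edges whose simultaneous failure can disconnect $s$ from a target.

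First, I would compute the dominator tree $T$ of $G$ rooted at $s$ (a vertex $u$ is an ancestor of $t$ in $T$ iff every $s$-$t$ path passes through $u$) and identify the set $B$ of \emph{strong bridges}, i.e., edges whose removal disconnects some reachable vertex from $s$. Classical results give $|B| = O(n)$, and the strong bridges on any $s$-$t$ path in $T$ arrange themselves in a total order along that root-to-leaf chain. For a single-failure query $(s, t, f)$, it then suffices to test whether $f \in B$ and lies on the $s$-$t$ chain, which is an $O(1)$ check using precomputed LCA and level-ancestor structures over $T$.

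The heart of the construction handles pairs $\{f_1, f_2\}$ whose joint failure disconnects $s$ from $t$, but where neither edge alone does so. For each strong bridge $b$, I would consider the graph $G_b := G \setminus \{b\}$ and record any edge $b'$ that becomes a strong bridge from $s$ in $G_b$ as a \emph{secondary bridge relative to $b$}. A charging argument, exploiting that along each root-to-leaf chain of $T$ each vertex contributes only $O(1)$ newly induced secondary bridges per primary bridge, should yield that the total size of the secondary structure summed over all $b \in B$ is $O(n)$. I would organize this information in a second forest $T^{(2)}$ with auxiliary pointers into $T$, again augmented with $O(1)$-time LCA/LA data structures.

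The query algorithm for $(s, t, f_1, f_2)$ then runs in constant time: first test via $T$ whether $f_1$ or $f_2$ alone disconnects $t$ from $s$; otherwise, verify whether $\{f_1, f_2\}$ forms a primary-secondary pair on the root-to-$t$ chain in the combined forest. Both forests are of linear size and the LCA/LA augmentations add only linear overhead, yielding total space $O(n)$. The main obstacle I anticipate is proving that the secondary bridges aggregated over all primary bridges total $O(n)$ rather than $\Omega(n^2)$ in the worst case; this seems to require careful exploitation of the chain structure of strong bridges and a pigeonhole-style charging, together with handling the subtle situation where $f_1$ and $f_2$ are not ancestor-descendant related in $T$, which may force simultaneous work with the dominator tree of the reverse graph as well.
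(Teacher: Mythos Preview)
The paper does not prove this theorem at all: it is stated in the Preliminaries section as a known result imported from \cite{Choudhary16} and used as a black box throughout (in the construction of the pairwise dual-failure oracle and elsewhere). There is therefore no ``paper's own proof'' to compare against.

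As for the sketch itself, the overall shape---dominator tree for the single-failure case, then a second layer of structure capturing edges that become critical only after one bridge is removed---is indeed the flavor of the construction in \cite{Choudhary16}. However, your key step is precisely the one you flag as an obstacle, and it is not a step you have actually carried out: the naive count of secondary bridges over all primary bridges is $\Theta(n^2)$, and nothing in your outline explains the mechanism that collapses this to $O(n)$. The actual argument in \cite{Choudhary16} does not simply aggregate secondary bridges; it builds a more intricate system of auxiliary trees (involving both the graph and its reverse, and a careful case analysis of how the two failing edges interact with the dominator structure) and shows that the \emph{information needed to answer queries}, rather than the raw set of all secondary bridges, fits in $O(n)$ space. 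Your proposal names the right difficulty but does not resolve it, so as written it is a plan with the central lemma missing rather than a proof.
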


Our constructions will require the knowledge of the vertices reachable from a vertex $s$ as well as the  vertices that can reach $s$. So we will use $\ftrs$ (and $\ftro$) defined with respect to a source vertex ($\{s\}\times V$ case), as well as $\ftrs$ (and $\ftro$) defined with respect to a destination vertex ($V\times \{s\}$ case). 

In this paper, we consider fault-tolerant structures with respect to edge failures only.
Vertex failures can be handled by simply splitting a vertex $v$ into an edge $(v_{in},v_{out})$,
where the incoming and outgoing edges of $v$ are respectively directed into $v_{in}$ and directed out
of $v_{out}$.

\paragraph*{Static Dictionary Data Structure. }One of the classic data structure problems is the \emph{dictionary problem}. In this problem, we are given a set $S$ of items, where each item is a (key,value) pair. Keys are from an arbitrary universe $U$. The task is to build a data structure that can efficiently answer the following query.

 Search($k$): For a given $k$, if the key $k$ is present in $S$, return the value associated with the key $k$ in $S$; otherwise return NULL.

The celebrated result by Fredman, Koml{\'o}s and Szemer{\'e}di~\cite{fredman1984storing} provide a data structure of size $O(|S|)$ with worst-case query (search) time $O(1)$. Since then various randomized and deterministic schemes have been developed for this problem. Currently, the best deterministic data structure attains $O(|S|)$ space bound and $O(1)$ query time. One of the key ingredients of our constructions is this data structure of the (static) dictionary problem. In particular, we use the following theorem.
\begin{theorem}[\cite{ruvzic2008uniform}]
\label{thm:dictionary}
Given any set $S$, there is a data structure of size $O(|S|)$, for the dictionary problem, that can answer each search query in worst-case $O(1)$ time. Moreover, the data structure can be constructed in deterministic polynomial (in $|S|$) time.
\end{theorem}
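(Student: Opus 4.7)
The plan is to build the data structure via the classical two-level perfect hashing scheme of Fredman, Koml\'os and Szemer\'edi, and then derandomize the random choices of hash functions to obtain the claimed deterministic polynomial-time construction.

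First I would fix a universal family of hash functions from the key universe $U$ into $[m]$, for example $h_{a,b}(x) = ((ax+b) \bmod p) \bmod m$ where $p$ is a prime with $p > |U|$ and $1 \le a < p$, $0 \le b < p$. At the top level I take $m = \Theta(|S|)$ and pick $h$ from this family so as to distribute the $|S|$ keys into $m$ buckets with bucket sizes $n_1, \ldots, n_m$ satisfying $\sum_i n_i^2 = O(|S|)$. A standard calculation shows that for a random $h$ from the universal family the expected value of $\sum_i n_i^2$ is $O(|S|)$, so such an $h$ exists and (in the randomized setting) can be found by sampling $O(1)$ candidates and testing.

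Second, inside each bucket $i$ I would store the $n_i$ keys using another universal hash function $h_i$ into a table of size $O(n_i^2)$, chosen so that $h_i$ is collision-free on that bucket. Again, a random choice from the universal family is collision-free with constant probability, so $O(1)$ trials suffice in the randomized model. A search for key $k$ then consists of computing $h(k)$, looking up the bucket and its secondary hash function, computing $h_i(k)$, and comparing against the stored key; this takes $O(1)$ time. The total storage is $O(m) + \sum_i O(n_i^2) = O(|S|)$ by the invariant from the top level.

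The main obstacle is the deterministic polynomial-time requirement, since the scheme above is only randomized (expected polynomial time). To derandomize I would apply the method of conditional expectations to the choice of $(a,b)$ defining each hash function: the quantity $\sum_i n_i^2$ (or the number of collisions inside a bucket) is a sum of pairwise indicators whose conditional expectation, given a partial specification of the hash parameters, can be computed in polynomial time, so a good $h$ can be constructed bit by bit deterministically. Ru\v{z}i\'c's construction sharpens this by using explicit small-support probability spaces (in the spirit of $\epsilon$-biased distributions) to obtain hash functions with the desired universality properties while guaranteeing both the $O(|S|)$ space bound and worst-case $O(1)$ query time; plugging his family into the FKS two-level scheme above yields the theorem. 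The hardest technical step is precisely this derandomization, whereas the correctness of the two-level structure itself is routine.
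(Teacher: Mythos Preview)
The paper does not prove this theorem at all: it is quoted as a known result from the cited reference and used as a black box tool. There is therefore no ``paper's own proof'' to compare your attempt against.

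Your sketch is a reasonable high-level outline of why such a data structure exists (the FKS two-level scheme plus a derandomization). One caveat worth flagging: the step where you derandomize the choice of $(a,b)$ ``bit by bit'' via conditional expectations is not as immediate as you make it sound, because the map $x \mapsto ((ax+b)\bmod p)\bmod m$ involves two modular reductions and the conditional collision count given only some bits of $a,b$ does not obviously decompose cleanly; moreover, a na\"ive enumeration over the universal family is polynomial in the universe size (or the prime $p$), not in $|S|$. The cleaner observation is that pairwise independence suffices at both levels, so one can use explicit small families (or a more careful conditional-expectation argument over a pairwise-independent space) to get time polynomial in $|S|$. Ru\v{z}i\'c's actual construction is considerably more refined---it targets near-linear, not merely polynomial, construction time---but for the statement as quoted in the paper your outline, with the pairwise-independence fix, would be enough.
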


\paragraph*{Fractional Hitting Set. }Let $U:=\{1,2,\cdots,n\}$ be the universe. A set $S \subseteq U$ is said to be a \emph{hitting set} of a family of subsets $S_1,S_2,\cdots,S_m \subseteq U$ if and only if for all $i\in [m]$, $S\cap S_i \ne \emptyset$ (i.e., $S$ intersects all the subsets $S_i$). Given a family of subsets, the problem of finding the minimum sized hitting set is referred to the \emph{hitting set problem}, and is known to be NP-complete. However, there is a simple greedy algorithm that given any family of subsets $S_1,\cdots,S_m$ each of size at least $k$, finds a hitting set of size at most $O(\frac{n}{k} \log n)$.

In this paper, we consider a variant of the hitting set problem, which we call \emph{fractional hitting set}. A set $S \subseteq U$ is said to be a hitting set with slack of a family of subsets $S_1,\cdots,S_m$ if and only if $|\{i \in [m] \mid S\cap S_i \ne \emptyset\}| = \Omega(m)$ (i.e., $S$ has a non-empty intersection with at least a constant fraction of the input subsets). It is not difficult to see that the above mentioned greedy hitting set algorithm finds a fractional hitting set of size at most $O(n/k)$.

\begin{theorem}
\label{thm:frac-hitting-set}
There is a deterministic algorithm that given a family of subsets $S_1,\cdots,S_m \subseteq [n]$ each of size at least $k$, finds a fractional hitting set of size at most $O(n/k)$ in time $O(n + mk + \frac{n}{k} \log n)$.
\end{theorem}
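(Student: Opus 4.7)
The plan is to run the classical greedy hitting-set algorithm (Johnson/Chvátal) but terminate it early, as soon as a constant fraction of the family has been covered. As a preprocessing step I would replace each $S_i$ by an arbitrary subset $S'_i \subseteq S_i$ of size exactly $k$: any hitting set for the $S'_i$'s is automatically a hitting set for the $S_i$'s, so this loses nothing and makes the total input size $O(mk)$, which also cleans up the analysis.

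Correctness rests on a single averaging observation. Let $m_t$ denote the number of sets still uncovered after $t$ greedy picks. The total element-occurrences across uncovered (truncated) sets is at least $m_t \cdot k$, so by pigeonhole some element of $[n]$ lies in at least $m_t k / n$ uncovered sets. The greedy rule selects such an element (or a better one), yielding the recurrence $m_{t+1} \le m_t(1 - k/n)$, hence $m_t \le m \cdot e^{-tk/n}$. Taking $t^* := \lceil (\ln 2)\, n/k \rceil = O(n/k)$ gives $m_{t^*} \le m/2$, so the picked elements hit at least $m/2 = \Omega(m)$ original sets, providing a fractional hitting set of size $O(n/k)$.

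For the running time I would implement greedy with a simple bucket-based priority queue. For each $v \in [n]$ maintain a counter $c_v$ = (number of currently uncovered $S'_i$ containing $v$) together with a doubly linked list of those sets, and thread the elements through buckets indexed by $c_v$. Setting up these auxiliary structures from the truncated input costs $O(n + mk)$. A max-extraction is amortized $O(1)$ via a pointer into the bucket array (this pointer only descends, traversing at most $O(m)$ positions in total; one can alternatively use a standard heap, which is where the $(n/k)\log n$ term comes from). Whenever an element $v$ is picked I walk through the $c_v$ sets in its list, mark each as covered, and for every one of the $k$ elements $u$ in such a set decrement $c_u$ and move $u$ one bucket down in $O(1)$. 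Since every set is covered at most once, the total work of these decrements across all iterations telescopes to $O(\sum_i |S'_i|) = O(mk)$, giving the claimed $O(n + mk + (n/k)\log n)$ bound.

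I do not anticipate a genuine obstacle: the pigeonhole step survives iteration because the contents of each truncated set never change (only its covered/uncovered status does), so the bound $|S'_i| = k$ used in the averaging step remains valid at every iteration. The only care needed is to ensure the bucket data structure is implemented so that each increment/decrement and each amortized extract-max really is $O(1)$ (or $O(\log n)$ with a heap), which is routine.
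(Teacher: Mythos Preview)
Your proposal is correct and follows essentially the same approach as the paper: truncate each set to size exactly $k$, run the standard greedy hitting-set procedure but stop after $O(n/k)$ iterations, and use the averaging bound $m_{t+1}\le m_t(1-k/n)$ to conclude that only a constant fraction of the sets remain uncovered. The paper uses $4n/k$ iterations (leaving at most $m/10$ uncovered) and appeals to a heap for the $(n/k)\log n$ term, but the argument and running-time accounting are otherwise identical to yours.
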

\begin{proof}
W.l.o.g. assume, each $S_i$ is of size exactly $k$; otherwise we can just consider any $k$ elements from each $S_i$. Let us now consider a slightly modified version of the standard greedy algorithm for the hitting set problem. Initialize $\F:=\{S_1,\cdots,S_m\}$, and $S=\emptyset$. For each $v\in [n]$, compute the number of subsets in $\F$ that contains $v$, i.e., $c(v):=|\{S_i \in \F \mid v \in S_i\}|$. Let $v_{max}$ be a maximizer of $c(v)$. Include $v_{max}$ in $S$, i.e, $S \gets S \cup \{v_{max}\}$. Remove all the subsets that contains $v_{max}$ from $\F$, i.e., $\F \gets \F \setminus \{S_i \mid v_{max} \in S_i\}$. Continue this process $R=4n/k$ times, and output $S$. (Note, instead to continuing the process until $\F=\emptyset$ as in the standard greedy algorithm for the hitting set problem, we terminate after a fixed number of steps.)

Clearly, $|S|=R=O(n/k)$. We would like to claim that $S$ is a fractional hitting set. For that purpose, it suffices to show that at the end of the algorithm $|\F| \le m/10$. Let $n_j$ denote the size of $\F$ after the $j$-th element is added in $S$. $n_0=m$, and we need to show that $n_R \le m/10$. Consider the step when the $j$-th element has been added in $S$. Observe, $\sum_{v \in [n]\setminus H} c(v) = n_{j-1} k$, and thus by a simple averaging, an element $v_{max}^j$ that maximizes $c(v)$ must satisfy $c(v_{max}^j) \ge \frac{n_{j-1}k}{n - j +1}$. Thus,
$$n_j \le \Big(1-\frac{k}{n-j+1}\Big)n_{j-1} \le m \prod_{r=0}^{j-1}\Big(1-\frac{k}{n-r}\Big) < m(1-k/n)^j \le me^{-kj/n}.$$
Then for $R=4n/k$, $n_R < m/10$.

The running time of the modified algorithm is bounded by the standard greedy algorithm for the hitting set problem. For the sake of completeness, let us briefly comment on the running time analysis. At the beginning we can compute the values of $c(v)$'s, and then store them using any standard heap data structure (e.g.~\cite{brodal2012strict}). Then at each step, we extract the max value, and then we update the values of the others. This gives the total time to be $O(n + mk + \frac{n}{k}\log n)$.
\end{proof}

\paragraph*{Sparsifiers with Slack. }In case of pairwise FTRO (FTRS) we want to construct an oracle (subgraph) that provides correct reachability information for all the given node-pairs even after edge failures. This is the most standard definition of pairwise FTRO (FTRS). Next, we define a variant which we refer to as \emph{pairwise FTRO (FTRS) with slack}. Interested readers may find more works on sparsifiers with slack in~\cite{chan2006spanners, konjevod2007compact, dinitz2007compact, bodwin2020note}. Given a graph $G$ and a pair set $\P$, a data structure $\O$ (subgraph $H$) is a $\kftro$ with slack ($\kftrs$ with slack) if there is $\P' \subseteq \P$, $|\P'| = \Omega(|\P|)$ such that $\O$ ($H$) is a $\kftro$ ($\kftrs$) for the pair set $\P'$.

The following lemma from~\cite{bodwin2020note} show that to construct a $\kftrs$ for a pair set it suffices to construct a $\kftrs$ with slack.

\begin{lemma}[~\cite{bodwin2020note}]
\label{lem:ftrs-slack}
Let us consider constants $\alpha,\beta,\gamma >0$ and a parameter $p^*$. Consider a graph $G(V,E)$ with $n$ nodes. If
\begin{itemize}
\item There is a $\kftrs$ of size $O(n^{\alpha})$ for any node-pair set $\P$ of size at most $p^*$, and
\item There is a $\kftrs$ with slack of size $O(n^{\beta} |\P|^{\gamma})$ for any node-pair set $\P$ of size at least $p^*$,
\end{itemize}
then there is a $\kftrs(G,\P)$ of size at most $O(n^{\alpha} + n^{\beta} |\P|^{\gamma})$. Moreover, if the $\kftrs$ for at most $p^*$ pairs and the $\kftrs$ with slack for at least $p^*$ pairs can be computed in polynomial time, then the $\kftrs(G,\P)$ can also be computed in polynomial time.
\end{lemma}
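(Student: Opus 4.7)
The plan is to prove this by iterative peeling, which is the standard way to boost a ``sparsifier with slack'' into a full sparsifier. The idea is that the slack construction always kills a constant fraction of the remaining pair set, so after $O(\log |\P|)$ rounds the residual pair set has shrunk below $p^*$ and we can finish with the first hypothesis. The final subgraph will be the union of all intermediate pieces.

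Concretely, I would proceed as follows. Set $\P_0 := \P$ and construct subgraphs $H_1, H_2, \ldots$ iteratively. While $|\P_{i-1}| > p^*$, invoke the second hypothesis on $\P_{i-1}$ to obtain a $\kftrs$ with slack $H_i$ of size $O(n^{\beta}|\P_{i-1}|^{\gamma})$, together with the corresponding subset $\P'_{i-1} \subseteq \P_{i-1}$ of ``covered'' pairs with $|\P'_{i-1}| \ge c|\P_{i-1}|$ for some fixed constant $c \in (0,1)$ (from the definition of slack). Let $\P_i := \P_{i-1} \setminus \P'_{i-1}$, so that $|\P_i| \le (1-c)|\P_{i-1}| \le (1-c)^i|\P|$. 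Let $T$ be the first index with $|\P_T| \le p^*$; then $T = O(\log |\P|)$. Finally, invoke the first hypothesis on $\P_T$ to obtain a $\kftrs$ $H_{T+1}$ of size $O(n^{\alpha})$, and output $H := \bigcup_{i=1}^{T+1} H_i$.

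For correctness, take any pair $(s,t) \in \P$ and any edge set $F$ with $|F| \le k$. There is a unique index $i^*$ such that $(s,t) \in \P_{i^*-1} \setminus \P_{i^*}$, i.e., $(s,t) \in \P'_{i^*-1}$ (or $i^* = T+1$ if $(s,t)$ survives until the last step). By the slack guarantee on $H_{i^*}$ (or the full guarantee on $H_{T+1}$), $t$ is reachable from $s$ in $H_{i^*} \setminus F$ iff it is reachable in $G \setminus F$, and since $H_{i^*} \subseteq H$, $H$ is a valid $\kftrs(G,\P)$.

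For the size bound, I would bound the total number of edges as
\[
|E(H)| \;\le\; O(n^{\alpha}) + \sum_{i=1}^{T} O\bigl(n^{\beta}|\P_{i-1}|^{\gamma}\bigr) \;\le\; O(n^{\alpha}) + O(n^{\beta}|\P|^{\gamma}) \sum_{i=0}^{\infty} (1-c)^{i\gamma},
\]
and since $\gamma > 0$ is a fixed constant and $0 < 1-c < 1$, the geometric series converges to a constant, yielding $|E(H)| = O(n^{\alpha} + n^{\beta}|\P|^{\gamma})$, as required. For the running time, the loop executes $T = O(\log |\P|) = O(\log n)$ times, each iteration makes a single polynomial-time call to the assumed construction, and the final call is also polynomial, so the overall procedure is polynomial. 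The main subtlety (though not really an obstacle) is just to verify that the constant $c$ from the slack definition is independent of the iteration, which follows because the hypothesis is applied to a fresh pair-set each time and the slack guarantee is universal over input pair-sets.
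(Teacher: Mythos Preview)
Your proposal is correct and is exactly the standard iterative peeling argument from~\cite{bodwin2020note} that the paper cites (the paper does not supply its own proof of this lemma, it simply attributes it to that reference). The only cosmetic point is that your geometric-series index is off by one, but the bound is unaffected.
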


A similar claim also holds for oracles. The proof is analogous to the proof of Lemma~\ref{lem:ftrs-slack} in~\cite{bodwin2020note}.

\begin{lemma}
\label{lem:ftro-slack}
Let us consider constants $\alpha,\beta,\gamma,\delta,\kappa,\nu >0$ and a parameter $p^*$. Consider a graph $G(V,E)$ with $n$ nodes. If
\begin{itemize}
\item There is a $\kftro$ of size $O(n^{\alpha})$ and query time $O(n^{\delta})$, for any node-pair set $\P$ of size at most $p^*$, and
\item There is a $\kftro$ with slack of size $O(n^{\beta} |\P|^{\gamma})$ and query time $O(n^{\kappa}|\P|^{\nu})$, for any node-pair set $\P$ of size at least $p^*$,
\end{itemize}
then there is a $\kftro(G,\P)$ of size at most $O(n^{\alpha} + n^{\beta} |\P|^{\gamma} +|\P|)$ and query time $O(\max\{n^{\delta}, n^{\kappa}|\P|^{\nu}\})$. Moreover, if the $\kftro$ for at most $p^*$ pairs and the $\kftro$ with slack for at least $p^*$ pairs can be constructed in polynomial time, then the $\kftro(G,\P)$ can also be constructed in polynomial time.
\end{lemma}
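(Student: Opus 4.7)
The plan is to mimic the iterative sparsification used in the subgraph proof of Lemma~\ref{lem:ftrs-slack}, but handle query dispatch explicitly, since an oracle---unlike a subgraph, whose union is automatically correct---must know which sub-oracle can be trusted for a given query pair. I will build a sequence of sub-oracles $\O_0, \O_1, \ldots$ together with a lookup dictionary (via Theorem~\ref{thm:dictionary}) that, for every $(s,t)\in\P$, records the index of the oracle that correctly answers it.

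Starting with $\P_0 := \P$, in round $i$ I proceed as follows. If $|\P_i| \le p^*$, apply the first hypothesis to build an exact $\kftro$ $\O_i$ of size $O(n^\alpha)$ and query time $O(n^\delta)$, route every remaining pair in $\P_i$ to $\O_i$, and stop. Otherwise apply the second hypothesis to obtain a $\kftro$ with slack, of size $O(n^\beta |\P_i|^\gamma)$ and query time $O(n^\kappa |\P_i|^\nu)$, which correctly answers queries on some $\P'_i \subseteq \P_i$ with $|\P'_i| \ge c|\P_i|$ for an absolute constant $c>0$ coming from the definition of slack; route every pair in $\P'_i$ to $\O_i$, set $\P_{i+1} := \P_i \setminus \P'_i$, and iterate. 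The process terminates in $O(\log |\P|)$ rounds because $|\P_i| \le (1-c)^i |\P|$, with the last round falling to the $|\P_i|\le p^*$ branch.

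For the size analysis, the total storage for the sub-oracles is
$$O(n^\alpha) + \sum_{i \ge 0} O\!\left(n^\beta |\P_i|^\gamma\right) \;\le\; O(n^\alpha) + O(n^\beta |\P|^\gamma) \sum_{i \ge 0} (1-c)^{i\gamma} \;=\; O(n^\alpha + n^\beta |\P|^\gamma),$$
where the geometric series converges since $(1-c)^\gamma < 1$. The dispatch dictionary contributes the extra $O(|\P|)$ term in the claimed bound, with $O(1)$ lookup time. A query on $(s,t)\in\P$ with failure set $F$ is resolved by one dictionary lookup followed by a single invocation of the assigned $\O_i$, and since $|\P_i| \le |\P|$, the total query time is $O(\max\{n^\delta,~n^\kappa |\P|^\nu\})$. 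The polynomial-time constructibility follows because each of the $O(\log |\P|)$ rounds invokes a polynomial-time subroutine and Theorem~\ref{thm:dictionary} gives a polynomial-time dictionary construction.

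The one subtle point---and the main difference from the subgraph case of Lemma~\ref{lem:ftrs-slack}---is precisely this dispatch step: the slack oracle on $\P_i$ provides no guarantee whatsoever on the pairs of $\P_i \setminus \P'_i$, so one cannot simply query all oracles in parallel and take any answer; correctness requires that each pair be routed only to an oracle that is guaranteed exact on it. Recording the membership of each $\P'_i$ via the dictionary is what both resolves this obstacle and forces the additive $+|\P|$ in the size bound.
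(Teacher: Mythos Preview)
Your proof is correct and matches the paper's intended approach: the paper merely states that the argument is analogous to that of Lemma~\ref{lem:ftrs-slack} from~\cite{bodwin2020note}, and explicitly notes that the extra $+|\P|$ term arises from maintaining, for each pair, a pointer to the sub-oracle responsible for it---precisely your dispatch dictionary. Your writeup in fact fills in the details (geometric-series size bound, per-query routing, polynomial-time constructibility) that the paper leaves implicit.
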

Note, an extra additive term of $|\P|$ in the size bound of $\kftro$ is because the final $\kftro$ will consist of a collection of data structures each for a specific subset of pairs of $\P$. So to make the query faster, we need to maintain a pointer for each pair in $\P$ to specify which data structure to query.

\section{An (Alternate) Construction of 2-{\ftrs} for a Single Pair}
\label{section:single-pair}
Given any directed graph $G$ with $n$ vertices, for any pair of vertices $(s,t)$, we already get an $O(n)$ size 2-$\ftrs(G,(s,t))$ from~\cite{BCR16} (see Theorem~\ref{theorem:ftrs}). However, now we study the structure of a 2-$\ftrs(G,(s,t))$ more closely and provide a slightly different construction which will be useful for our constructions of 2-$\ftro$ and 2-$\ftrs$ for multiple pairs. Let us first define a notation to denote the ordering of vertices on a specific path. For any path $P$ and two vertices $u,v$ on $P$, we will write $u <_P v$ if and only if $u$ appears before $v$ on $P$.

Suppose we are given a directed graph $G$ and a pair $(s,t)$. Let us consider two $s-t$ paths $P_{(s,t)}^1$ and $P_{(s,t)}^2$, that intersect only at $s-t$ cut-edges and $s-t$ cut-vertices. We call these two paths \emph{outer strands}. Next, we define coupling paths between these two outer strands. For each vertex on the outer strands, let us first define two \emph{coupling points} of it (one on each outer strand). Consider a vertex $v \in V(P_{(s,t)}^1 \cup P_{(s,t)}^2)$ and $i \in \{1,2\}$. Then consider the vertex $u_{(s,t),v}^i\in P_{(s,t)}^i$ (if exists) for which there is a $u_{(s,t),v}^i-v$ path that is edge-disjoint with both the outer strands, and there is no vertex $u'\in P_{(s,t)}^i$ such that $u' <_{P_{(s,t)}^i} u_{(s,t),v}^i$ and there is a $u'-v$ path that is edge-disjoint with both the outer strands. We refer to $u_{(s,t),v}^1, u_{(s,t),v}^2$ as \emph{coupling points} of $v$. For each $v$ on the outer strands and each coupling point $u_{(s,t),v}^i$ of it, consider any arbitrary $u_{(s,t),v}^i-v$ path (denoted as $Q_{(s,t),v}^i$) of $G$, that is edge-disjoint with both the outer strands. We refer to the path $Q_{(s,t),v}^i$ as a \emph{coupling path} to $v$. Note, for each vertex $v$ on the outer strands, there are at most two coupling points of it (one on each outer strand) and from each coupling point there is a coupling path to $v$. We take two outer strands and some of the coupling paths to build a subgraph $H_{(s,t)}$. More specifically, for a vertex $v \in V(P_{(s,t)}^j)$, if for all the vertices below $v$ on that strand, their coupling points on a strand $P_{(s,t)}^i$ is below the coupling point of $v$ on that strand, then only add the coupling path from the strand $P_{(s,t)}^i$ to $v$. Later, we prove that keeping these "essential" coupling paths is sufficient. 

Below we provide a formal description of the procedure to construct $H_{(s,t)}$. Given $G$ and a pair $(s,t)$, we build a subgraph $H_{(s,t)}$ as follows:
\begin{enumerate}
\item $V(H_{(s,t)})=V(G)$.
\item Add all the edges on the paths $P_{(s,t)}^1$, $P_{(s,t)}^2$ (two outer strands) to $H_{(s,t)}$.
\item For all $i,j \in \{1,2\}$ and $v \in V(P_{(s,t)}^j)$
\begin{enumerate}
\item If for all $v' \in V(P_{(s,t)}^j[v-t])$, $u_{(s,t),v}^i <_{P_{(s,t)}^i} u_{(s,t),v'}^i$, then take the coupling path $Q_{(s,t),v}^i$, and add all its edges to $H_{(s,t)}$.
\end{enumerate}
\end{enumerate}
Readers may refer to Figure~\ref{fig:single-pair-ftrs} for an example subgraph $H_{(s,t)}$. Note, the above construction procedure of the subgraph $H_{(s,t)}$ runs in polynomial time. 

Before arguing that $H_{(s,t)}$ is a 2-$\ftrs(G,(s,t))$, let us introduce a notion of \emph{nice path}. We call an $s-t$ path in $H_{(s,t)}$ \emph{nice} if it first follows a outer strand till some vertex $u$, then take a coupling path to some vertex $u'$, and finally follows the outer strand on which $u'$ lies till $t$. More formally, an $s-t$ path in $H_{(s,t)}$ is nice if and only if it is of the form $P_{(s,t)}^i[s-u] \circ Q_{(s,t),u'}^{i} \circ P_{(s,t)}^j[u'-t]$, for some $i,j \in \{1,2\}$, where subpath $Q_{(s,t),u'}^{i}$ starts from $u$ and is edge-disjoint with both the outer strands $P_{(s,t)}^1$, $P_{(s,t)}^2$. ($Q_{(s,t),u'}^{i}$ could be empty.)

The following claim shows that $H_{(s,t)}$ is a 2-$\ftrs(G,(s,t))$, and also will be useful in the subsequent sections while arguing about the correctness of our pairwise 2-{\ftro} and 2-{\ftrs} construction.
\begin{claim}
\label{clm:special-path}
For any two edges $f_1,f_2$, if there is an $s-t$ path in $G \setminus \{f_1,f_2\}$, then there must be a nice $s-t$ path in $H_{(s,t)} \setminus \{f_1,f_2\}$.
\end{claim}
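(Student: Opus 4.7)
The plan is to reduce to a ``one failure per strand'' core case and then, given any surviving $s$-$t$ path, to locate a coupling in $H_{(s,t)}$ that yields a nice path.

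\textbf{Reduction.} If some strand, say $P_{(s,t)}^1$, avoids both $f_1$ and $f_2$, then $P_{(s,t)}^1$ is itself a nice $s$-$t$ path in $H_{(s,t)}\setminus\{f_1,f_2\}$ (take $i=j=1$ and $u=u'=s$, giving empty coupling). Symmetrically for $P_{(s,t)}^2$. Because the two strands meet only at $s$-$t$ cut-edges and cut-vertices, a failure edge lying on both strands would be a cut-edge and would destroy every $s$-$t$ path, contradicting the hypothesis. Thus I may assume $f_1\in P_{(s,t)}^1\setminus P_{(s,t)}^2$ and $f_2\in P_{(s,t)}^2\setminus P_{(s,t)}^1$, with exactly one failure on each strand.

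\textbf{Extracting a crossing.} Fix an $s$-$t$ path $\pi$ in $G\setminus\{f_1,f_2\}$. Let $R$ be the set of vertices reachable from $s$ by walks that use only strand edges and avoid $f_1,f_2$; dually, let $R^{-1}$ be the set of vertices that reach $t$ via such walks. Then $s\in R$ and $t\in R^{-1}$. Walking $\pi$ from $s$, let $u^*$ be the last vertex of $\pi$ in $R$ and $v^*$ be the first vertex of $\pi$ at or after $u^*$ in $R^{-1}$. If $u^*=v^*$, then $u^*$ is simultaneously strand-reachable from $s$ and strand-connected to $t$; concatenating these two strand walks at $u^*$ (which must be a shared vertex between the strands, i.e., a cut-vertex, $s$, or $t$) gives a nice path with empty coupling. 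Otherwise, I claim $\pi[u^*-v^*]$ is edge-disjoint from both strands: any strand edge $(p,q)$ on this subpath, being distinct from $f_1$ and $f_2$, must have both endpoints on the same side of the failure on its hosting strand (a cross from above to below would be the failure itself), which forces $q\in R$ or $p\in R^{-1}$ and contradicts the extremal choice of $u^*$ or $v^*$.

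\textbf{Upgrading to an essential coupling.} Write $u^*\in P_{(s,t)}^i$ (above $f_i$) and $v^*\in P_{(s,t)}^j$ (below $f_j$). Edge-disjointness of $\pi[u^*-v^*]$ from both strands means the coupling point $u_{(s,t),v^*}^i$ is defined and appears no later than $u^*$ on $P_{(s,t)}^i$, hence above $f_i$. The coupling path $Q_{(s,t),v^*}^i$ itself need not be in $H_{(s,t)}$, so I select $v^{**}\in V(P_{(s,t)}^j[v^*-t])$ whose coupling point $u_{(s,t),v^{**}}^i$ is defined and earliest on $P_{(s,t)}^i$, breaking ties by taking $v^{**}$ farthest down $P_{(s,t)}^j$. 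This choice exactly matches the inclusion rule in Step~3(a) of the construction, so $Q_{(s,t),v^{**}}^i\subseteq H_{(s,t)}$. Since $u_{(s,t),v^{**}}^i$ remains above $f_i$ and $v^{**}$ remains below $f_j$, and since the failure on the opposite strand is confined to that strand (by the reduction), the path
\[
P_{(s,t)}^i[s-u_{(s,t),v^{**}}^i]\circ Q_{(s,t),v^{**}}^i\circ P_{(s,t)}^j[v^{**}-t]
\]
is a nice $s$-$t$ path in $H_{(s,t)}\setminus\{f_1,f_2\}$. The hardest step is the edge-disjointness claim above, which must carefully handle strand walks that may switch strands at cut-vertices and still show that any strand edge on $\pi[u^*-v^*]$ would violate the extremality of $u^*$ or $v^*$.
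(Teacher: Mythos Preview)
Your argument is correct and takes a somewhat different route from the paper. After the same reduction to one failure per strand, the paper proceeds more directly: it lets $z'$ be the \emph{first} vertex of the surviving path $P$ lying in $V(P_{(s,t)}^1[y_1{-}t]\cup P_{(s,t)}^2[y_2{-}t])$, and $z$ the last vertex of $P$ strictly before $z'$ that lies on either strand. Because no strand vertex occurs strictly between $z$ and $z'$ on $P$, the segment $P[z{-}z']$ is edge-disjoint from both strands for free, and $z$ lies above its failure by the minimality of $z'$; the coupling-upgrade step is then the same as yours. Your $R,R^{-1}$ machinery is heavier but also valid, and your case analysis establishing edge-disjointness of $\pi[u^*{-}v^*]$ is sound. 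What your approach buys is that $u^*$ and $v^*$ are guaranteed to sit above and below their respective failures even in the presence of many cut-vertices (since, e.g., $u^*\neq v^*$ and $u^*$ below $f_i$ would put $u^*$ in $R^{-1}$); the paper's $z,z'$ get the same guarantee more cheaply from the minimality of $z'$.

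One small gap: in the $u^*=v^*$ case your parenthetical is actually correct---indeed $u^*\in R\cap R^{-1}$ forces $t\in R$, hence $u^*=t$---but ``concatenating these two strand walks'' does not literally yield a nice path, since a strand-only $s{-}t$ walk may switch strands at several cut-vertices, whereas a nice path with empty coupling switches at most once. One more line closes this: the existence of a strand-only $s{-}t$ walk avoiding $f_1,f_2$ forces the two failures to lie in \emph{distinct} cut-vertex-to-cut-vertex segments; if $f_1$'s segment is the earlier one and $c$ is the cut-vertex immediately following it, then $P_{(s,t)}^2[s{-}c]\circ P_{(s,t)}^1[c{-}t]$ avoids both failures and is a nice path with empty coupling.
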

\begin{proof}
First of all, one of $f_1,f_2$ must be on $P_{(s,t)}^1$ and another on $P_{(s,t)}^2$. Otherwise, since $P_{(s,t)}^1$ and $P_{(s,t)}^2$ intersect only at $s-t$ cut-edges and $s-t$ cut-vertices, at least one of $P_{(s,t)}^1$, $P_{(s,t)}^2$ remains intact in $H_{(s,t)} \setminus \{f_1,f_2\}$, and thus there is nothing to prove. So from now on without loss of generality, assume, $f_1=(x_1,y_1)$ lies on $P_{(s,t)}^1$ and $f_2=(x_2,y_2)$ lies on $P_{(s,t)}^2$. 

Let $P$ be an $s-t$ path in $G \setminus \{f_1,f_2\}$. Observe, $P$ must intersect either $P_{(s,t)}^1[y_1-t]$ or $P_{(s,t)}^2[y_2-t]$. Let $z'$ be the first vertex on $P$, that is also in $V(P_{(s,t)}^1[y_1-t] \cup P_{(s,t)}^2[y_2-t])$ (such a vertex exists because $P$, $P_{(s,t)}^1$ and $P_{(s,t)}^2$ all end at $t$). Let $z'$ lie on $P_{(s,t)}^j$, for some $j\in \{1,2\}$. Now consider the last vertex before $z'$ on $P$, that is also in $V(P_{(s,t)}^1 \cup P_{(s,t)}^2)$ (such a vertex exists because $P$, $P_{(s,t)}^1$ and $P_{(s,t)}^2$ all start from $s$), and let $z$ be that vertex which is on $P_{(s,t)}^i$, for some $i \in \{1,2\}$. ($i$ could be equal to $j$.) Observe, $z$ must lie on the subpath $P_{(s,t)}^i[s-x_i]$. Furthermore, $P[z-z']$ is edge-disjoint with $P_{(s,t)}^1$ and $P_{(s,t)}^2$. Hence, by the construction of $H_{(s,t)}$, there must be a path $Q_{(s,t),u'}^i$ for some $u'\in V(P_{(s,t)}^j[z'-t])$ which starts from a vertex $u<_{P_{(s,t)}^i} z<_{P_{(s,t)}^i} x_i$ ($u$ could be equal to $z$). Hence, $P_{(s,t)}^i[s-u]$ remains intact in $H_{(s,t)} \setminus \{f_1,f_2\}$. Note, $u' \in V(P_{(s,t)}^j[y_j-t]$ (because $u'\in V(P_{(s,t)}^j[z'-t])$). Thus the subpath $P_{(s,t)}^j[u'-t]$ also remains intact in $H_{(s,t)} \setminus \{f_1,f_2\}$. Lastly, since $Q_{(s,t),u'}^i$ is edge-disjoint with $P_{(s,t)}^1$ and $P_{(s,t)}^2$, it also remains intact in $H_{(s,t)} \setminus \{f_1,f_2\}$. Hence, we get an $s-t$ path $P_{(s,t)}^i[s-u] \circ Q_{(s,t),u'}^{i} \circ P_{(s,t)}^j[u'-t]$ in $H_{(s,t)} \setminus \{f_1,f_2\}$.
\end{proof}

Next, let us use the above claim to conclude that $H_{(s,t)}$ is a 2-$\ftrs(G,(s,t))$. Furthermore, we also argue that the size of $H_{(s,t)}$ is $O(n)$.

\begin{lemma}
\label{lem:single-pair-ftrs}
$H_{(s,t)}$ is a 2-$\ftrs(G,(s,t))$ and contains $O(n)$ edges.
\end{lemma}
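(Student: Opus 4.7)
The plan is to split the argument into a correctness part and a size-bound part. For correctness, the ``only if'' direction of the $\ftrs$ definition is free since $H_{(s,t)}\subseteq G$. For the ``if'' direction, fix any $F\subseteq E$ with $|F|\leq 2$ and suppose there is an $s-t$ path in $G\setminus F$. If $|F|\leq 1$, then at least one of $P_{(s,t)}^1, P_{(s,t)}^2$ avoids $F$ entirely (the two outer strands intersect only at $s-t$ cut-edges and cut-vertices, and failing a cut-edge would already destroy every $s-t$ path in $G$), so the surviving strand gives an $s-t$ path in $H_{(s,t)}\setminus F$. For $|F|=2$, Claim~\ref{clm:special-path} directly furnishes a nice $s-t$ path in $H_{(s,t)}\setminus F$.

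For the $O(n)$ size bound, the outer strands $P_{(s,t)}^1, P_{(s,t)}^2$ are simple paths on at most $n$ vertices each and together account for $O(n)$ edges, so the real work is bounding the contribution of the added coupling paths. I would fix $(i,j)\in\{1,2\}^2$ and enumerate the vertices $v_1<_{P_{(s,t)}^j}v_2<_{P_{(s,t)}^j}\cdots<_{P_{(s,t)}^j}v_L$ of $P_{(s,t)}^j$ whose coupling path $Q_{(s,t),v_k}^i$ from $P_{(s,t)}^i$ survives the essentialness test in step 3(a) of the construction. Plugging $v=v_k$ and $v'=v_{k+1}$ into that test, the matching coupling points $u_k:=u_{(s,t),v_k}^i$ satisfy $u_1<_{P_{(s,t)}^i}u_2<_{P_{(s,t)}^i}\cdots<_{P_{(s,t)}^i}u_L$ strictly.

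The crux, and the step I expect to be the main obstacle, is the following internal-vertex-disjointness claim: for this fixed $(i,j)$, no vertex $w$ can lie internally on two distinct coupling paths $Q_{(s,t),v_k}^i$ and $Q_{(s,t),v_\ell}^i$ with $k<\ell$. I would argue by contradiction: the subpaths $Q_{(s,t),v_k}^i[u_k-w]$ and $Q_{(s,t),v_\ell}^i[w-v_\ell]$ are simple and edge-disjoint with $P_{(s,t)}^1\cup P_{(s,t)}^2$ by construction of the coupling paths, so their concatenation is a $u_k-v_\ell$ walk in $G$ whose edge set also avoids both outer strands. Shortcutting this walk into a simple $u_k-v_\ell$ path preserves strand-disjointness, which violates the minimality of $u_\ell=u_{(s,t),v_\ell}^i$ on $P_{(s,t)}^i$ since $u_k<_{P_{(s,t)}^i}u_\ell$. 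The subcase where $w$ itself lies on an outer strand is handled identically because the defining strand-disjointness refers only to edges.

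With this disjointness in hand, writing $\ell_k$ for the number of internal vertices of $Q_{(s,t),v_k}^i$, the edges contributed by these $L$ coupling paths sum to at most $\sum_{k=1}^L(\ell_k+1)\leq n+L\leq 2n$, since the internal vertices are distinct elements of $V$ and $L\leq n$. Summing over the four ordered pairs $(i,j)\in\{1,2\}^2$ and combining with the $O(n)$ strand edges yields $|E(H_{(s,t)})|=O(n)$, which finishes the proof.
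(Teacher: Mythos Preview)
Your proof is correct and follows essentially the same approach as the paper. Both arguments hinge on the same key observation: if two added coupling paths $Q_{(s,t),v_k}^i$ and $Q_{(s,t),v_\ell}^i$ with the same $(i,j)$ share a vertex, concatenating subpaths yields a strand-disjoint $u_k$--$v_\ell$ path that contradicts the minimality of $u_\ell$ together with step~3(a); the paper packages this as an in-degree bound of at most four per vertex, while you package it as internal-vertex disjointness and then sum edge counts, but the content is the same.
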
 
\begin{proof}
Since $H_{(s,t)}$ is a subgraph of $G$, from Claim~\ref{clm:special-path}, we conclude that for any two edges $f_1,f_2$, there is an $s-t$ path in $G \setminus \{f_1,f_2\}$ if and only if there is an $s-t$ path in $H_{(s,t)} \setminus \{f_1,f_2\}$. So, $H_{(s,t)}$ is a 2-$\ftrs(G,(s,t))$.

Let us now compute the size of $H_{(s,t)}$. First, we would like to claim that for each vertex $v \in V(H_{(s,t)}) \setminus V(P_{(s,t)}^1 \cup P_{(s,t)}^2)$, in-degree of $v$ in $H_{(s,t)}$ is at most 4. If not, then there exists a vertex $v \in V(H_{(s,t)}) \setminus V(P_{(s,t)}^1 \cup P_{(s,t)}^2)$ with at least five incoming edges $e_1,e_2,e_3,e_4,e_5$. Suppose during the construction of $H_{(s,t)}$, for each $r \in [5]$, $e_r$ has been included while adding a path $Q_{(s,t),v_r}^{i_r}$, where $v_r \in V(P_{(s,t)}^{j_r})$. Since for each $r \in [5]$, $i_r,j_r \in \{1,2\}$, by the pigeonhole principle, there must exist $r\ne r'\in [5]$ such that $i_r=i_{r'}(=i)$ and $j_r=j_{r'}(=j)$. Without loss of generality, assume, $v_r <_{P_{s,t}^{j}} v_{r'}$. Since $Q_{(s,t),v_r}^{i}$ and $Q_{(s,t),v_{r'}}^{i}$ intersect at $v$, either $u_{(s,t),v_{r'}}^{i}=u_{(s,t),v_r}^{i}$ or $u_{(s,t),v_{r'}}^{i} <_{P_{(s,t)}^i} u_{(s,t),v_r}^{i}$. Hence, by the Step 3(a) of the construction procedure of $H_{(s,t)}$, $Q_{(s,t),v_r}^{i_r}$ would not be added, which leads to a contradiction.

Observe, by the construction of $H_{(s,t)}$, in-degree of each $v \in V(P_{(s,t)}^1 \cup P_{(s,t)}^2)$ is also at most 4. Hence, $H_{(s,t)}$ has total $O(n)$ edges.
\end{proof}

\section{Dual Fault-tolerant Pairwise Reachability Oracle}
\label{section:dual-ftro}
In this section, we present a construction of a dual fault-tolerant reachability oracle (2-$\ftro$) for a given $n$ vertex (directed) graph $G=(V,E)$ and a vertex-pair set $\P \subseteq V \times V$. The size of the data structure/oracle is $O(n\sqrt{|\P|})$. Upon given any two failure edges $f_1,f_2$ and a pair $(s,t) \in \P$, the oracle correctly decides whether there is an $s-t$ path in $G\setminus \{f_1,f_2\}$ in $O(1)$ time. In particular, we prove Theorem~\ref{thm:dual-oracle}.

Here we provide a construction of 2-$\ftro$ with slack, which together with Lemma~\ref{lem:ftro-slack} shows Theorem~\ref{thm:dual-oracle}. 
\begin{theorem}
\label{thm:dual-oracle-slack}
A directed graph $G=(V,E)$ with $n$ vertices can be processed in randomized polynomial time for a given set $\P \subseteq V \times V$ of vertex-pairs, to build a 2-$\ftro$ with slack of size $O(n \sqrt{|\P|})$, having query time $O(1)$.
\end{theorem}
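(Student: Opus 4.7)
The plan is to construct, for each pair $(s,t)\in \P$, the single-pair 2-$\ftrs$ $H_{(s,t)}$ from Section~\ref{section:single-pair} with outer strands $P_{(s,t)}^1, P_{(s,t)}^2$, and then combine two data-structural mechanisms: a \emph{long-strand} mechanism based on a fractional hitting set, and a \emph{short-strand} mechanism based on a compact auxiliary graph whose vertex set is contained in the end-portions of the outer strands. Set the threshold $L := \lceil n/\sqrt{|\P|}\rceil$. For each pair $(s,t)$ and each $i \in \{1,2\}$, form the two ``end-subpaths'' $P_{(s,t)}^i[L]$ and $P_{(s,t)}^i[-L]$ (i.e.\ the top- and bottom-$L$ portions of the outer strands). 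Collect all such end-subpaths (of length $\ge L$) as a family of subsets of $V$ and apply Theorem~\ref{thm:frac-hitting-set} with $k=L$ to obtain a \emph{fractional} hitting set $R$ of size $O(n/L) = O(\sqrt{|\P|})$ that intersects all but a constant fraction of these subpaths. Let $\P'\subseteq \P$ be the pairs \emph{all} of whose end-subpaths are hit by $R$; by Theorem~\ref{thm:frac-hitting-set} we may arrange $|\P'| = \Omega(|\P|)$. We will build an oracle that is correct for every pair in $\P'$, yielding the required 2-$\ftro$ with slack.

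The oracle consists of two pieces. Piece~(A): for every $r \in R$, build by Theorem~\ref{theorem:ftro} an $O(n)$-size, $O(1)$-query single-source 2-$\ftro$ on $G$ rooted at $r$ and, symmetrically, a single-destination 2-$\ftro$ into $r$. Total size $O(n|R|) = O(n\sqrt{|\P|})$. For each $(s,t)\in \P'$ store four pointers, into $R$, to one hit vertex in each of $P_{(s,t)}^1[L],P_{(s,t)}^1[-L],P_{(s,t)}^2[L],P_{(s,t)}^2[-L]$. Piece~(B): for each $(s,t)\in \P'$ define the \emph{auxiliary graph} $A_{(s,t)}$ on the vertex set $V_{(s,t)} := V\bigl(P_{(s,t)}^1[L]\cup P_{(s,t)}^1[-L]\cup P_{(s,t)}^2[L]\cup P_{(s,t)}^2[-L]\bigr)$ (which has size $O(L)$): include all outer-strand edges with both endpoints in $V_{(s,t)}$, and add an \emph{auxiliary edge} $(u,u')$ whenever there is a path in $H_{(s,t)}$ from $u$ to $u'$ that is edge-disjoint from $P_{(s,t)}^1\cup P_{(s,t)}^2$. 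Build the $O(L)$-size, $O(1)$-query 2-$\ftro$ of $A_{(s,t)}$ using Theorem~\ref{theorem:ftro} (applied with $s$ as source and $t$ as destination, or by storing the two needed single-source/destination oracles). Summed over all pairs this contributes $O(|\P|\cdot L) = O(n\sqrt{|\P|})$.

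To answer a query on $(s,t)\in \P'$ with failures $\{f_1,f_2\}$, we first probe the four stored hit vertices $v \in R$ associated with $(s,t)$: for each such $v$, use Piece~(A) to test in $O(1)$ whether $s\leadsto v$ and $v\leadsto t$ both survive; if yes for some $v$, answer YES. Otherwise, query Piece~(B) for $s\leadsto t$ in $A_{(s,t)}\setminus\{f_1,f_2\}$ (an auxiliary edge is treated as present unless its pre-image path contains $f_1$ or $f_2$; since paths recorded as auxiliary edges are edge-disjoint from the outer strands, only actual outer-strand failures matter here, and we can resolve this in $O(1)$ by pre-marking which auxiliary edges map through each faulty outer-strand edge --- more simply, since $|F|=2$ and each auxiliary edge lies between outer-strand vertices and is built from a path disjoint from those strands, auxiliary edges are unaffected by $f_1,f_2$ when $f_1,f_2$ lie on the outer strands). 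Total query time $O(1)$.

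For correctness, assume $G\setminus\{f_1,f_2\}$ contains an $s\leadsto t$ path. By Claim~\ref{clm:special-path} there is a \emph{nice} $s$--$t$ path in $H_{(s,t)}\setminus\{f_1,f_2\}$, of the form $P_{(s,t)}^i[s{-}u]\circ Q \circ P_{(s,t)}^j[u'{-}t]$ where $Q$ is edge-disjoint from both outer strands. Case~1: the prefix $P_{(s,t)}^i[s{-}u]$ has length $\ge L$, i.e.\ passes through the top-$L$ portion, or the suffix $P_{(s,t)}^j[u'{-}t]$ passes through the bottom-$L$ portion. Then the nice path contains the hit vertex $v\in R$ stored for the corresponding end-subpath, and by construction $s\leadsto v \leadsto t$ survives in $G\setminus\{f_1,f_2\}$, so Piece~(A) returns YES. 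Case~2: both the prefix and suffix have length $< L$, so $u,u'\in V_{(s,t)}$; then $(u,u')$ is an auxiliary edge of $A_{(s,t)}$, yielding an $s\leadsto t$ path in $A_{(s,t)}\setminus\{f_1,f_2\}$, so Piece~(B) returns YES. Conversely, any YES answer from Piece~(A) or Piece~(B) maps back to an actual surviving $s\leadsto t$ path in $G\setminus\{f_1,f_2\}$, since Piece~(A) answers are literal reachability facts in $G$ and every edge of $A_{(s,t)}$ is backed by a real path of $H_{(s,t)}\subseteq G$ edge-disjoint from both outer strands (hence unaffected by any $f_r$ lying on those strands; when an $f_r$ lies off the outer strands, we similarly mark and skip the at most $O(1)$ affected auxiliary edges per query).

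The main obstacle is precisely the equivalence in Case~2: many different paths of $H_{(s,t)}$ collapse to one edge of $A_{(s,t)}$, and we must verify that the ``nice path'' structure guaranteed by Claim~\ref{clm:special-path} uses at most one coupling subpath, so exactly one auxiliary edge suffices; without niceness the argument would break. Everything else is bookkeeping: the fractional hitting set saves the $\log n$ factor, the slack set $\P'$ absorbs the pairs whose end-subpaths were not hit, and Lemma~\ref{lem:ftro-slack} then lifts this construction to Theorem~\ref{thm:dual-oracle}.
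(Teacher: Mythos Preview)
Your construction mirrors the paper's almost exactly: the same single-pair $H_{(s,t)}$, the same fractional hitting set with threshold $L=\Theta(n/\sqrt{|\P|})$, the same two-piece oracle (single-source/destination oracles at hit vertices, plus a per-pair auxiliary graph on the end-subpath vertices), and the same appeal to Claim~\ref{clm:special-path} to produce a nice surviving path. The size accounting and the role of Lemma~\ref{lem:ftro-slack} are also identical.

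The gap is in your handling of failures in Piece~(B). You propose to ``treat an auxiliary edge as present unless its pre-image path contains $f_1$ or $f_2$'' and, when some $f_r$ lies off the outer strands, to ``mark and skip the at most $O(1)$ affected auxiliary edges per query.'' This is not well-defined: an auxiliary edge $(u,u')$ encodes only the \emph{existence} of some $u$--$u'$ path in $H_{(s,t)}$ disjoint from the strands, not a specific path; a single off-strand failure can lie on the chosen pre-images of arbitrarily many auxiliary edges, and you cannot identify them in $O(1)$ time without blowing up space. Relatedly, your forward direction in Case~2 is incomplete: you claim the nice path yields an $s$--$t$ path in $A_{(s,t)}\setminus\{f_1,f_2\}$, but if $f_1$ happens to coincide with the edge $(u,u')$ in $A_{(s,t)}$ (possible when $(u,u')$ is also an edge of $G$), that auxiliary edge is deleted even though the coupling segment $Q$ of the nice path --- a different $u$--$u'$ path --- survives.

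The paper avoids both problems with a different query semantics: it stores a dictionary $\mathcal{M}_{(s,t)}$ over the end-subpath vertex set, prunes from $\{f_1,f_2\}$ any edge whose endpoints are not in that set, and then queries the oracle $D_{(s,t)}$ on $A_{(s,t)}$ with the remaining \emph{literal} edges removed --- no pre-image tracking at all. The correctness proof (Lemma~\ref{lem:dual-oracle-correctness}) then rests on one extra observation you are missing: whenever an $f_r$ lies off both outer strands (so in particular whenever a pre-image path or the auxiliary edge $(u,u')$ itself is hit), one of $P_{(s,t)}^1,P_{(s,t)}^2$ survives intact, because the two strands share only $(s,t)$-cut-edges and cut-vertices, and such an $f_r$ cannot be a cut-edge in this situation. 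That single observation closes both directions and makes any ``mark and skip'' mechanism unnecessary.
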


Let us start with the description of the oracle with slack. Below we will use the notations $H_{(s,t)}, P_{(s,t)}^i, Q_{(s,t),v}^j$ defined in Section~\ref{section:single-pair}.\\
\textbf{Procedure to Construct a 2-$\ftro$ with slack for the input $(G,\P)$:}
\begin{enumerate}
\item First, for all the pairs $(s,t) \in \P$, construct $H_{(s,t)}=2\text{-}\ftrs(G,(s,t))$ using the construction provided in Section~\ref{section:single-pair}. (Recall, $H_{(s,t)}$ contains two $s-t$ paths $P^1_{(s,t)}$ and $P^2_{(s,t)}$, that intersect only at $s-t$ cut-edges and $s-t$ cut-vertices.)
\item Consider the following family of subsets $\F:=\{V(P^i_{(s,t)}[L]), V(P^i_{(s,t)}[-L]) \mid i\in \{1,2\},(s,t) \in \P\}$, where $L=\Theta(n/\sqrt{|\P|})$. Construct a fractional hitting set $S$ of $\F$ (by Theorem~\ref{thm:frac-hitting-set}). Let $\Q:=\{(s,t)\in \P \mid \forall_{i\in \{1,2\}} V(P^i_{(s,t)}[L])\cap S \ne \emptyset \text{ and } V(P^i_{(s,t)}[-L]) \cap S \ne \emptyset\}$.
\item For each $v \in S$, construct a 2-$\ftro(G,\{v\} \times V)$ $O_v$ and a 2-$\ftro(G,V \times \{v\})$ $I_v$ (using Theorem~\ref{theorem:ftro}), and store them.
\item For each pair $(s,t) \in \Q$, maintain a list $T_{(s,t)}$ of size 4, that stores an (arbitrarily chosen) vertex from the set $P^i_{(s,t)}[L]\cap S$ and an (arbitrarily chosen) vertex from the set $P^i_{(s,t)}[-L]\cap S$ for each $i \in \{1,2\}$.
\item For each pair $(s,t) \in \Q$, let $M_{(s,t)}=V\big(\bigcup_{i \in \{1,2\}}(P_{(s,t)}^i[L]\cup P_{(s,t)}^i[-L])\big)$. Build a dictionary data structure $\mathcal{M}_{(s,t)}$ for the set $M_{(s,t)}$ (using Theorem~\ref{thm:dictionary}).
\item For each pair $(s,t) \in \Q$, construct the following auxiliary graph $A_{(s,t)}$ (see Figure~\ref{fig:auxiliary}):
\begin{enumerate}
\item Vertex set of $A_{(s,t)}$ is $V\big(\bigcup_{i \in \{1,2\}}(P_{(s,t)}^i[L]\cup P_{(s,t)}^i[-L])\big)$.
\item For each edge in $E\big(\bigcup_{i \in \{1,2\}}(P_{(s,t)}^i[L]\cup P_{(s,t)}^i[-L])\big)$, add an edge to $A_{(s,t)}$. We refer these edges as \emph{path edges}.
\item For all pair of vertices $u,v \in V(A_{(s,t)})$, if there is a $u-v$ path in $H_{(s,t)}$, that is edge-disjoint with $P^1_{(s,t)}$ and $P^2_{(s,t)}$, then add an edge $(u,v)$ to $A_{(s,t)}$. We refer these edges as \emph{auxiliary edges}.
\end{enumerate}
\item For each pair $(s,t) \in \Q$, construct a 2-$\ftro(A_{(s,t)},s)$ $D_{(s,t)}$.
\item The final data structure consists of $I_v, O_v$ for each $v \in S$, and for each pair $(s,t) \in \Q$, the list $T_{(s,t)}$, the dictionary data structure $\mathcal{M}_{(s,t)}$, and $D_{(s,t)}$.
\end{enumerate}

We will show that the data structure constructed using the above procedure is a 2-$\ftro(G,\Q)$. Given any two edges $f_1,f_2$ and a pair $(s,t) \in \Q$, the query algorithm works as follows:\\
\textbf{Query Algorithm ($f_1,f_2,(s,t)$):}
\begin{enumerate}
\item For each vertex $v \in T_{(s,t)}$, check if there is an $s-v$ and $v-t$ path in $G\setminus\{f_1,f_2\}$ using $I_v$ and $O_v$ respectively. If for at least one vertex $v\in T_{(s,t)}$, both $s-v$ and $v-t$ path exist, then return "there is an $s-t$ path in $G\setminus \{f_1,f_2\}$".
\item Otherwise, check whether any of the endpoints of $f_1$ and $f_2$ is not present in $\mathcal{M}_{(s,t)}$. If any of them is not, discard that edge and continue to Step 3. Let $F \subseteq \{f_1,f_2\}$ be the set of remaining edges after the above pruning.
\item Use the oracle $D_{(s,t)}$ to decide whether there is an $s-t$ path in $A_{(s,t)} \setminus \{f_1,f_2\}$. If yes, then return "there is an $s-t$ path in $G\setminus \{f_1,f_2\}$"; else return "there is no $s-t$ path in $G\setminus \{f_1,f_2\}$".
\end{enumerate}

Note, the sole purpose of step 2 of the above query algorithm is to make sure the edges $f_1, f_2$ are defined over the set of vertices present in the auxiliary graph $A_{(s,t)}$. If any of them is not, then there is no point in deleting that edge from $A_{(s,t)}$ (at step 3). 

\paragraph*{Size of the data structure. }By Theorem~\ref{thm:frac-hitting-set}, $|S|=O(\sqrt{|\P|})$. For a vertex $v \in S$, to store $I_v, O_v$ we need $O(n)$ space by Theorem~\ref{theorem:ftro}. So, for all the $I_v$'s and $O_v$'s, we need $O(n\sqrt{|\P|})$ space. To store the lists $T_{(s,t)}$, for all $(s,t) \in \Q$, we need $O(|\Q|)$ space. Since for a pair $(s,t) \in \Q$, the set $M_{(s,t)}$ is of size at most $O(L)$, the dictionary data structure $\mathcal{M}_{(s,t)}$ requires $O(L)$ space using Theorem~\ref{thm:dictionary}. Next, observe, for each $(s,t) \in \Q$, the auxiliary graph $A_{(s,t)}$ contains at most $O(L)$ vertices and thus at most $O(L^2)$ edges. However, we do not store $A_{(s,t)}$ as part of the data structure. Instead, we just store $D_{(s,t)}=2\text{-}\ftro(A_{(s,t)},s)$, which requires $O(L)$ space again by Theorem~\ref{theorem:ftro}. So, the total space required by the whole data structure is $O(n\sqrt{|\P|} + L\cdot |\Q|)=O(n \sqrt{|\P|})$, for $L=\Theta(n/\sqrt{|\P|})$ (note, $|\Q| \le |\P|\le n^2$).

\paragraph*{Query time. }By Theorem~\ref{theorem:ftro}, each of $I_v, O_v$ and $D_{(s,t)}$ has query time $O(1)$. Checking the presence of the end points of $f_1,f_2$ in the data structure $\mathcal{M}_{(s,t)}$ also requires $O(1)$ time by Theorem~\ref{thm:dictionary}. Thus the total query time is also $O(1)$.


\paragraph*{Correctness of pairwise 2-{\ftro} with slack. } Proving the constructed data structure is a 2-$\ftro$ with slack consists of two steps. First, we show that the constructed data structure is a 2-$\ftro(G,\Q)$. Second, we show that $|\Q| = \Omega(|\P|)$. The second part is quite straightforward. By the definition of fractional hitting set, $S$ intersects all but $1/10$ fraction of the subsets of $\F$. Hence, $|\Q| \ge \frac{3}{5} |\P|$. So it only remains to show the first part, for which we need to show that the query algorithm is correct for any pair in $\Q$.

  Now, to prove the correctness of the query algorithm, it suffices to show the following.
\begin{lemma}
\label{lem:dual-oracle-correctness}
For any pair $(s,t) \in \Q$ and two edges $f_1,f_2$, there is an $s-t$ path in $G\setminus \{f_1,f_2\}$ if and only if
\begin{enumerate}
\item[(a)] Either $\exists v \in T_{(s,t)}$ such that there is an $s-v$ and $v-t$ path in $G\setminus \{f_1,f_2\}$,
\item[(b)] Or, there is an $s-t$ path in $A_{(s,t)} \setminus \{f_1,f_2\}$.
\end{enumerate}
\end{lemma}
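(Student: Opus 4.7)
The plan is to prove the biconditional by handling the forward and backward directions separately. The forward direction carries most of the substance and proceeds by a case analysis on a nice path guaranteed by Claim~\ref{clm:special-path}, while the backward direction is mostly routine except for a subtlety involving the auxiliary edges.

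For the forward direction, I would assume an $s-t$ path exists in $G \setminus \{f_1,f_2\}$ and apply Claim~\ref{clm:special-path} to $H_{(s,t)}$ to obtain a nice path $\pi = P^i_{(s,t)}[s-u] \circ Q_{(s,t),u'}^i[u-u'] \circ P^j_{(s,t)}[u'-t]$ in $H_{(s,t)} \setminus \{f_1,f_2\}$ for some $i,j \in \{1,2\}$ and vertices $u,u'$. I would then distinguish three cases depending on the length of the two outer-strand portions of $\pi$. If $P^i_{(s,t)}[s-u]$ spans more than $L$ vertices, then $P^i_{(s,t)}[L]$ is a sub-segment of $\pi$; since $(s,t)\in\Q$, the list $T_{(s,t)}$ stores some $v\in V(P^i_{(s,t)}[L])\cap S$, and this $v$ lies on $\pi$, so both the $s-v$ and the $v-t$ subpaths of $\pi$ lie in $G\setminus\{f_1,f_2\}$, certifying (a). The symmetric case where $P^j_{(s,t)}[u'-t]$ spans more than $L$ vertices is handled analogously using the stored vertex from $V(P^j_{(s,t)}[-L])\cap S$. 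Otherwise $u\in V(P^i_{(s,t)}[L])$ and $u'\in V(P^j_{(s,t)}[-L])$, so both are vertices of $A_{(s,t)}$; since $Q_{(s,t),u'}^i[u-u']$ is edge-disjoint with both outer strands, by construction $A_{(s,t)}$ contains the auxiliary edge $(u,u')$, and the concatenation $P^i_{(s,t)}[s-u] \circ (u,u') \circ P^j_{(s,t)}[u'-t]$ is an $s-t$ path in $A_{(s,t)} \setminus \{f_1,f_2\}$ (its outer-strand segments inherit survival from $\pi$), certifying (b).

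For the backward direction from (a), concatenating the certified $s-v$ and $v-t$ paths in $G\setminus\{f_1,f_2\}$ directly yields an $s-t$ walk, hence a path. For the backward direction from (b), I would take an $s-t$ path in $A_{(s,t)}\setminus\{f_1,f_2\}$ and expand each auxiliary edge $(u,v)$ along it into a corresponding $u-v$ path in $H_{(s,t)}$ edge-disjoint with the two outer strands, producing an $s-t$ walk in $H_{(s,t)}\subseteq G$. The main obstacle will be selecting these expansions so that the resulting walk avoids both $f_1$ and $f_2$: my plan is to observe that the outer-strand portions of the $A_{(s,t)}$-path already avoid the failures, and to argue that whenever an auxiliary edge $(u,v)$ survives in $A_{(s,t)}\setminus\{f_1,f_2\}$ one can choose its realization in $H_{(s,t)}$ to avoid both failures, so that stitching these expansions with the surviving strand segments yields an $s-t$ walk in $H_{(s,t)}\setminus\{f_1,f_2\}$ and hence an $s-t$ path in $G\setminus\{f_1,f_2\}$.
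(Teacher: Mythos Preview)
Your forward direction and the trivial half of the backward direction match the paper. However, there are two genuine gaps.

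\textbf{Forward direction, third case.} When $u\in V(P^i_{(s,t)}[L])$ and $u'\in V(P^j_{(s,t)}[-L])$, you assert that $P^i_{(s,t)}[s-u]\circ(u,u')\circ P^j_{(s,t)}[u'-t]$ survives in $A_{(s,t)}\setminus\{f_1,f_2\}$. The strand segments do survive, but the auxiliary edge $(u,u')$ need not: it is possible that $f_1$ (or $f_2$) is literally the direct edge $(u,u')$ in $G$, while the coupling path $Q^i_{(s,t),u'}$ is some longer $u$--$u'$ path avoiding it. Then the nice path survives in $H_{(s,t)}\setminus\{f_1,f_2\}$, yet $(u,u')$ is deleted from $A_{(s,t)}$. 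The paper handles this by noting that in that event $f_1$ lies on neither outer strand, so one of $P^1_{(s,t)},P^2_{(s,t)}$ survives; if that strand has length exceeding $L$ you fall back to~(a) via $T_{(s,t)}$, otherwise the whole strand sits inside $A_{(s,t)}$ and certifies~(b).

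\textbf{Backward direction from (b).} Your key claim---that whenever an auxiliary edge $(u,v)$ survives in $A_{(s,t)}\setminus\{f_1,f_2\}$ one can choose its realization in $H_{(s,t)}$ to avoid both failures---is false. There may be a unique $u$--$v$ path in $H_{(s,t)}$ edge-disjoint with the outer strands, and it may contain $f_1$ (which is a different edge from $(u,v)$, so the auxiliary edge itself is not deleted). The paper's argument does not try to thread the realizations around the failures. Instead it observes: if some realization $R_{(u_j,v_j)}$ contains $f_1$, then $f_1$ lies on neither outer strand; moreover, since the existence of an $s$--$t$ path in $A_{(s,t)}\setminus\{f_1,f_2\}$ forces neither $f_1$ nor $f_2$ to be an $s$--$t$ cut-edge (no auxiliary edge can bypass a cut-edge), and the two outer strands meet only at cut-edges, at least one of $P^1_{(s,t)},P^2_{(s,t)}$ survives in $G\setminus\{f_1,f_2\}$. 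That surviving strand is your $s$--$t$ path. You need this fallback; the expansion argument alone does not suffice.
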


Note, $f_1$ or $f_2$ may not lie on the auxiliary graph $A_{(s,t)}$. In case an edge $f$ does not belong to $A_{(s,t)}$, $A_{(s,t)} \setminus \{f\}=A_{(s,t)}$. Before proving Lemma~\ref{lem:dual-oracle-correctness}, let us now argue on the correctness of the query algorithm by assuming this lemma. Observe, step 1 of the query algorithm checks whether $\exists v \in T_{(s,t)}$ such that there is an $s-v$ and $v-t$ path in $G\setminus \{f_1,f_2\}$. Step 2 prunes the edges from $\{f_1,f_2\}$ if some of them are not defined over the vertices of $A_{(s,t)}$, and let $F \subseteq \{f_1,f_2\}$ be the remaining set of edges after the pruning. Then in step 3, we decide whether there is an $s-t$ path in $A_{(s,t)} \setminus F$. It is thus immediate that the correctness of the query algorithm directly follows from Lemma~\ref{lem:dual-oracle-correctness}.

\begin{proof}[Proof of Lemma~\ref{lem:dual-oracle-correctness}]
Consider the subgraph $H_{(s,t)}$, which is a 2-$\ftrs(G,(s,t))$ (that we have constructed in the step 1 of the oracle construction procedure).

\textbf{($\Rightarrow$)} Suppose there is an $s-t$ path in $G \setminus \{f_1,f_2\}$. By Claim~\ref{clm:special-path}, there must be a nice $s-t$ path $R$ in $H_{(s,t)} \setminus \{f_1,f_2\}$. By definition of nice path, $R$ would be of the form $P_{(s,t)}^i[s-u] \circ Q_{(s,t),u'}^{i} \circ P_{(s,t)}^j[u'-t]$, for some $i,j \in \{1,2\}$, where the subpath $Q_{(s,t),u'}^{i}$ starts from the vertex $u$ and is edge-disjoint with $P_{(s,t)}^1$, $P_{(s,t)}^2$. ($Q_{(s,t),u'}^{i}$ could be empty; in that case, $R=P_{(s,t)}^i$ for some $i \in \{1,2\}$.) By the definition of $\Q$, $P_{(s,t)}^i[L]\cap S \ne \emptyset$ and $P_{(s,t)}^j[-L]\cap S \ne \emptyset$. Suppose $v \in P_{(s,t)}^i[L]\cap S$ and $v' \in P_{(s,t)}^j[-L]\cap S$ are stored in $T_{(s,t)}$. Now, if $u$ does not lie on $P_{(s,t)}^i[L]$, i.e., the length of the subpath $P_{(s,t)}^i[s-u]$ is at least $L$, then $R$ must pass through $v \in P_{(s,t)}^i[L] \cap S$. Similarly, if $u'$ does not lie on $P_{(s,t)}^j[-L]$, then $R$ must pass through $v' \in P_{(s,t)}^j[-L] \cap S$. In both the scenarios, for some $v \in T_{(s,t)}$, there is an $s-v$ and $v-t$ path in $H_{(s,t)} \setminus \{f_1,f_2\}$ (and thus in $G \setminus \{f_1,f_2\}$).

So let us now focus on the case when $u$ lies on $P_{(s,t)}^i[L]$ and $u'$ lies on $P_{(s,t)}^j[-L]$. Then by the construction of the auxiliary graph $A_{(s,t)}$, $u,u' \in V(A_{(s,t)})$. Moreover, since $Q_{(s,t),u'}^{i}$ is edge-disjoint with $P_{(s,t)}^1,P_{(s,t)}^2$, the edge $(u,u')$ is present in $A_{(s,t)}$. Hence, the path $R_A=P_{(s,t)}^i[s-u] \circ (u,u') \circ P_{(s,t)}^j[u'-t]$ is an $s-t$ path in $A_{(s,t)}$. Observe, $f_1,f_2 \not \in P_{(s,t)}^i[s-u] \cup P_{(s,t)}^j[u'-t]$. So, $R_A$ survives in $A_{(s,t)} \setminus \{f_1,f_2\}$ if and only if $f_1$ (or $f_2$) is not the edge $(u,u')$. Note, it is possible to have $f_1$ (or $f_2$) same as the edge $(u,u')$ and still the $u-u'$ subpath $Q_{(s,t),u'}^{i}$ exists in $H_{(s,t)}\setminus \{f_1,f_2\}$. (This is because in $G$, and also in $H_{(s,t)}$, there can be an edge $(u,u')$ and also another path $Q_{(s,t),u'}^{i}$ that is edge-disjoint with $P_{(s,t)}^1, P_{(s,t)}^2$.) Recall, $P_{(s,t)}^1$, $P_{(s,t)}^2$ intersect only at the $s-t$ cut-edges and $s-t$ cut-vertices. So, if $f_1$ (or $f_2$) is the edge $(u,u')$, since $(u,u') \not \in P_{(s,t)}^1 \cup P_{(s,t)}^2$, either $P_{(s,t)}^1$ or $P_{(s,t)}^2$ survives in $H_{(s,t)} \setminus \{f_1,f_2\}$. Let $P_{(s,t)}^i$ be a surviving path. If that path is of length more than $L$, then the Case~(a) happens. Otherwise, $P_{(s,t)}^i$ is included in the auxiliary graph $A_{(s,t)}$ (by the construction) and thus also part of $A_{(s,t)}\setminus \{f_1,f_2\}$.

\textbf{($\Leftarrow$)} If $\exists v \in T_{(s,t)}$ such that there is an $s-v$ path $R_1$ and $v-t$ path $R_2$ in $G \setminus \{f_1,f_2\}$, then clearly the path $R_1 \circ R_2$ is an $s-t$ path in $G\setminus \{f_1,f_2\}$. 

Now, suppose $\nexists v \in T_{(s,t)}$ such that there is an $s-v$ and $v-t$ path in $G \setminus \{f_1,f_2\}$, but there is an $s-t$ path $R'$ in $A_{(s,t)}\setminus \{f_1,f_2\}$. Then we have to show that there must also be an $s-t$ path in $H_{(s,t)}\setminus \{f_1,f_2\}$. Let us denote the path $R'$ as $R'_1\circ (u_1,v_1) \circ R'_2 \circ (u_2,v_2)\circ \cdots \circ R'_\ell \circ (u_\ell,v_\ell)\circ R'_{\ell+1}$, where each $R'_j$ consists of only the path edges (that means, is a subpath of either $P_{(s,t)}^1$ or $P_{(s,t)}^2$), and $(u_j,v_j)$ is some auxiliary edge of $A_{(s,t)}$. (Any of the $R_j$'s could be empty.) Now, if both $f_1$ and $f_2$ lie on $P_{(s,t)}^1 \cup P_{(s,t)}^2$, then from $R'$ we deduce a (directed) walk from $s$ to $t$ in $H_{(s,t)}\setminus \{f_1,f_2\}$ as follows: $R'_1\circ R_{(u_1,v_1)} \circ R'_2 \circ R_{(u_2,v_2)}\circ \cdots \circ R'_\ell \circ R_{(u_\ell,v_\ell)}\circ R'_{\ell+1}$, where $R_{(u_j,v_j)}$ denote a $u_j-v_j$ path in $H_{(s,t)}$ that is edge-disjoint with $P_{(s,t)}^1$ and $P_{(s,t)}^2$. Such a path $R_{(u_j,v_j)}$ exists; otherwise there would not be an auxiliary edge $(u_j,v_j)$ in $A_{(s,t)}$ (by the construction). From the above directed walk, we can easily get an $s-t$ path $R$ using a standard process of removing the walk-segment between two repeated occurrences of the same vertex.

Clearly, $R$ is a valid $s-t$ path in $H_{(s,t)}\setminus \{f_1,f_2\}$ when all the subpaths $R_{(u_j,v_j)}$ remain intact in $H_{(s,t)}\setminus \{f_1,f_2\}$. Suppose, $R_{(u_j,v_j)}$, for some $j\in [\ell]$, contains $f_1$ (or $f_2$) and thus gets destroyed in $H_{(s,t)}\setminus \{f_1,f_2\}$. In that case, we claim that either $P_{(s,t)}^1$ or $P_{(s,t)}^2$ remains intact. The reason is as follows: First of all, if $(a,b)$ is an $s-t$ cut-edge, then there is no path from a vertex $u \in P_{(s,t)}^i[s-a]$ to a vertex $u' \in P_{(s,t)}^j[b-t]$ for any $i,j \in \{1,2\}$, that is edge-disjoint with $P_{(s,t)}^1$ and $P_{(s,t)}^2$, and thus there is no auxiliary edge $(u,u')$ in $A_{(s,t)}$. This implies, if there is an $s-t$ cut edge $e \not \in \bigcup_{i\in \{1,2\}}(P_{(s,t)}^i[L] \cup P_{(s,t)}^i[-L])$, there cannot be any $s-t$ path in $A_{(s,t)}$. Further, all the $s-t$ cut-edges must be present in any $s-t$ path in $A_{(s,t)}\setminus \{f_1,f_2\}$. Hence, $R'$ is an $s-t$ path in $A_{(s,t)}\setminus \{f_1,f_2\}$ means none of $f_1$ and $f_2$ is an $s-t$ cut-edge. Recall, $P_{(s,t)}^1$ and $P_{(s,t)}^2$ only share the $s-t$ cut-edges and $s-t$ cut-vertices. Now, if $f_1$ (or $f_2$) lies on $R_{(u_j,v_j)}$, for some $j\in [\ell]$, since $R_{(u_j,v_j)}$ is edge-disjoint with $P_{(s,t)}^1$ and $P_{(s,t)}^2$, either $P_{(s,t)}^1$ or $P_{(s,t)}^2$ survives in $H_{(s,t)} \setminus \{f_1,f_2\}$ (and so in $G\setminus\{f_1,f_2\}$ as well). This completes the proof.
\end{proof}

Let us now combine Theorem~\ref{thm:dual-oracle-slack} with Lemma~\ref{lem:ftro-slack} to finish the proof of Theorem~\ref{thm:dual-oracle}.

\begin{proof}[Proof of Theorem~\ref{thm:dual-oracle}]
For single pair, by~\cite{Choudhary16}, we have a 2-$\ftro$ of size $O(n)$ and query time $O(1)$. By Theorem~\ref{thm:dual-oracle-slack}, we have a 2-$\ftro$ with slack of size $O(n\sqrt{|\P|})$ and query time $O(1)$. Theorem now directly follows from Lemma~\ref{lem:ftro-slack}.
\end{proof}


\section{Single Fault-tolerant Pairwise Reachability Oracle}
\label{section:single-ftro}

In this section we will focus on computing a compact pairwise reachability oracle for single failure. We first we discuss about the single vertex failure case and then extend that result to a single edge failure. For a graph $G=(V,E)$ and a vertex-set $S \subseteq V$, we use the notation $G - S$ to denote the graph obtained from $G$ by deleting all the vertices in $S$ (and all the edges incident on them). We prove the following theorem for single failure.

\begin{theorem}
Let $G=(V,E)$ be a digraph with $n$ vertices and $\P\subseteq V\times V$ be a set of demand pairs.
We can preprocess $G$ in polynomial time to construct an $O(n+|\P| \sqrt{n})$-sized oracle
that given any failing edge $f$ (or vertex~$x$), and any pair $(s,t)\in \P$, reports 
whether or not there is an $s$ to $t$ path in $G \setminus \{f\}$ (or $G - x$) in $O(1)$ time.
\label{theorem:1ftro}
\end{theorem}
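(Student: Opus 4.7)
The plan is to construct an $O(n+|\P|\sqrt n)$-size, $O(1)$-query-time oracle resilient to one vertex failure, and then lift it to an edge-failure oracle by a subdivision reduction. The central observation is that failing a vertex $x$ disconnects $s$ from $t$ iff $x$ is an $s$-$t$ cut-vertex, so the oracle need only decide, for $(s,t)\in\P$ and a query vertex $x$, whether $x$ belongs to the cut-vertex set $C_{s,t}$ of $(s,t)$. Since storing $C_{s,t}$ for every pair is too expensive, I would combine (i) a compact subroutine that answers such membership queries for a shared pool of cut-vertices, with (ii) an ownership scheme in which a small ``core'' of pairs covers the bulk of cut-vertices.

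For (i), fix an ordered set $C$ of $s$-$t$ cut-vertices under the ordering $\sigma$ induced by their position along every $s$-$t$ path. Inspired by Tarjan's loop-nesting forest~\cite{Tarjan76}, build a \emph{predecessor forest} in which the parent of $w$ is the latest $u<_\sigma w$ such that $u$ and $w$ remain strongly connected after removing all $\sigma$-predecessors of $u$ from $G$, and symmetrically a \emph{successor forest}. A reachability query between $y,z\in C$ in $G-x$ for $x,y,z\in C$ then reduces, by case analysis on the relative $\sigma$-positions of $x,y,z$, to a constant number of lowest-common-ancestor and level-ancestor queries on these two forests, each answerable in $O(1)$ time with $O(|C|)$-space preprocessing via standard data structures.

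For (ii), greedily form a core pair-set $\P_{\mathrm{core}}$: while some pair has at least $\sqrt n$ cut-vertices that are not yet \emph{owned}, add that pair to $\P_{\mathrm{core}}$ and declare its uncovered cut-vertices owned by it. Since ownerships are disjoint and total at most $n$, $|\P_{\mathrm{core}}|=O(\sqrt n)$. Let $C^\ast$ be the union of cut-vertex sets over the core pairs and apply the subroutine of (i) to $C^\ast$ at $O(n)$ total cost. For each non-core pair $(s,t)$, at most $\sqrt n$ of its cut-vertices lie outside $C^\ast$; store these in a static dictionary (Theorem~\ref{thm:dictionary}) for a total of $O(|\P|\sqrt n)$ space. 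Finally, for each $(s,t)\in\P$ and each core $(s',t')\in\P_{\mathrm{core}}$, store the first and last cut-vertex $a,b$ shared by $(s,t)$ and $(s',t')$, costing another $O(|\P|\sqrt n)$. To answer a query $(s,t,x)$: consult the dictionary for $(s,t)$ and return ``disconnected'' if $x$ is there; else if $x\in C^\ast$, locate the core pair $(s',t')$ owning $x$, retrieve the stored $a,b$, and test in $O(1)$ via the $C^\ast$-subroutine whether $x$ lies between $a$ and $b$ in the $(s',t')$-ordering and is an $(s',t')$-cut-vertex there, which characterizes whether $x\in C_{s,t}$; otherwise $x\notin C_{s,t}$ and return ``connected''.

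To extend to single edge failures, let $\C$ be the union over $(s,t)\in\P$ of $s$-$t$ cut-edge sets; by standard strong-connectivity-certificate arguments $|\C|=O(n)$. Subdivide every $e\in\C$ by a fresh vertex $v_e$ to form a graph $G'$ on $O(n)$ vertices, so that for any edge $f\in E$, reachability from $s$ to $t$ in $G\setminus\{f\}$ equals reachability in $G'-v_f$ when $f\in\C$, and is unaffected when $f\notin\C$. Storing $\C$ in a dictionary and running the vertex-failure oracle on $G'$ then yields the desired $O(n+|\P|\sqrt n)$-size, $O(1)$-query-time $1$-$\ftro(G,\P)$. The main obstacle I foresee is correctness of the sandwiched-between-$a$-and-$b$ test in step (ii): the orderings $\sigma$ for $(s,t)$ and $(s',t')$ are defined independently, so one must argue they are consistent on $C_{s,t}\cap C_{s',t'}$ and that the stored first/last shared vertices of the owning core pair faithfully delimit membership of a query $x\in C^\ast$ in $C_{s,t}$; this hinges on the monotone (tree-like) structure of cut-vertices along any $s$-$t$ path and its compatibility across pairs.
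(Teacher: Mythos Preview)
Your vertex-failure construction matches the paper's approach closely: the predecessor/successor forests on a cut-vertex set, the greedy core $\P_{\mathrm{core}}$ with threshold $\sqrt n$, the owned partition of $V_{\Q}=\bigcup_{q\in\P_{\mathrm{core}}}\cv(q)$, and the stored $\first/\last$ markers are all exactly what the paper does. One small point: to decide $x\in C_{s,t}$ the paper does not test ``$x$ lies between $a$ and $b$ in the $(s',t')$-ordering'' but rather asks the Theorem~\ref{theorem:apsp_cut_set} oracle whether $b$ is reachable from $a$ in $G-x$; this is what the forests actually support and sidesteps the ordering-consistency worry you raise.

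The genuine gap is in your edge-failure reduction. You assert that the set $\C$ of all $(s,t)$-cut-edges over $(s,t)\in\P$ has size $O(n)$ ``by standard strong-connectivity-certificate arguments,'' and then subdivide every edge of~$\C$. That bound is false. A strong-connectivity certificate only forces the inclusion of an edge $e=(a,b)$ when $a$ and $b$ are strongly connected in $G$ but not in $G\setminus\{e\}$; cut-edges for pairs in~$\P$ need satisfy no such condition (think of a DAG). In fact every edge of an optimal pairwise reachability preserver is a cut-edge for some pair, so $|\C|$ can be as large as $\Omega(n^{2/3}|\P|^{1/2})$ by the Abboud--Bodwin lower bound. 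Subdividing all of~$\C$ therefore blows the vertex count up to $n'=n+|\C|$, and your vertex oracle on the subdivided graph has size $O(n'+|\P|\sqrt{n'})$, which exceeds $O(n+|\P|\sqrt n)$.

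The paper repairs this by splitting $\C$ into $\C_0=\{e=(a,b)\in\C: a,b \text{ strongly connected in }G \text{ but not in }G\setminus\{e\}\}$ and $\C\setminus\C_0$. Only $\C_0$ is subdivided, and $|\C_0|\le 2n$ genuinely follows from the strong-connectivity certificate bound. For $e=(a,b)\in\C\setminus\C_0$ the paper proves (Lemma~\ref{lemma:edge-vertex-reduction}) that $e$ is a cut-edge for $p$ iff \emph{both} endpoints $a,b$ are cut-vertices for $p$, reducing the query to two vertex-failure queries with no subdivision needed. Incorporating this split is what makes the edge-failure oracle fit in $O(n+|\P|\sqrt n)$ space.
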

This theorem together with Theorem~\ref{thm:dual-oracle} implies Theorem~\ref{thm:single-oracle}. The above theorem is especially interesting as we get a {\ftro} of linear (in number of vertices) size
whenever number of pairs for which we have to preserve reachability is at most $O(\sqrt n)$.

\subsection{All-Pairs reachability restricted to an (s,t)-cut-set}
\label{sec:all-pair-cut}

One of the key ingredients of our construction of reachability oracle is the following theorem.
\begin{theorem}
Let $G$ be a digraph and $p=(s,t)$ be a vertex pair in $G$. Further, let $C$ be any subset of $(s,t)$ cut-vertices. 
Then, we can compute in polynomial time an $O(|C|)$ size oracle that for any three vertices $x,y,z\in C$ 
reports whether or not there is a $y$ to $z$ path in graph $G-x$ in $O(1)$ time.
\label{theorem:apsp_cut_set}
\end{theorem}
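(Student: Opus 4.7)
The plan is to exploit the linear total order $\sigma = (c_1, c_2, \ldots, c_k)$ on $C$ induced by the cut-vertex property: each $c_i$ lies on every $s$-$t$ path, so any two cut-vertices are comparable by the order in which they appear on such paths. I first compute $\sigma$ and store the rank $\text{rk}(c_i) = i$ of each cut-vertex in $O(|C|)$ space. For the \emph{forward} case of the query, $(x,y,z)$ with $y \le_\sigma z$, the answer is determined purely by ranks: every $y$ to $z$ path must traverse every intermediate cut-vertex of $C$, and cannot use any cut-vertex outside $[\text{rk}(y), \text{rk}(z)]$ (otherwise the $s$-$t$ extension of such a path would force cut-vertices to appear out of $\sigma$-order, contradicting cut-vertexhood). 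Hence the answer is ``yes'' iff $\text{rk}(x) \notin \{\text{rk}(y), \ldots, \text{rk}(z)\}$, which is $O(1)$.

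For the \emph{backward} case $y >_\sigma z$, I will follow the loop-nesting-forest idea from the technical overview and build two trees on $C$. In the predecessor forest $F_{\text{pred}}$, the parent of $c_i$ is the $c_j$ with the largest index $j < i$ such that $c_j$ and $c_i$ are strongly connected in $G \setminus \{c_1, \ldots, c_{j-1}\}$. The successor forest $F_{\text{succ}}$ is built symmetrically in the opposite direction. Both forests are computable in polynomial time by iterated SCC decompositions as earlier cut-vertices are peeled off, and each is preprocessed for $O(1)$-time lowest-common-ancestor and level-ancestor queries using standard linear-space oracles. The total storage is thus $O(|C|)$.

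The heart of the argument will be the equivalence: for $y >_\sigma z$, there is a $y$ to $z$ path in $G - x$ iff either (a) $z$ is an $F_{\text{pred}}$-ancestor of $y$ and $x$ is not an internal node of the $y$-to-$z$ chain in $F_{\text{pred}}$, or (b) the symmetric condition holds in $F_{\text{succ}}$. Assuming this equivalence, a query reduces to one LCA and one level-ancestor call per forest plus a few rank comparisons, so its cost is $O(1)$. The \emph{main obstacle} is proving the equivalence, especially the completeness direction: whenever some $y$ to $z$ path survives the deletion of $x$ in $G$, one of the two canonical forest chains also survives. My approach will rest on two structural lemmas. First, the set of cut-vertices that $c_i$ reaches backward is a contiguous suffix $\{c_{r(i)}, \ldots, c_{i-1}\}$ of $\sigma$: if $c_i$ reaches some $c_\ell$ backward, then via the forward arc $c_\ell \to c_{\ell+1}$ and the forward cut-vertex reachability $c_{\ell+1} \to c_i$, every $c_m$ with $\ell \le m < i$ is trapped in a common SCC with $c_i$. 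Second, the chain from $c_i$ up to $c_\ell$ in $F_{\text{pred}}$ encodes a canonical backward walk through progressively looser nested SCCs; the only cut-vertices whose deletion can sever this reach are the chain nodes themselves, and when $x$ is one of them the symmetric backbone in $F_{\text{succ}}$ offers an alternative unless $x$ is a bottleneck common to both backbones (in which case $\text{rk}(z) < \text{rk}(x) < \text{rk}(y)$ and an easy extension-to-$s$-$t$-path argument rules out any surviving $y$ to $z$ path). Carefully matching the nested-SCC hierarchy with the two forests is where I expect the bulk of the work to lie.
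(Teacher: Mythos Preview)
Your forward case ($y \leq_\sigma z$) is essentially right: the answer is ``yes'' iff $x$ does not lie in the $\sigma$-interval from $y$ to $z$, because extending any $y$-to-$z$ path by $\pi[s,y]$ and $\pi[z,t]$ yields an $s$-$t$ walk that must contain the cut-vertex $x$, and on $\pi$ the vertex $x$ occurs only between $y$ and $z$. (Your side claim that a $y$-to-$z$ path cannot visit cut-vertices \emph{outside} the interval is false---it can loop back---but you do not need it: when $x$ is outside, the single path $\pi[y,z]$ already avoids $x$.)

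The backward case, however, has a genuine gap: the claimed equivalence is false, and so is your second structural lemma. Take $C=\{c_1,\dots,c_5\}$ with $s=c_1$, $t=c_5$, edges $c_1\to c_2\to c_3\to c_4\to c_5$ and one back edge $c_4\to c_2$; all $c_i$ are $(s,t)$-cut-vertices. In $F_{\text{pred}}$ the parent of $c_3$ is $c_2$, because $c_2,c_3$ are strongly connected in $G-\{c_1\}$ via $c_3\to c_4\to c_2$. For the query $y=c_3$, $z=c_2$, $x=c_4$, the $F_{\text{pred}}$-chain from $y$ to $z$ is the single edge $c_3$--$c_2$; the failed vertex $c_4$ is not on it, so your condition~(a) declares ``reachable''. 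Condition~(b) does not help either: in $F_{\text{succ}}$ both $c_2$ and $c_3$ have parent $c_4$, so neither is an ancestor of the other. But in $G-c_4$ the vertex $c_3$ has no out-edge at all, so $c_3$ does not reach $c_2$. The underlying issue is that the parent relation in $F_{\text{pred}}$ certifies strong connectivity only after deleting a \emph{prefix} of $\sigma$; the witnessing backward path may, and here does, route through a cut-vertex of strictly larger $\sigma$-rank than $y$. Hence ``$x$ off the chain'' cannot guarantee survival of the backward reach.

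What the paper actually does with the two forests is more limited: via LCA/LA it extracts, for the sub-case $z<_\sigma x<_\sigma y$, the vertex $y_0$ just after $x$ that is strongly connected to $y$ in $G-x$ and the vertex $z_0$ just before $x$ that is strongly connected to $z$ in $G-x$ (Lemmas~\ref{lemma:find_successor} and~\ref{lemma:find_predecessor}). Deciding whether $y_0$ reaches $z_0$ in $G-x$ then requires an \emph{additional} $O(|C|)$-size table $H(\cdot)$, where $H(a)$ records the earliest $b\leq_\sigma a$ such that $a$ reaches $b$ by a path avoiding all cut-vertices strictly between $b$ and $a$. This table is precisely the missing ingredient in your plan; without it (or an equivalent device) the two forests alone do not determine the answer.
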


Let $\sigma=(w_1,w_2,\ldots,w_{|C|})$ be the cut-vertices in $C$ in the order they appear on each $s$ to $t$ path in $G$,
and $\pi$ be an arbitrary $s$ to $t$ path in $G$.
For simplicity, we assume $s,t\in C$. We use the notation $<_{\sigma}$ (and $\le_{\sigma}$) to denote the relative ordering of any two elements in the ordered set $\sigma$. For each $v\in C$, let $L(v)$ be the set of predecessors of $v$ in $\sigma$, and
 $R(v)$ be the set of successors of $v$ in $\sigma$.
 Note that a node itself is not included in its predecessor and successor set.
We borrow the ideas of loop-nesting-forests~\cite{Tarjan76} to compute two auxiliary rooted forests
$T_{pred},T_{succ}$ with vertex-set $C$ as follows.

In forest $T_{pred}$, the parent of $w\in C$ is the immediate predecessor (if it exists) of $w$ in $\sigma$, say $u$, that satisfies $u,w$ 
are strongly connected in graph $G-L(u)$.
In forest $T_{succ}$, the parent of $w\in C$ is the immediate successor (if it exists) of $w$ in $\sigma$, say $u$, that satisfies $u,w$ 
are strongly connected in graph $G-R(u)$. 
We compute the Lowest Common Ancestor (LCA)
and Level Ancestor (LA) data-structure over the two trees that takes $O(|C|)$
space and answers any LCA/LA query in constant time.

\begin{lemma}
For any $x,y\in C$ satisfying $x<_{\sigma}y$, we can find the immediate successor of $x$
strongly-connected to $y$ in $G-x$ in $O(1)$ time.
\label{lemma:find_successor}
\end{lemma}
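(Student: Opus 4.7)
The plan is to prove that the sought immediate successor is exactly $u_k$, where $u_k$ is defined as the $\sigma$-smallest $T_{pred}$-ancestor of $y$ satisfying $u_k>_{\sigma}x$, and that $u_k$ can be returned in $O(1)$ time via the preprocessed structures. To extract $u_k$ I would compute $a:=\mathrm{LCA}_{T_{pred}}(x,y)$ and then set $u_k:=\mathrm{LA}_{T_{pred}}(y,\mathrm{depth}(a)+1)$, i.e., the ancestor of $y$ one level below the LCA. This is correct whether or not $x$ itself lies on the $T_{pred}$-path of $y$.

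For correctness, I would first show $u_k$ is a valid candidate. Chaining the defining property of $T_{pred}$ up the ancestor path from $y$ to $u_k$ (an easy induction on the chain gives that every $T_{pred}$-ancestor $v$ of $y$ satisfies: $v$ and $y$ are strongly connected in $G-L(v)$) yields that $u_k$ and $y$ are strongly connected in $G-L(u_k)$. Since $u_k>_{\sigma}x$ forces $x\in L(u_k)$, we have $G-L(u_k)\subseteq G-x$, so the strong connectivity is inherited. Next, I would show that the set $S:=\{z\in C\setminus\{x\}:z\text{ and }y\text{ are strongly connected in }G-x\}$ is a non-empty interval in $\sigma$ lying entirely above $x$: for $a<_{\sigma}b<_{\sigma}c$ with $a,c\in S$ and $b\ne x$, every $a\to c$ and $c\to a$ path in $G$ must traverse $b$, else gluing on the $\pi$-prefix $s\to a$ and the $\pi$-suffix $c\to t$ (both of which avoid $b$ by the definition of $\sigma$) would produce an $s\to t$ walk avoiding the cut-vertex $b$; hence $b\in S$. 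The same cut-vertex argument applied to $x$ rules out any $z<_{\sigma}x$ from lying in $S$.

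It then remains to deduce $z_{\min}:=\min S=u_k$; one direction $z_{\min}\le_{\sigma}u_k$ is immediate since $u_k\in S$. The reverse direction, which I expect to be the main technical step, is to show that $z_{\min}$ is itself a $T_{pred}$-ancestor of $y$, i.e., $z_{\min}$ and $y$ are strongly connected in $G-L(z_{\min})$. The plan is to argue that any $z_{\min}\leftrightarrow y$ path in $G-x$ can be chosen to avoid every cut-vertex in $L(z_{\min})$: a cut-vertex $w<_{\sigma}x$ appearing on such a path would yield a $w\to y$ (or $w\to z_{\min}$) subpath in $G-x$, which is impossible by the cut-vertex property of $x$; while a cut-vertex $v$ with $x<_{\sigma}v<_{\sigma}z_{\min}$ appearing on such a path would combine with the given strong connectivity of $z_{\min}$ and $y$ in $G-x$ to place $v$ itself in $S$, contradicting the minimality of $z_{\min}$. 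Since $L(z_{\min})$ contains only cut-vertices, the surviving paths live in $G-L(z_{\min})$, establishing $z_{\min}$ as a $T_{pred}$-ancestor of $y$; combined with $z_{\min}>_{\sigma}x$ and the minimality of $u_k$ among such ancestors, this forces $z_{\min}\ge_{\sigma}u_k$, completing the proof.
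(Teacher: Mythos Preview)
Your approach coincides with the paper's: both return the child $b$ of $a=\mathrm{LCA}_{T_{pred}}(x,y)$ on the $a$--$y$ path (your $u_k$) via one LCA and one LA query, and both argue this equals the $\sigma$-minimal successor of $x$ strongly connected to $y$ in $G-x$. Your two-sided argument (chain the parent relation to place $u_k$ in $S$; show $z_{\min}$ and $y$ are strongly connected in $G-L(z_{\min})$ so that $z_{\min}$ is a $T_{pred}$-ancestor of $y$) is exactly the paper's, just reorganised.

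Two points to tighten. First, your extraction assumes $\mathrm{LCA}_{T_{pred}}(x,y)$ exists, but $T_{pred}$ is a forest and $x$ may lie in a different tree than $y$; the paper handles this separately by outputting the root $r$ of $y$'s tree when $x<_\sigma r$. Second, when $x$ is off the root-to-$y$ path you assert without proof that the returned vertex $b$ satisfies $b>_\sigma x$; this needs the paper's observation that if $b\le_\sigma x<_\sigma y$ then, since $b,y$ are strongly connected in $G-L(b)$ and every $b\to y$ path must hit the cut-vertex $x$, one gets $x$ in $b$'s subtree, a contradiction. A minor slip: in your interval argument the claim that every $c\to a$ path traverses $b$ is false in general (only the $a\to c$ direction is forced by the cut-vertex property), but only that direction is needed to conclude $b\in S$, so the conclusion is unaffected.
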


\begin{proof}
Let $r$ be root of tree in $T_{pred}$ containing $y$. If $x<_\sigma r$, then we output $r$ since no 
predecessors of $r$ are strongly connected with $y$. So let us suppose $r\leqslant_\sigma x$. 
In such a case $x,y,r$ are contained in same tree in $T_{pred}$ (with root $r$),
since $x,y,r$ are strongly-connected in $G$.

Let $w$ to be the immediate successor of $x$ satisfying $y,w$ are strongly-connected in $G-x$.
Then $x<_\sigma w\leq_\sigma y$. We will prove  $w$ is also the immediate successor of $x$ satisfying 
$y,w$ are strongly-connected in $G-L(w)$. Each $y -  w$ path is disjoint with set $L(w)$. 
Indeed, if a $y$ to $w$ path contains a vertex in $L(w)\cap R(x)$, then it violates the definition of $w$;
and if a $y$ to $w$ path contains a vertex in $L(x)$ then it violates the fact that any path from $L(x)$ to $R(x)$
must pass through $x$. Now $\pi[w,y]$ is a $w$ to $y$ path disjoint with $L(w)$. 
Hence, $y,w$ are strongly-connected in $G-L(w)$.
Clearly, for any $x<_\sigma w'\leq_\sigma y$  satisfying 
$y,w'$ are strongly-connected in $G-L(w')$, we have $y,w'$ are strongly-connected in $G-x$.
This proves, $w$ is indeed the immediate successor of $x$ satisfying 
$y,w$ are strongly-connected in $G-L(w)$.

Let $a$ be the LCA of $x$ and $y$ in $T_{pred}$. So, $a\leq_\sigma x<_\sigma y$.
Let $b$ be the child of $a$ on $a -  y$ path in $T_{pred}$. The vertices $a,b$ are computable in  
$O(1)$ time using LCA/LA data-structures.	
It is easy to verify that $y,b$ are strongly-connected in $G-L(b)$.

We next prove $x <_\sigma b$. Observe descendants of $b$ in $T_{pred}$ are precisely 
the set of vertices in $R(b)$ that are strongly connected to $b$ in $G-L(b)$. 
Thus if $x$ is identical to or a successor of $b$, then $b,x,y$ must be strongly connected 
 in $G-L(b)$ which violates the fact that $x$ is not contained in the sub-tree of $b$.

Finally, we will argue that $w=b$. Let us assume on contrary that $w<_\sigma b$. 
Let $c$ be the immediate predecessor of~$b$, satisfying $x<_\sigma c$ and
$y,c$ are strongly-connected in $G-L(c)$. Observe, $c$ is also the immediate predecessor 
of $b$, satisfying $x<_\sigma c$ and $b,c$ are strongly-connected in $G-L(c)$.
Recall, $a$ is the parent of $b$, so it follows $c\leq_\sigma a$.
This results in a contradiction as $a\leq_\sigma x<_\sigma c$.
Therefore, $w=b$.

The lemma follows from the fact that $b$ is computable in  
$O(1)$ time
\end{proof}

Similar to Lemma~\ref{lemma:find_successor}, we obtain the following lemma.

\begin{lemma}
For any $x,y\in C$ satisfying $y<_{\sigma}x$, we can find the immediate predecessor of $x$
strongly-connected to $y$ in $G-x$ in $O(1)$ time.
\label{lemma:find_predecessor}
\end{lemma}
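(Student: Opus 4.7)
The plan is to mirror the proof of Lemma~\ref{lemma:find_successor} in its entirety, swapping the roles of predecessors and successors, and therefore swapping the role of the tree $T_{pred}$ with $T_{succ}$. Given $x, y \in C$ with $y <_\sigma x$, the target vertex $w$ is the immediate predecessor of $x$ (so $y \leq_\sigma w <_\sigma x$) such that $y, w$ are strongly connected in $G - x$. The proposal is to certify $w$ through the $T_{succ}$ tree and to locate it via an LCA plus one Level-Ancestor query.

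First, I would reduce to the case where $y$ lies in the same tree of $T_{succ}$ as $x$. Let $r'$ be the root of the tree in $T_{succ}$ containing $y$. If $x <_\sigma r'$ (i.e.\ $r'$ is a successor of $x$), then $y$ and $x$ cannot be in the same $T_{succ}$-tree, and the desired $w$ equals $r'$ in the boundary case (or, more cleanly, we fall into a trivial regime we handle separately). Otherwise $r' \leq_\sigma x$, and since $y <_\sigma x \leq_\sigma r'$ would contradict rootship, we actually get $r' \leq_\sigma y$, so $x, y, r'$ all lie in one $T_{succ}$-tree, as required for the main step.

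Next I would prove the key structural claim: the desired $w$ satisfies $y, w$ strongly connected in $G - R(w)$, and $w$ is the \emph{largest} such immediate predecessor of $x$ in $\sigma$ with this property. The forward direction uses that a $w \to y$ path in $G - x$ cannot enter $R(w) \cap L(x)$ (it would violate the choice of $w$ as the immediate predecessor) nor $R(x)$ (every such path would have to re-cross $x$, which is deleted), so it avoids $R(w)$; and $\pi[y,w]$ supplies the opposite direction of the strong connectivity, disjoint from $R(w)$. The reverse direction is straightforward: strong connectivity in $G - R(w)$ implies strong connectivity in $G - x$ when $w <_\sigma x$.

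Having reformulated $w$ in terms of the $T_{succ}$-parent relation, I locate it as follows. Let $a'=\mathrm{LCA}_{T_{succ}}(x,y)$ and $b'$ the child of $a'$ on the $a' \to y$ path in $T_{succ}$, both obtained in $O(1)$ time. The descendants of $b'$ in $T_{succ}$ are exactly the vertices in $L(b')$ strongly connected to $b'$ in $G - R(b')$, so $y, b'$ are strongly connected in $G - R(b')$. Moreover $b' <_\sigma x$ (otherwise $x$ would be a descendant of $b'$, contradicting $a'$ being the LCA), and a minimality argument identical to the one in Lemma~\ref{lemma:find_successor} (any candidate strictly larger than $b'$ would be a $T_{succ}$-ancestor of $b'$ strictly below $a'$, contradicting that $a'$ is the parent of $b'$) forces $w=b'$. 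Since $b'$ is returned in constant time, the lemma follows. The main obstacle I anticipate is bookkeeping the inequality directions and the exact subset ($L$ vs.\ $R$) that each path must avoid; these have to be flipped consistently throughout the entire argument, but no genuinely new idea is needed beyond the symmetry.
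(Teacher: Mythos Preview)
Your approach---mirroring the proof of Lemma~\ref{lemma:find_successor} by swapping $T_{pred}$ with $T_{succ}$ and predecessors with successors---is exactly what the paper does; indeed, the paper gives no separate argument and simply states that the lemma follows ``similarly'' to the previous one.

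One concrete bookkeeping slip to fix (precisely the hazard you flagged): in $T_{succ}$, ancestors are \emph{successors} in $\sigma$, so the root $r'$ of $y$'s tree satisfies $y \leq_\sigma r'$. Hence the trivial case in which you output $r'$ directly is $r' <_\sigma x$ (so that $r'$ is genuinely a predecessor of $x$), not $x <_\sigma r'$ as you wrote; in the complementary case $x \leq_\sigma r'$ you then have $y <_\sigma x \leq_\sigma r'$ automatically, placing $x,y,r'$ in the same $T_{succ}$-tree---your derivation of ``$r' \leq_\sigma y$'' there is backward. With these inequalities corrected, the rest of your symmetric argument goes through.
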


For any $a\in C$, we store in $H(a)$ the first vertex $b$ in $\sigma$ satisfying 
there is an $a$ to $b$ path in $G-(R(b)\cap L(a))$.
Notice $H(a)\leq_\sigma a$.

Now given a triplet $x,y,z\in C$, we answer $y$ to $z$ reachability query in $G-x$ as follows.

\paragraph{Case 1: $x$ appears between $y$ and $z$ in $\sigma$:}
If $y <_{\sigma} x <_{\sigma} z$, then we answer unreachable as there is no path from $L(x)$ to $R(x)$ in $G-x$.
If $z <_{\sigma} x <_{\sigma} y$, then we proceed as follows.
Use Lemma~\ref{lemma:find_successor} to compute the first successor of $x$, say $y_0$,
such that $y,y_0$ are strongly-connected in $G-x$.
Similarly, compute the first predecessor of $x$, say $z_0$,
such that $z,z_0$ are strongly-connected in $G-x$, using Lemma~\ref{lemma:find_predecessor}.
Now it suffices to check whether or not there is a $y_0$ to $z_0$ path in $G-x$.
If there exists a path $\pi_0$ from $y_0$ to $z_0$ path in $G-x$,
then, by definition of $x_0$ and $y_0$, it follows that $\pi_0$ is disjoint with $R(z_0)\cap L(y_0)$.
Thus, there is path from $y_0$ to $z_0$ (or equivalently, there is a path from $y$ to $z$)
in $G-x$ if and only if $H(y_0)\leq_\sigma z_0$. 

\paragraph{Case 2: $x$ appears before $y$ and $z$ in $\sigma$:}
If $x <_{\sigma} y <_{\sigma} z$, then $\pi[y,z]$ is a path from $y$ to $z$ in $G-x$.
Let us now consider the case $x <_{\sigma} z <_{\sigma} y$.
Use Lemma~\ref{lemma:find_successor} to compute the first successor of $x$, say $y_0$,
such that $y,y_0$ are strongly-connected in $G-x$.
Observe that there is no path from $y_0$ to vertices in $R(x)\cap L(y_0)$ in $G-x$.
Thus, there is path from $y_0$ to $z$ (or equivalently, a path from $y$ to $z$)
in $G-x$ if and only if $y_0\leq_\sigma z$. 

\paragraph{Case 3: $x$ appears after $y$ and $z$ in $\sigma$:}
If $y <_{\sigma} z <_{\sigma}  x$, then $\pi[y,z]$ is a path from $y$ to $z$ in $G-x$.
Let us now consider the case $z <_{\sigma} y <_{\sigma} x$.
Use Lemma~\ref{lemma:find_predecessor} to compute the first predecessor of $x$, say $z_0$,
such that $z,z_0$ are strongly-connected in $G-x$.
Observe that $z_0$ is unreachable from vertices in $R(z_0)\cap L(x)$ in $G-x$.
Thus, there is path from $y$ to $z_0$ (or equivalently, a path from $y$ to $z$)
in $G-x$ if and only if $y\leq_\sigma z_0$.\\

The conditions stated in the above three cases are verifiable in constant time.
Our oracle takes $O(|C|)$ space as it only requires the LCA/LA data-structure 
over trees $T_{pred}$ and $T_{succ}$, and the mapping $H$.
This completes the proof Theorem~\ref{theorem:apsp_cut_set}.

\subsection{Reachability Oracle to Handle Vertex Failures}
\label{section:FTRO_vertex}

We will now present a pairwise reachability oracle for handling vertex failures, and 
later extend this to handle edge failures in the Subsection~\ref{section:FTRO_edge}.
For each pair $p=(s,t)\in \P$ define \emph{cut-vertex-set} for $p$, denoted by $\cv(p)$, as the set of all vertices 
$x$ in $G$ satisfying there is no $s$ to $t$ path in $G-x$.

Let $\alpha\geq 1$ be an integer parameter.
Compute a subset $\Q$ of $\P$ as follows.
Iterate over pairs in $\P$, and include a pair $p$ in $\Q$  if and only if $\cv(p)\setminus \big( \cup_{q\in \Q} \cv(q) \big)$
is larger than $\alpha$.  The following simple observation is due to the fact that each pair in $\Q$ is associated with $\alpha$
distinct cut vertices. 

\begin{observation}
$|\Q|\leq (n/\alpha)$.
\end{observation}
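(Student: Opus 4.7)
The plan is to exhibit, for each pair $p$ added to $\Q$, a collection of at least $\alpha$ cut-vertices that are ``charged'' uniquely to $p$, and then observe that these charged sets are disjoint across pairs in $\Q$ and are all contained in $V$, so that a simple counting argument yields the bound.

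Concretely, I would order the pairs of $\Q$ as $p_1, p_2, \ldots, p_{|\Q|}$ according to the iteration order of the greedy procedure. For each $p_i$, let $\Q_{i-1} := \{p_1, \ldots, p_{i-1}\}$ denote the set of pairs already in $\Q$ at the moment $p_i$ is considered. By the inclusion rule of the construction, $p_i$ was added precisely because
\[
\big|\cv(p_i) \setminus \bigcup_{q \in \Q_{i-1}} \cv(q)\big| > \alpha.
\]
Define $N_i := \cv(p_i) \setminus \bigcup_{q \in \Q_{i-1}} \cv(q)$ to be this set of ``newly covered'' cut-vertices associated with $p_i$; then $|N_i| > \alpha$.

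The key observation is that the sets $N_1, N_2, \ldots, N_{|\Q|}$ are pairwise disjoint: for any $i < j$, every vertex of $N_j$ is outside $\bigcup_{q \in \Q_{j-1}} \cv(q)$, which contains $\cv(p_i) \supseteq N_i$. Hence $\bigsqcup_i N_i \subseteq V$, and therefore
\[
|\Q| \cdot \alpha < \sum_{i=1}^{|\Q|} |N_i| \leq n,
\]
which gives $|\Q| \leq n/\alpha$ as claimed. There is no real obstacle here; the only mild point to verify is that ``larger than $\alpha$'' is interpreted as strict inequality so that charging $\alpha$ vertices per pair suffices, which is exactly the greedy selection rule stated in the excerpt.
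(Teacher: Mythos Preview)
Your argument is correct and is precisely the charging argument the paper has in mind: each pair added to $\Q$ contributes more than $\alpha$ fresh cut-vertices disjoint from those of earlier pairs, so $|\Q|\cdot\alpha < n$. The paper states this as a one-line observation without spelling out the disjointness of the $N_i$'s, but your expanded version matches it exactly.
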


Define $V_{\Q}:=\cup_{q\in \Q}\cv(q)$ to be set of cut vertices for pairs in $\Q$. 
Then for every $p$ not in $\Q$, size of $\cv(p)\setminus V_{\Q}$ is bounded by $\alpha$. 

It turns out that the most non-trivial scenario is when failing vertex lies in $V_{\Q}$.
Let $q_1,\ldots,q_{\ell}$ be pairs in~$\Q$ (here $\ell=|\Q|$), and $V_1,\ldots,V_\ell$ be a partition of $V_\Q$
satisfying $V_i\subseteq \cv(q_i)$. 
For each $p\in \P$ and $i\in [1,\ell]$, let $\first(p,i)$ and $\last(p,i)$ be 
respectively the first and last cut vertices for pair $p$ that lies in set $V_i$.
Since all the cut vertices for any pair, say $(s,t)$, always appear in same 
order on each $s -  t$  path, the notion of first and last cut vertices is well defined.

In order to efficiently handle failures in set $V_i$, for $i\in[1,\ell]$, we present in next lemma a linear size data-structure for 
all-pairs reachability as long as the query pair and the failing vertex are restricted to $V_i$.

\begin{lemma}
For each $i\in[1,\ell]$, we can compute in polynomial time an $O(|V_i|)$ size data-structure that
given any triplet  $x,y,z\in V_i$ reports whether or not there is a $y -  z$ path in $G-x$ in $O(1)$ time.
\label{lemma:DS_Vi}
\end{lemma}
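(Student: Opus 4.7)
The plan is to observe that this lemma follows as an essentially immediate corollary of Theorem~\ref{theorem:apsp_cut_set}. Recall the setup: the set $\Q \subseteq \P$ was constructed so that for each $q_i = (s_i,t_i) \in \Q$ we have an associated $V_i \subseteq \cv(q_i)$, i.e.\ every vertex in $V_i$ is a cut-vertex for the pair $q_i$ in $G$. This matches precisely the hypothesis of Theorem~\ref{theorem:apsp_cut_set}, which takes a pair $(s,t)$ together with any subset $C$ of its $(s,t)$-cut-vertices and produces an $O(|C|)$-sized, $O(1)$-query-time oracle that decides $y$-to-$z$ reachability in $G - x$ for any triple $x,y,z \in C$.

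Concretely, I would proceed as follows. First, invoke Theorem~\ref{theorem:apsp_cut_set} on $G$ with the pair $(s,t) = q_i = (s_i,t_i)$ and the cut-vertex subset $C = V_i$. Since $V_i \subseteq \cv(q_i)$ by construction, the hypotheses of the theorem are satisfied. The theorem then yields a data-structure of size $O(|V_i|)$ that, given any $x,y,z \in V_i$, answers whether there is a $y$-to-$z$ path in $G - x$ in $O(1)$ time. Finally, note that the construction of Theorem~\ref{theorem:apsp_cut_set} runs in polynomial time (one builds the ordering $\sigma$ along any $s_i$-to-$t_i$ path restricted to $V_i$, assembles the two auxiliary forests $T_{pred}, T_{succ}$ by testing strong connectivity of pairs of consecutive candidate ancestors in appropriate subgraphs, and preprocesses LCA/LA structures on these forests, plus the mapping $H$). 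This gives exactly the size, query time, and construction time required by the lemma.

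I do not anticipate any real obstacle here, since all the technical work has already been done in Theorem~\ref{theorem:apsp_cut_set}; the only point worth being careful about is that the ordering $\sigma$ of $V_i$ is well-defined (all vertices in $V_i$ being $(s_i,t_i)$-cut-vertices, they appear in a common order on every $s_i$-to-$t_i$ path of $G$), and that $s_i,t_i$ need not themselves belong to $V_i$. The latter is handled by the mild convention used in the proof of Theorem~\ref{theorem:apsp_cut_set} (one may add $s_i,t_i$ as sentinels to $C$ at the cost of only $O(1)$ extra space, which does not affect the $O(|V_i|)$ bound).
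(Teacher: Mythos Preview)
Your proposal is correct and matches the paper's own treatment: the paper simply states that the lemma is immediate from Theorem~\ref{theorem:apsp_cut_set}, exactly as you argue by instantiating that theorem with the pair $q_i$ and $C=V_i\subseteq\cv(q_i)$. Your extra remarks about the well-definedness of $\sigma$ and the sentinel handling of $s_i,t_i$ are reasonable but not strictly needed beyond what the theorem already provides.
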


The proof of the above lemma is immediate from Theorem~\ref{theorem:apsp_cut_set}. Now using the
above we show the desired bound on the reachability oracle for handling vertex failures.

Let $F$ be a function that maps each $x\in V_\Q$ to the index $i=F(x)$ satisfying $x\in V_i$.
Our oracle comprises of function $F$;  for each $p\in \P$ and $i\in[1,\ell]$, the vertices $\first(p,i)$ and $\last(p,i)$;
a dictionary ${\D}_{\Q}$ to maintain $V_{\Q}$; and dictionaries ${\D}_{p}$ to maintain $\cv(p)\setminus V_\Q$, for $p\in \P$.

The query algorithm given any failing vertex $x$ and pair $p=(s,t)\in \P$ proceeds as follows.
If $x\notin V_\Q$ then it reports `Reachable' if and only if $x\notin (\cv(p)\setminus V_\Q)$.
Using the dictionaries ${\D}_{\Q}$ and ${\D}_{p}$, this is verifiable in constant time.
If $x\in V_\Q$, then the next step is to extract in constant time index $i=F(x)$ and vertices $a=\first(p,i)$, $b=\last(p,i)$.
By definition, $a$ and $b$ are cut vertices for $p$ as well as $q_i\in \Q$.
Since~$a,b$ are cut vertices for $p$, and $a$ appears before $b$ on any $s -  t$ path, 
$t$ is reachable from $s$ in $G-x$ if and only if $b$ is reachable from $a$ in $G-x$. The later condition is verifiable in 
constant time using Lemma~\ref{lemma:DS_Vi}.

The space complexity of the oracle is $O(n+\ell\cdot |\P|+|\P|\cdot \alpha)$, where $\ell=|\Q|\leq (n/\alpha)$.
Substituting the value of $\alpha$ as $\sqrt{n}$ provides a bound of $O(n+ |\P|\sqrt n)$ on the size of oracle.

This proves the bound of $O(n+ |\P|\sqrt n)$ on the size of pair-wise
reachability oracle for handling a single vertex failure in constant time.

\subsection{Reachability Oracle to Handle Edge Failures}
\label{section:FTRO_edge}

We will now extend the above result to handle edge failures. Let $\C\subseteq E(G)$ be the set of all those edges $e$ that are {\em cut-edge} with respect to some pair $p\in \P$.
Let $\C_0$ be the subset containing those edges $e=(x,y)$ in $\C$ that satisfy that $x$ and $y$ are strongly connected in $G$,
but not in $G\setminus \{e\}$. 

We compute two dictionaries $\D_\C$ and $\D_{\C_0}$ so that given an edge $e\in E(G)$ 
it is verifiable in constant time whether or not $e$ lies in $\C$ and/or $\C_0$.


Our main idea for handling edge failures is due to the following simple observation.

\begin{lemma}
For any $e=(a,b)\in \C\setminus \C_0$ and any pair $p\in \P$, 
$e$ is a cut-edge for $p$ if and only if both $a$~and $b$ are cut-vertices for $p$.
\label{lemma:edge-vertex-reduction}
\end{lemma}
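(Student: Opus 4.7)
The plan is to first convert the hypothesis $e=(a,b)\in \C\setminus \C_0$ into two clean structural facts about reachability between $a$ and $b$ in $G$, and then derive both implications of the biconditional from these facts.

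First I would establish the following key claim: if $e=(a,b)\in \C\setminus\C_0$, then in the graph $G\setminus\{e\}$, vertex $b$ is unreachable from $a$, and vertex $a$ is unreachable from $b$ (so $e$ is the only $a$-to-$b$ path in $G$, and there is no $b$-to-$a$ path at all). For the first half, suppose towards contradiction that some path $\pi$ from $a$ to $b$ existed in $G\setminus\{e\}$. Because $e\in \C$, there is some pair $q=(s',t')\in \P$ for which $e$ is a cut-edge, so every $s'$-to-$t'$ path in $G$ uses the edge $(a,b)$. Pick any such $s'$-$t'$ path $P$, and splice in $\pi$ in place of the edge $e$; this yields an $s'$-to-$t'$ path in $G\setminus\{e\}$, contradicting that $e$ is a cut-edge for $q$. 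For the second half, $a$ and $b$ are not strongly connected in $G$ (otherwise $e\in \C_0$, because we just showed $b$ becomes unreachable from $a$ in $G\setminus\{e\}$, which is exactly the defining property of $\C_0$ for an edge whose endpoints are strongly connected in $G$). Since $e$ gives an $a$-to-$b$ path in $G$, the failure of strong connectivity forces that no $b$-to-$a$ path exists in $G$.

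Next I would argue the two directions of the biconditional. The $(\Rightarrow)$ direction is the easy one: if $e$ is a cut-edge for $p=(s,t)$, every $s$-to-$t$ path in $G$ traverses $e=(a,b)$, and hence passes through both $a$ and $b$, making them cut-vertices for $p$. For the $(\Leftarrow)$ direction, assume $a$ and $b$ are both cut-vertices for $p$. Because no $b$-to-$a$ path exists in $G$, every $s$-to-$t$ path in $G$ must visit $a$ strictly before $b$. Now suppose for contradiction that some $s$-to-$t$ path $R$ survives in $G\setminus\{e\}$. Since $a,b$ are cut-vertices, $R$ still contains $a$ before $b$, so its subpath $R[a\text{-}b]$ is an $a$-to-$b$ path in $G\setminus\{e\}$, which contradicts the key claim. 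Hence $e$ is a cut-edge for $p$.

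The only subtle step is the key claim, and in particular the use of the hypothesis $e\notin \C_0$ (as opposed to just $e\in \C$) to rule out the case that $a,b$ are strongly connected in $G$. Once that observation is in place, the rest of the argument is direct, and the lemma is essentially a bookkeeping statement that reduces questions about edge failures in $\C\setminus \C_0$ to questions about the failure of the two endpoint vertices.
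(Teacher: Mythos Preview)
Your proof is correct and follows essentially the same approach as the paper: both arguments use $e\in\C$ to deduce that $b$ is unreachable from $a$ in $G\setminus\{e\}$, use $e\notin\C_0$ to deduce that $a$ is unreachable from $b$ in $G$, and then combine these with the cut-vertex hypothesis to force every $s$-$t$ path through the edge $e$. The only cosmetic difference is that you isolate these two reachability facts as a preliminary ``key claim'' whereas the paper derives them inline during the $(\Leftarrow)$ direction.
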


\begin{proof}
Let us suppose $e$ is a cut-edge for $p=(s,t)$. Then each $s$ to $t$ path must pass through 
$a$ as well as $b$, which implies $a$ and $b$ are cut-vertices for $p$.

Now let suppose us assume $a,b$ are cut-vertices for $p$.
Since $e=(a,b)$ is cut-edge for some pair $q\in \P$, there is no path from $a$ to $b$ in $G\setminus \{e\}$.
Thus, $a$ and $b$ cannot be strongly-connected in $G$, because otherwise $e$ will lie in set $\C_0$.
This implies $a$ is not reachable from $b$ in $G$, and thus $a$ must precede $b$ on each $s$ to $t$ path.
Further, since there is no path from $a$ to $b$ in $G\setminus \{e\}$, edge $e$ must be a cut-edge for $p$. 
\end{proof}

All that remains is to handle failures in the set $\C_0$.
We construct a new graph $\G$ with $n+|\C_0|$ vertices by splitting each edge $e=(a,b)\in \C_0$ 
by inserting a vertex $v_e$ in between. Therefore, the edge $(a,b)$ in $G$ translates to $(a,v_e,b)$ path in $\G$.
To answer reachability query for a pair $p=(s,t)$ on failure of an edge $w\in \C_0$,
it suffices to check whether or not there is an $s$ to $t$ path in $\G-v_e$.

We provide below a bound on cardinalities of sets $\C$ and $\C_0$.

\begin{lemma}
$|\C_0|\leq 2n$, and $|\C|\leq O(n+\min\{|\P|\sqrt{n}, n \sqrt{|\P|},(n|\P|)^{2/3}\})$.
\label{lemma:size_bound_cut_edges}
\end{lemma}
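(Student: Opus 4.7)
The plan is to handle the two bounds separately; the $|\C_0|\le 2n$ bound is the more structural half, and the $|\C|$ bound is essentially a one-line reduction to known preserver bounds.

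For $|\C_0|\le 2n$, I would first note that every $e=(x,y)\in \C_0$ has endpoints that are strongly connected in $G$ by definition, so $e$ lies entirely within one strongly connected component (SCC) of $G$. It thus suffices to bound, for each SCC $S$ of $G$, the number of $\C_0$-edges internal to $S$, and sum. Fix such an $S$ and any root $r\in S$, and compute a spanning out-arborescence $T_{\mathrm{out}}$ of $G[S]$ rooted at $r$ together with a spanning in-arborescence $T_{\mathrm{in}}$ of $G[S]$ rooted at $r$; both exist because $S$ is strongly connected, and each uses exactly $|S|-1$ edges. The key claim I would then establish is that every $e\in \C_0$ internal to $S$ must lie in $T_{\mathrm{out}}\cup T_{\mathrm{in}}$: if it were missing from both, then $T_{\mathrm{out}}$ and $T_{\mathrm{in}}$ would survive intact in $G\setminus\{e\}$, so every vertex of $S$ would still be reachable from $r$ via $T_{\mathrm{out}}$ and still reach $r$ via $T_{\mathrm{in}}$, giving that $x,y$ remain strongly connected in $G\setminus\{e\}$, contradicting $e\in \C_0$. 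Hence $S$ contributes at most $2(|S|-1)$ edges to $\C_0$, and summing over all SCCs gives $|\C_0|\le 2n$.

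For $|\C|$, I would reduce directly to known static (failure-free) pairwise reachability preserver bounds. Every edge $e\in \C$ is a cut-edge for some pair $(s,t)\in \P$, which means every $s$-to-$t$ path in $G$ uses $e$; therefore $e$ must lie in every subgraph of $G$ that preserves $s$-to-$t$ reachability, and in particular in every pairwise reachability preserver of $G$ for $\P$. Thus $|\C|$ is at most the size of any such preserver. The Abboud--Bodwin source-wise preserver~\cite{AB18}, applied with source set equal to the (at most $|\P|$) distinct sources appearing in $\P$, yields size $O\big(n+\min\{\sqrt{n|\P|^2},(n|\P|)^{2/3}\}\big)=O\big(n+\min\{|\P|\sqrt{n},(n|\P|)^{2/3}\}\big)$, and the pairwise distance preserver of~\cite{CE06,Bodwin17} (which is in particular a reachability preserver) contributes the additional bound $O(n+n\sqrt{|\P|})$. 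Taking the minimum over the three gives the claimed $|\C|=O\big(n+\min\{|\P|\sqrt{n},\, n\sqrt{|\P|},\, (n|\P|)^{2/3}\}\big)$.

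The main obstacle is the arborescence-covering step for $\C_0$: it relies on the clean but not completely obvious structural fact that the union of one spanning out-arborescence and one spanning in-arborescence (rooted at the same vertex) of any strongly connected digraph contains every edge whose removal breaks strong connectivity. Once that observation is isolated, the rest of the $\C_0$ bound is arithmetic, and the $\C$ bound is an immediate consequence of prior work.
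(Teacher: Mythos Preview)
Your proposal is correct and follows essentially the same approach as the paper. The paper's proof is a two-liner: it observes that $\C_0$ must be contained in any strong-connectivity certificate $H_0$ of $G$ (hence $|\C_0|\le |E(H_0)|=2n$), and that $\C$ must be contained in any pairwise reachability preserver $H_\P$ for $(G,\P)$ (hence $|\C|\le |E(H_\P)|$, with the bound supplied by~\cite{AB18,CE06,Bodwin17}). Your arborescence argument for $\C_0$ is exactly an explicit construction of such an $H_0$ (the union of an in- and out-arborescence per SCC), together with a direct verification of the containment claim that the paper leaves implicit; and your treatment of $|\C|$ is identical to the paper's, with slightly more careful attribution of which preserver bound comes from which reference.
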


\begin{proof}
Let $H_0$ be strong-connectivity certificate (or subgraph) of $G$, and $H_{\P}$ be a pair-wise reachability certificate (or subgraph) for pair-set $\P$ (without any failure).
Then edges in $\C_0$ must be contained in $H_0$, and edges in $\C$ must be contained in $H_\P$.
It is straightforward to compute a strong-connectivity certificate with $2n$ edges, and Abboud and Bodwin~\cite{AB18} showed that
$|E(H_\P)|=O(n+\min\{|\P|\sqrt{n}, n \sqrt{|\P|},(n|\P|)^{2/3}\})$. Thus the claim follows.
\end{proof}

By employing the vertex fault-tolerant reachability oracle result from previous subsection, and using Lemma~\ref{lemma:edge-vertex-reduction}
and Lemma~\ref{lemma:size_bound_cut_edges}, we obtain an $O(n+ |\P|\sqrt{n} + (n|\P|)^{2/3}))$
$=O(n+ |\P|\sqrt{n} ))$ sized data-structure
that for any pair $(s,t)$ in $\P$ and any edge failure $e$, answers reachability query for $(s,t,e)$ in constant time. This completes the proof of Theorem~\ref{theorem:1ftro}.

\section{Dual Fault-tolerant Pairwise Reachability Preserver}
\label{section:dual-ftrs}
In this section, we present a construction of a dual fault-tolerant reachability preserver (2-$\ftrs$) for a given (directed) graph $G=(V,E)$ and a vertex-pair set $\P \subseteq V \times V$. In particular, we prove Theorem~\ref{thm:dual-ftrs}. Here we provide a construction of 2-$\ftrs$ with slack, which together with Lemma~\ref{lem:ftrs-slack} shows Theorem~\ref{thm:dual-ftrs}.

\begin{theorem}
\label{thm:dual-ftrs-slack}
For any directed graph $G=(V,E)$ with $n$ vertices and a set $\P \subseteq V \times V$ of vertex-pairs, there exists a 2-$\ftrs$ with slack having at most $O(n^{4/3} |\P|^{1/3})$ edges. Furthermore, we can find such a subgraph in polynomial time.
\end{theorem}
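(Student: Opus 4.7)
The plan is to mirror the pairwise 2-$\ftro$ construction from Section~\ref{section:dual-ftro}, but replace stored oracles with actual sparse sub-preservers. Set $L = \Theta(n^{2/3}|\P|^{-1/3})$. First I would compute $H_{(s,t)}$ from Section~\ref{section:single-pair} for every $(s,t)\in\P$, yielding outer strands $P^1_{(s,t)}, P^2_{(s,t)}$ for each pair. I apply Theorem~\ref{thm:frac-hitting-set} to the family $\F=\{V(P^i_{(s,t)}[L]),V(P^i_{(s,t)}[-L]) \mid i\in\{1,2\},\,(s,t)\in\P\}$ to obtain a fractional hitting set $S$ of size $O(n/L) = O(n^{1/3}|\P|^{1/3})$; let $\Q\subseteq\P$ consist of those pairs whose all four $L$-truncated strand portions meet $S$. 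A standard counting argument applied to the fractional hitting guarantee yields $|\Q|=\Omega(|\P|)$, so it suffices to build a 2-$\ftrs$ for $\Q$. For every $v\in S$ I add to $H_1$ a single-source and a single-destination $2$-$\ftrs$ rooted at $v$ (Theorem~\ref{theorem:ftrs}, each of size $O(n)$), so $|E(H_1)| = O(n^{4/3}|\P|^{1/3})$.

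Next I include $H_2 := \bigcup_{(s,t)\in\Q}\bigcup_{i\in\{1,2\}} \bigl(P^i_{(s,t)}[L] \cup P^i_{(s,t)}[-L]\bigr)$, of size $O(L|\P|)=O(n^{2/3}|\P|^{2/3})$, and consider the set $B$ of all essential coupling paths $Q^i_{(s,t),v}$ (as defined in Section~\ref{section:single-pair}) whose \emph{both} endpoints lie in the $L$-truncated strand portions. By Lemma~\ref{lem:single-pair-ftrs} each pair contributes only $O(L)$ such paths, hence $\kappa:=|B|=O(L|\P|)=O(n^{2/3}|\P|^{2/3})$. Fix a threshold $\tau = n^{1/3}|\P|^{1/3}$ and iteratively pick a vertex $w$ currently lying on at least $\tau$ coupling paths of $B$, mark it \emph{heavy}, and delete every coupling path of $B$ through $w$; since each round removes $\geq \tau$ paths, the number of heavy vertices is $O(\kappa/\tau) = O(n^{1/3}|\P|^{1/3})$. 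For each heavy vertex $w$ I add to $H_3$ a single-source and single-destination $2$-$\ftrs$ rooted at $w$ (Theorem~\ref{theorem:ftrs}); this contributes $O(n^{4/3}|\P|^{1/3})$ edges overall. Finally I let $H_4$ be the residual of $B$ after peeling; each vertex participates in at most $\tau$ remaining coupling paths, so its degree in $H_4$ is $O(\tau)$ and $|E(H_4)| = O(n\tau) = O(n^{4/3}|\P|^{1/3})$. I output $H := H_1 \cup H_2 \cup H_3 \cup H_4$, which is clearly a subgraph of $G$ constructible in polynomial time.

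For correctness, I fix $(s,t)\in\Q$ and failures $f_1,f_2$ such that $G\setminus\{f_1,f_2\}$ still has an $s$-$t$ path. By Claim~\ref{clm:special-path} there is a nice surviving path $R = P^i_{(s,t)}[s{-}u]\circ Q^i_{(s,t),u'}\circ P^j_{(s,t)}[u'{-}t]$ in $H_{(s,t)}\setminus\{f_1,f_2\}$. If $u\notin P^i[L]$, then $P^i[L]\subseteq P^i[s{-}u]$, and since $(s,t)\in\Q$ gives $S\cap P^i[L]\neq\emptyset$, some $w\in S$ lies on $R$; the $2$-$\ftrs$es in $H_1$ rooted at $w$ provide an $s$-$w$ and a $w$-$t$ path in $H_1\setminus\{f_1,f_2\}$. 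The case $u'\notin P^j[-L]$ is symmetric. Otherwise $u\in P^i[L]$ and $u'\in P^j[-L]$, whence $P^i[s{-}u]\subseteq P^i[L]$ and $P^j[u'{-}t]\subseteq P^j[-L]$ both sit inside $H_2$, and the essential coupling path $Q^i_{(s,t),u'}$ lies in $B$. If this coupling path survived peeling it lies in $H_4$, and then $R\subseteq H_2\cup H_4$; if it was deleted because some heavy vertex $w$ occurs on it, then $R$ contains an $s$-$w$-$t$ sub-walk in $G\setminus\{f_1,f_2\}$, and the 2-$\ftrs$es rooted at $w$ in $H_3$ already preserve $s$-$w$ and $w$-$t$ reachability under two failures. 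In every case an $s$-$t$ path survives in $H\setminus\{f_1,f_2\}$, so $H$ is a 2-$\ftrs(G,\Q)$ with $|\Q|=\Omega(|\P|)$, and summing the four contributions gives $|E(H)|=O(n^{4/3}|\P|^{1/3})$.

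The main obstacle in executing this plan is that coupling paths have \emph{unbounded length}, so one cannot just store every coupling path whose endpoints hit the $L$-portions; a single coupling path could have $\Theta(n)$ edges and $B$ could be dense. The heavy-vertex peeling trick bypasses this by exploiting the boundedness of the \emph{number} of coupling paths $\kappa=O(L|\P|)$ rather than their total edge length, and it works only because Claim~\ref{clm:special-path} guarantees a surviving path with at most one coupling segment (so a single heavy vertex on that segment is enough to reroute through $H_3$). A secondary issue is eliminating the $\log n$ factor that a greedy hitting set would introduce; using the fractional hitting set of Theorem~\ref{thm:frac-hitting-set} handles this cleanly and, combined with Lemma~\ref{lem:ftrs-slack}, lifts the slack construction to the final Theorem~\ref{thm:dual-ftrs}.
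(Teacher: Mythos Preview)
Your proposal is correct and essentially identical to the paper's proof: same parameter $L=\Theta(n^{2/3}|\P|^{-1/3})$, same fractional hitting set giving $H_1$, same $H_2$ of truncated strands, the same collection $B$ of coupling paths with endpoints in the $L$-portions, and the same heavy-vertex peeling with threshold $\tau=\sqrt{L|\Q|}=\Theta(n^{1/3}|\P|^{1/3})$ to produce $H_3$ and $H_4$; the correctness argument via Claim~\ref{clm:special-path} also matches. The only cosmetic differences are that the paper defines $B$ via the destination endpoint alone (a superset of your $B$, harmless for both size and correctness) and justifies the $O(L)$ paths per pair directly rather than citing Lemma~\ref{lem:single-pair-ftrs}.
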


We will use the notations $H_{(s,t)}, P_{(s,t)}^i, Q_{(s,t),v}^j$ defined in Section~\ref{section:single-pair}. Apart from them, we need one more notation to denote the frequency of a vertex in a set of paths. For any set $B$ of paths in $G$, for any vertex $v \in V$, we use $\freq_B(v)$ to denote the number of paths in $B$ that contains $v$.\\
\textbf{Procedure to Construct a 2-$\ftrs$ with slack for the input $(G,\P)$:}
\begin{enumerate}
\item For each $(s,t)\in \P$, construct $H_{(s,t)}=2\text{-}\ftrs(G,(s,t))$ using the construction provided in Section~\ref{section:single-pair}. (Recall, $H_{(s,t)}$ contains two $s-t$ paths $P_{(s,t)}^1$ and $P_{(s,t)}^2$, that intersect only at $s-t$ cut-edges and $s-t$ cut-vertices.) Define $H_{inter}=\bigcup_{(s,t)\in \P}H_{(s,t)}$.
\item Consider the following family of subsets $\F:=\{V(P^i_{(s,t)}[L]), V(P^i_{(s,t)}[-L]) \mid i\in \{1,2\},(s,t) \in \P\}$, where $L=n^{2/3}|P|^{-1/3}$. Construct a fractional hitting set $S$ of $\F$ (by Theorem~\ref{thm:frac-hitting-set}). Let $\Q:=\{(s,t)\in \P \mid \forall_{i\in \{1,2\}} V(P^i_{(s,t)}[L])\cap S \ne \emptyset \text{ and } V(P^i_{(s,t)}[-L]) \cap S \ne \emptyset\}$.
\item For each $v \in S$, construct a 2-$\ftrs(G,\{v\}\times V)$ and 2-$\ftrs(G,V\times \{v\})$, and take union of all these subgraphs to get a subgraph $H_1$.
\item For each $(s,t) \in \Q$ and $i \in \{1,2\}$, consider the subpaths $P_{(s,t)}^i[L]$ and $P_{(s,t)}^i[-L]$. Let $H_2$ be the graph $\bigcup_{(s,t)\in \P, i\in \{1,2\}}(P_{(s,t)}^i[L] \cup P_{(s,t)}^i[-L])$.
\item Let $B$ be the set of all the paths $Q_{(s,t),v}^i$ (defined in Section~\ref{section:single-pair}) for $(s,t)\in \Q$, $v \in V(\cup_{j \in \{1,2\}}P_{(s,t)}^j[-L])$ and $i \in \{1,2\}$.
\item Initialize an empty set $W$.
\item While $\exists v \in V$ such that $\freq_B(v) \ge \sqrt{L|\Q|}$
\begin{enumerate}
\item Add $v$ to $W$ and remove all the paths that contains $v$ from $B$.
\end{enumerate}
\item For each $w \in W$, construct a 2-$\ftrs(G,\{w\}\times V)$ and 2-$\ftrs(G,V\times \{w\})$, and take union of all these subgraphs to get a subgraph $H_3$.
\item Let $H_4$ be the subgraph obtained by taking the union of all the (remaining) paths in $B$.
\item Return the subgraph $H=H_1\cup H_2 \cup H_3 \cup H_4$.
\end{enumerate}

\paragraph*{Size of the subgraph $H$. }By Theorem~\ref{thm:frac-hitting-set}, $|S|=O(n/L)=O((n|\P|)^{1/3})$. We use Theorem~\ref{theorem:ftrs} to build a 2-$\ftrs(G, \{v\} \times V)$ and 2-$\ftrs(G, V \times \{v\})$. So the size of $H_1$ is $O((n|\P|)^{1/3} \cdot n)$. By the construction (Step 4), $H_2$ is of size at most $O(L \cdot |\P|)$.  

Observe, while defining $B$ (Step 5) before the start of the while loop, we add at most $O(L)$ paths for each $(s,t) \in \Q$. So, $|B|= O(L |\Q|)$. At each iteration of the while loop (Step 7(a)), we remove at least $\sqrt{L|\Q|}$ paths from $B$ and add a new vertex in $W$. At the end of the while loop, for each vertex $v \in V \setminus W$, $\freq_B(v) < \sqrt{L |\Q|}$. So, at the end of the while loop $|W| \le O\Big(\frac{L |\Q|}{\sqrt{L|Q|}}\Big)=O(\sqrt{L|\Q|})$. Again, we use Theorem~\ref{theorem:ftrs} to build a 2-$\ftrs(G, \{w\} \times V)$ and 2-$\ftrs(G, V \times \{w\})$ for each $w \in W$. Thus the size of $H_3$ is at most $O(n \sqrt{L|\Q|})$.

Since $H_4$ is the union of all the leftover paths in $B$ after the while loop, in-degree of each vertex in $H_4$ is at most $\sqrt{L|\Q|}$. So the size of $H_4$ is $O(n \sqrt{L|\Q|})$. Hence, we conclude that the size of the final subgraph $H$ is $O((n|\P|)^{1/3} \cdot n + L \cdot |\Q| + n \sqrt{L|\Q|}) = O(n^{4/3} |P|^{1/3})$, for $L=n^{2/3}|P|^{-1/3}$ (note, $|\Q| \le |\P|\le n^2$).

\paragraph*{Correctness of pairwise 2-{\ftrs} with slack. } 
Next, we would like to claim that the subgraph $H$ is a 2-$\ftrs$ with slack for $(G,\P)$. First, we show that the constructed data structure is a 2-$\ftrs(G,\Q)$. Second, we show that $|\Q| = \Omega(|\P|)$. The second part is quite straightforward. By the definition of fractional hitting set, $S$ intersects all but $1/10$ fraction of the subsets of $\F$. Hence, $|\Q| \ge \frac{3}{5} |\P|$. So it only remains to show the first part.
\begin{lemma}
\label{lem:ftrs-correctness}
The subgraph $H$ is a 2-$\ftrs(G,\Q)$.
\end{lemma}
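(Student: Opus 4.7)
Since $H \subseteq G$, the nontrivial direction is to show that for any $(s,t) \in \Q$ and any edge set $F$ with $|F| \le 2$, if there is an $s$--$t$ path in $G \setminus F$ then there is one in $H \setminus F$. My plan is to invoke Claim~\ref{clm:special-path} to obtain a nice $s$--$t$ path $R$ in $H_{(s,t)} \setminus F$ of the form $R = P_{(s,t)}^i[s-u] \circ Q_{(s,t),u'}^i \circ P_{(s,t)}^j[u'-t]$ for some $i,j \in \{1,2\}$, and then exhibit an $s$--$t$ path in $H \setminus F$ by a case split on where $u$ and $u'$ lie relative to the top and bottom $L$-prefixes of the outer strands.

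The case analysis will proceed as follows. First, if $u \notin V(P_{(s,t)}^i[L])$, i.e., the subpath $P_{(s,t)}^i[s-u]$ has length at least $L$, then since $(s,t) \in \Q$ the stored vertex $v \in V(P_{(s,t)}^i[L]) \cap S$ exists and must lie on $P_{(s,t)}^i[s-u]$; hence $R$ passes through $v$, which gives an $s$--$v$ path and a $v$--$t$ path in $G \setminus F$. Since $H_1$ contains $2\text{-}\ftrs(G, V \times \{v\})$ and $2\text{-}\ftrs(G, \{v\} \times V)$ for every $v \in S$, these two paths lift to paths in $H_1 \setminus F$, whose concatenation lies in $H \setminus F$. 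The symmetric argument handles the case $u' \notin V(P_{(s,t)}^j[-L])$ using the stored vertex in $V(P_{(s,t)}^j[-L]) \cap S$.

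The remaining and most delicate case is when $u \in V(P_{(s,t)}^i[L])$ and $u' \in V(P_{(s,t)}^j[-L])$. In this situation $P_{(s,t)}^i[s-u]$ and $P_{(s,t)}^j[u'-t]$ are contained in $H_2$ by construction (Step~4), so they survive in $H_2 \setminus F$ because $R$ survives in $H_{(s,t)} \setminus F$. The coupling path $Q_{(s,t),u'}^i$ belongs to the initial set $B$ (since $u' \in V(P_{(s,t)}^j[-L])$). If it was never removed during the while loop, it ends up in $H_4$ and we are done. Otherwise it was removed because it contains some high-frequency vertex $w \in W$; then $R$ passes through $w$, giving $s$--$w$ and $w$--$t$ subpaths in $G \setminus F$, which lift to $H_3 \setminus F$ via the $2\text{-}\ftrs$ structures installed for $w$ in Step~8.

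The main obstacle I anticipate is the third case, specifically the need to argue that the coupling-path portion is captured either by $H_4$ or by $H_3$. This is precisely why the nice-path guarantee of Claim~\ref{clm:special-path} is essential: a generic surviving path in $H_{(s,t)} \setminus F$ might traverse many coupling subpaths whose endpoints fall outside the top/bottom $L$-prefixes, and such subpaths would not appear in $B$ at all; the nice-path structure ensures there is at most one coupling subpath and that both its endpoints lie on the portion of the outer strands retained in $H_2$. Checking carefully that the two edge failures in $F$ cannot simultaneously spoil both the outer-strand prefix/suffix and the chosen coupling path (using that $P_{(s,t)}^1$ and $P_{(s,t)}^2$ meet only at cut-edges/vertices, as in the proof of Claim~\ref{clm:special-path}) will be the most delicate step. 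Once this is handled, the three cases cover every possibility and the lemma follows.
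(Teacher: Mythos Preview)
Your proposal is correct and follows essentially the same approach as the paper's proof: invoke Claim~\ref{clm:special-path} to obtain a nice path $R$, then case-split on whether $u$ and $u'$ lie in the top/bottom $L$-prefixes, routing through $H_1$ via a hitting-set vertex in the long-prefix cases and through $H_2 \cup H_4$ (or $H_3$ if the coupling path was pruned by some $w\in W$) in the short-prefix case. The ``delicate step'' you worry about at the end is in fact immediate: since $R$ is already a path in $H_{(s,t)}\setminus F$, none of its edges is in $F$, so the outer-strand prefix/suffix and the coupling subpath automatically survive $F$; no further interaction between the failures and the decomposition needs to be checked.
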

\begin{proof}
Consider any arbitrary pair $(s,t) \in \Q$. It is immediate from the definition of a 2-$\ftrs$ that any 2-$\ftrs(H_{(s,t)},(s,t))$ is also a 2-$\ftrs(G,(s,t))$. So to prove Lemma~\ref{lem:ftrs-correctness}, it suffices to show that for all $(s,t) \in \Q$, $H$ is a 2-$\ftrs(H_{(s,t)},(s,t))$. From now on we will focus on proving this latter claim, the proof of which is quite similar to that of Lemma~\ref{lem:dual-oracle-correctness}.

Suppose there is an $s-t$ path in $H_{(s,t)} \setminus \{f_1,f_2\}$. Observe, by Claim~\ref{clm:special-path}, there must be a nice $s-t$ path $R$ in $H_{(s,t)} \setminus \{f_1,f_2\}$. By the definition of nice path, $R$ would be of the form $P_{(s,t)}^i[s-u] \circ Q_{(s,t),u'}^{i} \circ P_{(s,t)}^j[u'-t]$, for some $i,j \in \{1,2\}$, where the subpath $Q_{(s,t),u'}^{i}$ starts from the vertex $u$ and is edge-disjoint with $P_{(s,t)}^1$ and $P_{(s,t)}^2$. ($Q_{(s,t),u'}^{i}$ could be empty; in that case, $R=P_{(s,t)}^i$ for some $i \in \{1,2\}$.) By the definition of $\Q$, $P_{(s,t)}^i[L]\cap S \ne \emptyset$ and $P_{(s,t)}^j[-L]\cap S \ne \emptyset$. Now, if $u$ does not lie on $P_{(s,t)}^i[L]$, i.e., the length of the subpath $P_{(s,t)}^i[s-u]$ is at least $L$, then $R$ must pass through a vertex $v \in P_{(s,t)}^i[L] \cap S$. Similarly, if $u'$ does not lie on $P_{(s,t)}^j[-L]$, then $R$ must pass through a vertex $v \in P_{(s,t)}^j[-L] \cap S$. In both the scenarios, for some $v \in S$, there is an $s-v$ and $v-t$ path in $H_{(s,t)} \setminus \{f_1,f_2\}$ (and thus also in $G \setminus \{f_1,f_2\}$). This means, by the construction of $H_1$ (Step 3), there is an $s-v$ and $v-t$ path in $H_1 \setminus \{f_1,f_2\}$ as well.

So let us now focus on the case when $u$ lies on $P_{(s,t)}^i[L]$ and $u'$ lies on $P_{(s,t)}^j[-L]$. By the construction (Step 4), $H_2$ contains all the edges of $P_{(s,t)}^i[s-u]$ and $P_{(s,t)}^j[u'-t]$. By the construction of the set $B$ (Step 5), $Q_{(s,t),u'}^{i}$ belongs to the set $B$. Then either during the execution of the while loop (Step 7), $Q_{(s,t),u'}^{i}$ has been removed due to inclusion of some vertex $w \in V(Q_{(s,t),u'}^{i})$ in the set $W$; or $Q_{(s,t),u'}^{i}$ is part of $B$ after the termination of the while loop. For the first case, there is an $s-w$ and $w-t$ path in $H_{(s,t)} \setminus \{f_1,f_2\}$ (and so in $G\setminus \{f_1,f_2\}$), and thus also in $H_3 \setminus \{f_1,f_2\}$ by the construction of $H_3$ (Step 8). In the second case, $Q_{(s,t),u'}^{i}$ is included in $H_4$ (Step 9), and hence $R=P_{(s,t)}^i[s-u] \circ Q_{(s,t),u'}^{i} \circ P_{(s,t)}^j[u'-t]$ also exists in $H_2 \cup H_4$ (and thus in $H\setminus \{f_1,f_2\}$). This concludes the proof.
\end{proof}

Let us now combine Theorem~\ref{thm:dual-ftrs-slack} with Lemma~\ref{lem:ftrs-slack} to finish the proof of Theorem~\ref{thm:dual-ftrs}.

\begin{proof}[Proof of Theorem~\ref{thm:dual-ftrs}]
For single pair, by~\cite{BCR16}, we have a 2-$\ftrs$ of size $O(n)$ that can be computed in polynomial time. By Theorem~\ref{thm:dual-ftrs-slack}, we have a 2-$\ftrs$ with slack of size $O(n^{4/3} |\P|^{1/3})$. Theorem now directly follows from Lemma~\ref{lem:ftrs-slack}.
\end{proof}

\section{A Generic Construction of Pairwise $k$-{\ftrs}}
\label{section:k-ftrs}
In this section, we show that for any pair-set $\P$, there exists a $k$-$\ftrs(G,\P)$ having 
$\widetilde O(k~ 2^k ~n^{\frac{2k}{k+1}}~ |\P|^\frac{1}{k+1})$ edges. We complement this by 
providing a lower bound of $\Omega(2^{\lfloor k/2-1\rfloor} n\sqrt{|\P|})$, for $k\geq 2$, on the 
extremal size of $k$-$\ftrs(G,\P)$ in Section~\ref{section:lower-bounds}.

Due to work of \cite{BCR16} on single-source $\kftrs$, it follows that there exists a $\kftrs(G,\P)$ with
$O(2^kn|\P|)$ edges, which is good for a small $\P$.
The bound achieved in this section is better than this trivial $O(2^kn|\P|)$ bound
whenever $|\P|=\omega(k n^\frac{k-1}{k} \log n)$, which proves Theorem~\ref{thm:ub-k-ftrs}.

Let us first introduce the following notation. For any graph $G$ and vertices $s,t$, we use $\dist(s,t,G)$ to denote the length of a shortest $s-t$ path in $G$. Now we explain our construction. For each $p=(s,t)\in \P$ and each $F\subseteq E$ of at most $k-1$ failures for which there is a $s-t$ path in $G\setminus F$ (i.e., $\dist(s,t,G \setminus F)$ is finite), 
let $Q^1_{p,F},Q^2_{p,F}$ be two $(s,t)$ paths in $G\setminus F$ intersecting only at the $(s,t)$-cut-edges in $G\setminus F$.
Further, for $p\in \P$ and $i=1,2$, let 
$$\A_i(p) = \{ F\subseteq E ~|~ F\text{ is of size $k-1$, and } |Q^i_{p,F}|\leq \ell\}$$
be the set of those subsets $F$ of $E$ of size at most $k-1$ that satisfy that length of $Q^i_{p,F}$ is bounded by $\ell$.

\paragraph*{Procedure to construct {\kftrs}. }
The algorithm for computing the $\kftrs(G,\P)$ is as follows.
We sample a set $W$ of $\Theta\big (k (n/\ell)\log n\big)$ vertices uniformly at random.
For each $w\in W$, we employ~\cite{BCR16} to compute a graph $H_w$ that is a $\kftrs$ for $G$ with 
respect to $w$ as source as well as sink.
That is, $H_w$ preserves reachability between all pairs in $\{w\}\times V$ and $V\times \{w\}$ as long as the number 
of failures is bounded by $k$.
Due to \cite{BCR16}, each $H_w$ is computable in polynomial time and contains at most $O(2^k n)$ edges.
Set $H$ to be the subgraph of $G$ obtained by taking union of all edges in $H_w$, for $w\in W$.
Finally, for each $p\in \P$ and $F\in \A_i(p)$, we add edges of $Q^i_{p,F}$ to $H$, for $i=1,2$.

\begin{algorithm}[!ht]
\DontPrintSemicolon
\setstretch{1.25}
\BlankLine
Sample a random set $W$ of $\Theta\big (k (n/\ell)\log n\big)$ vertices.\\
\lForEach{$w\in W$}{Compute $\kftrs$ $H_w$ with respect to pairs in $\{w\}\times V$ and $V\times \{w\}$.}
Initialize $H$ to be a graph obtained from $G$  by taking union of all edges in $H_w$, for $w\in W$.\\
\ForEach{$p\in \P$, $F\in \A_i(p)$, $i\in[1,2]$}
{Add edges of $Q^i_{p,F}$ to $H$.}
Return $H$.
\caption{Procedure to construct a $\kftrs(G,\P)$}
\label{algo:kftrs}
\end{algorithm}

We want to claim that the subgraph $H$ is a $\kftrs(G,\P)$. 
Let us first provide a bound on size of the set $\A(p):=\A_1(p) \cup \A_2(p)$.

\begin{lemma}
For each $p\in \P$, $|\A(p)|$ is at most $\ell^{k-1}$.
Moreover, $\A(p)$ is computable in $O(\ell^{k-1}m)$ time, where $|E|=m$.
\label{lemma:computing_A(p)}
\end{lemma}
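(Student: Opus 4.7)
My plan is to enumerate $\A(p)$ by a tree-based branching procedure with depth at most $k-1$ and branching factor at most $\ell$, which directly yields the claimed cardinality bound. The root of the tree corresponds to $F = \emptyset$. For each internal node labeled $F$ at depth $j < k-1$, I compute the canonical pair of $(s,t)$-paths $Q^1_{p,F}, Q^2_{p,F}$ in $G \setminus F$ (intersecting only at cut-edges); if $|Q^i_{p,F}| \le \ell$ for some $i$, I branch on the edges of such a short canonical path, creating a child $F \cup \{e\}$ for each edge $e$ on it. Nodes with no short canonical path (or at depth $k-1$) are leaves. The per-node branching is at most $\ell$ (up to the usual factor $2$ from branching on both $Q^1$ and $Q^2$, absorbed into $O(\cdot)$ or avoided by branching on a single path), so depth $k-1$ contains at most $\ell^{k-1}$ nodes.

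The crux is to show every $F \in \A(p)$ appears in the tree. The key observation is monotonicity: since $F' \subseteq F$ implies $G \setminus F' \supseteq G \setminus F$, the shortest $(s,t)$-path length in $G \setminus F'$ is no larger than in $G \setminus F$; hence if $|Q^i_{p,F}| \le \ell$, a short canonical path exists at every prefix $F'$ of $F$. I plan to build an ordering $(e_1, \ldots, e_{k-1})$ of $F$'s edges greedily: given prefix $F_{j-1}$, pick $e_j \in F \setminus F_{j-1}$ lying on the short canonical path $Q^i_{p, F_{j-1}}$. If no such $e_j$ exists, then $Q^i_{p, F_{j-1}}$ is disjoint from $F \setminus F_{j-1}$ and hence survives in $G \setminus F$; so we can re-choose $Q^i_{p, F} := Q^i_{p, F_{j-1}}$ and treat $F_{j-1}$ itself as the "effective" failure set, discarding redundant edges. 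This lets the enumeration capture $F$ via the tree node at depth $j-1$.

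For the running time, each tree node requires computing the two maximally edge-disjoint $(s,t)$-paths in $G \setminus F$ --- doable in $O(m)$ via a constant number of BFS/max-flow calls --- so with $O(\ell^{k-1})$ nodes the total time is $O(\ell^{k-1} m)$. The main obstacle is the decomposition argument in the previous paragraph: one must fix the canonical paths (say, shortest with a canonical tie-break) so the greedy ordering is guaranteed to exist, and handle the "redundancy" case cleanly so that every $F \in \A(p)$ is actually enumerated rather than merely some superset-approximation of it.
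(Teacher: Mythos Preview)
Your approach is essentially the paper's. The paper also builds a depth-$(k-1)$ branching tree: it defines $B_r$ as the family of size-$r$ subsets $F$ with $\dist(s,t,G\setminus F)\le\ell$, generates $B_r$ from $B_{r-1}$ by branching on the at most $\ell$ edges of a single shortest $(s,t)$-path in $G\setminus F$, and concludes $|B_{k-1}|\le\ell^{k-1}$ together with $\A(p)\subseteq B_{k-1}$. The only cosmetic difference is that the paper branches on a shortest path while you branch on one of the canonical paths $Q^i_{p,F}$.

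The ``main obstacle'' you flag---that an $F\in\A(p)$ may contain edges not lying on the chosen short path at any prefix, so the branching need not literally reach $F$---is a real subtlety, and the paper's proof does not handle it more carefully than you do: it simply asserts that the branching computes $B_{k-1}$. Your ``redundancy'' observation (if no edge of $F\setminus F_{j-1}$ lies on the short path at $F_{j-1}$, that path already survives in $G\setminus F$) is in fact the right fix for the \emph{algorithm's} correctness, since what the $\kftrs$ construction actually needs is only that for every such $F$ some short path added to $H$ avoids $F$. In that sense your write-up is, if anything, more honest about the issue than the paper's.
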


\begin{proof}
Consider a pair $p=(s,t)\in P$. 
Let $B_r$, for $r\geq 0$, be family of all subsets of $E$ of size $r$ satisfying 
$\dist(s,t,G \setminus F)\leq \ell$. 
Observe $B_0$ contains empty-set if and only if $(s,t)$ distance in $G$ is bounded by $\ell$ (i.e., $\dist(s,t,G) \le \ell$).
The~set~$B_{r}$ can be computed from $B_{r-1}$ as follows: For each $F\in B_{r-1}$, an (arbitrary) shortest $s-t$ path $\pi_{G\setminus F}(s,t)$ in $G\setminus F$ and $e\in \pi_{G\setminus F}(s,t)$, 
include $F\cup\{e\}$ in $B_{r}$ if and only if $(s,t)$ distance in $G\setminus (F\cup\{e\})$ is bounded by $\ell$.
This implies that $|B_{k-1}|\leq \ell^{k-1}$, and $B_{k-1}$ can be computed in 
is $O(m \ell^{k-1})$.

For each $F\in \A(p)$, we have $dist(s,t,G-F)$ is bounded by $\ell$, so $\A(p)\subseteq B_{k-1}$, and $|\A(p)|$ is bounded by 
$\ell^{k-1}$. Further $\A(p)$ can be computed in $O(\ell^{k-1}m)$ time by iterating over all pairs $F\in B_{k-1}$ 
and including them in $\A(p)$ if length of $Q^1_{p,F}$ or $Q^2_{p,F}$ is bounded by $\ell$.
\end{proof}

We will prove that $\ell^{k-1}$ is bounded by $O(n^2)$, therefore, it turns out that the set $\A(p)$
is computable in polynomial time.

\begin{lemma}
With high probability following holds:
For each $p=(s,t)\in \P$ and $F\subseteq E$ of at most $k-1$ edges satisfying $\dist(s,t, G\setminus F)$ is finite, 
$W$ has non-empty intersection with $Q^i_{p,F}$ if its length is larger than $\ell$, for $i=1,2$.
\label{lemma:hitting_property}
\end{lemma}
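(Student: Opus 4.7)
The plan is a direct union bound over a covering event. Fix any triple $(p,F,i)$ with $p=(s,t)\in\P$, $F\subseteq E$ of size at most $k-1$, and $i\in\{1,2\}$, such that $\dist(s,t,G\setminus F)$ is finite and $|Q^i_{p,F}|>\ell$. Since the path $Q^i_{p,F}$ has more than $\ell$ edges, it traverses at least $\ell+1 > \ell$ distinct vertices. As each vertex of $W$ is drawn independently and uniformly from $V$,
\[
\Pr\!\left[w\notin V(Q^i_{p,F})\right]\ \le\ 1-\frac{\ell}{n}
\quad\text{for each }w\in W.
\]
Hence, writing $|W|=c\,k(n/\ell)\log n$ for the hidden constant $c$ in the sampling size,
\[
\Pr\!\left[W\cap V(Q^i_{p,F})=\emptyset\right]\ \le\ \Bigl(1-\tfrac{\ell}{n}\Bigr)^{\!|W|}
\ \le\ e^{-\ell|W|/n}\ =\ n^{-ck}.
\]

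Next I would union-bound over all relevant triples. The number of pairs $p\in\P$ is at most $|\P|\le n^2$, the number of subsets $F\subseteq E$ of size at most $k-1$ is bounded by $\binom{m}{\le k-1}\le m^{k-1}\le n^{2(k-1)}$, and $i$ has only two choices. Thus the total number of triples to consider is at most $2\,|\P|\,m^{k-1}\le 2n^{2k}$. By a union bound, the probability that some such $Q^i_{p,F}$ of length greater than $\ell$ is missed by $W$ is at most
\[
2n^{2k}\cdot n^{-ck}\ \le\ n^{-1},
\]
provided the sampling constant $c$ in $|W|=\Theta(k(n/\ell)\log n)$ is chosen large enough (any $c\ge 3$ works). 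Therefore the desired event holds with high probability.

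The proof is essentially a hitting-set-via-random-sampling argument, and no real obstacle arises; the only point to be mindful of is that the $Q^i_{p,F}$ are fixed (chosen deterministically in advance of drawing $W$) so that the per-triple probability estimate applies, and that the union bound must range over \emph{all} $F$ of size at most $k-1$ with finite $(s,t)$-distance in $G\setminus F$, not just those whose shortest surviving path is short; this is fine because the crude count $n^{2(k-1)}$ already dominates. The sampling constant hidden in $\Theta(k(n/\ell)\log n)$ should be fixed so that $ck\ge 2k+1$, which makes the union bound deliver an $n^{-\Omega(1)}$ failure probability as claimed.
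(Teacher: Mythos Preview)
Your proof is correct and follows essentially the same approach as the paper: a per-triple hitting probability estimate of $n^{-\Theta(k)}$ followed by a union bound over the at most $|\P|\cdot n^{O(k)}$ choices of $(p,F,i)$. The paper is terser (it does not spell out the $(1-\ell/n)^{|W|}$ computation or the explicit choice of constant), but the argument is the same.
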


\begin{proof}
Consider a pair $p=(s,t)\in \P$ and a set $F\subseteq E$ of at most $k-1$ edges satisfying $\dist(s,t,G\setminus F)$ is finite.
For~$i=1,2$, if $|Q^i_{p,F}|\geq \ell$,  then the probability $Q^i_{p,F}$ is disjoint with $W$ is at most $n^{-\Theta(k)}$.
The number of subsets of $E$ of size at most $k-1$ is $n^{O(k)}$.
Taking union over each $p\in \P$ and each subset of $E$ of size at most $k-1$, we obtain that with probability at least $(1-1/n^2)$,
for each $p\in \P$ and $F\subseteq E$ of at most $k-1$ edges,
$W$ has non-empty intersection with $Q^i_{p,F}$ if its length is larger than $\ell$, for $i=1,2$.
\end{proof}

\begin{lemma}
With high probability following holds:
For each $p=(s,t)\in \P$ and $F\subseteq E$ of size at most $k$, there exists an $s-t$ path in $G\setminus F$ if and only if
there exists an $s-t$ path in $H\setminus F$.
\label{lemma:correctness}
\end{lemma}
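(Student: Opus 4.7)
The plan is to establish the $(\Leftarrow)$ direction trivially (since $H\subseteq G$, any $s$-$t$ path in $H\setminus F$ is an $s$-$t$ path in $G\setminus F$) and focus on the forward direction, conditioning throughout on the high-probability event guaranteed by Lemma~\ref{lemma:hitting_property}. Fix $p=(s,t)\in \P$ and $F\subseteq E$ with $|F|\leq k$ such that $s$ reaches $t$ in $G\setminus F$; I split on whether $|F|\leq k-1$ or $|F|=k$.

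If $|F|\leq k-1$, the paths $Q^1_{p,F},Q^2_{p,F}$ are defined and I apply a short/long dichotomy. If $|Q^i_{p,F}|\leq \ell$ for some $i$, then $F\in \A_i(p)$, so Algorithm~\ref{algo:kftrs} explicitly placed $Q^i_{p,F}$ into $H$, yielding an $s$-$t$ path in $H\setminus F$. Otherwise both paths have length exceeding $\ell$, and Lemma~\ref{lemma:hitting_property} supplies some $w\in W\cap Q^1_{p,F}$; because $H_w\subseteq H$ is a $\kftrs$ for the pair-sets $\{w\}\times V$ and $V\times\{w\}$ under up to $k$ failures, and $|F|\leq k$, the subpaths $Q^1_{p,F}[s\text{-}w]$ and $Q^1_{p,F}[w\text{-}t]$ (which survive trivially in $G\setminus F$) give rise to $s$-$w$ and $w$-$t$ paths in $H_w\setminus F$, whose concatenation provides the desired $s$-$t$ walk and hence path.

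The key step, and the main obstacle, is the case $|F|=k$, which I would handle by a one-edge peeling trick. Pick any $f\in F$ and set $F'=F\setminus\{f\}$, so $|F'|=k-1$; note $s$-$t$ remains reachable in $G\setminus F'\supseteq G\setminus F$, so $Q^1_{p,F'},Q^2_{p,F'}$ are defined. The structural hypothesis that $Q^1_{p,F'},Q^2_{p,F'}$ intersect only in $(s,t)$-cut-edges of $G\setminus F'$ is what makes the peel work: if $f$ lay on both, then $f$ would itself be such a cut-edge, making $s$-$t$ unreachable in $G\setminus(F'\cup\{f\})=G\setminus F$, contradicting the hypothesis. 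Therefore some $Q^i_{p,F'}$ avoids $f$ and lies entirely in $G\setminus F$, at which point I reapply the same short/long dichotomy: either $|Q^i_{p,F'}|\leq \ell$ and the path was added via $\A_i(p)$, or it is long and Lemma~\ref{lemma:hitting_property} places some $w\in W$ on it, with $H_w$ (tolerating $k\geq |F|$ failures) furnishing the two halves of the required $s$-$t$ path.

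The only routine verifications left are that $H_w\setminus F$ actually contains the needed $s$-$w$ and $w$-$t$ paths (which is immediate from the $\kftrs$ guarantee of $H_w$ since $|F|\leq k$), and that the high-probability event of Lemma~\ref{lemma:hitting_property} suffices after taking a union bound over all $p\in \P$ and all $F'\subseteq E$ of size at most $k-1$ (giving at most $|\P|\cdot n^{O(k)}$ bad events, absorbed by the $\Theta(k(n/\ell)\log n)$ sampling rate). Without the maximally edge-disjoint property of $Q^1,Q^2$ the peel step would collapse, since both paths could share a non-cut edge; it is precisely this structural choice in the definition that underwrites the reduction from $k$ failures to $k-1$.
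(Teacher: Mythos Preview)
Your proof is correct and follows essentially the same approach as the paper. The paper's argument is just your Case~2 (the peeling trick) applied uniformly: it writes any nonempty failure set as $F'\cup\{e\}$ with $|F'|\leq k-1$, observes that one of $Q^1_{p,F'},Q^2_{p,F'}$ avoids $e$ (for exactly the cut-edge reason you spell out), and then runs the short/long dichotomy. Your separate Case~1 is therefore redundant, though harmless; note also that the union bound you mention at the end is already absorbed into the statement of Lemma~\ref{lemma:hitting_property}.
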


\begin{proof}
Consider a pair $p=(s,t)\in \P$. Let $F$ be a subset of at most $k-1$ edges, and $e$ be an edge in $E\setminus F$
satisfying $\dist(s,t,G\setminus(F\cup e))$ is finite. 
Observe that at least one of the paths $Q^1_{p,F}$ or $Q^2_{p,F}$ is disjoint with $e$.
Let $i\in\{1,2\}$ be the index satisfying $e\notin Q^i_{p,F}$.
If $|Q^i_{p,F}|\leq \ell$, then $Q^i_{p,F}$ is contained in $H$ which proves the claim.
So let us assume $|Q^i_{p,F}|>\ell$. By Lemma~\ref{lemma:hitting_property}, with 
high probability $Q^i_{p,F}$ has non-empty intersection with $W$.
Let us suppose $w\in W$ lies on $Q^i_{p,F}$.
Then there is a path from $s$ to $w$ and $w$ to $t$ in $G\setminus(F\cup e)$.
As $H$ contains $\kftrs$ for pairs in $\{w\}\times V$ and $V\times \{w\}$, 
there must exist an $s-w$ and $w-t$ path in $H\setminus (F\cup e)$, which proves the claim.
\end{proof}

Lemma~\ref{lemma:correctness}  proves that the subgraph $H$ is a $\kftrs$ for $(G,\P)$. The subgraph $H$ contains
$$O(|\P|\ell^{k} + k\cdot 2^k~(n^2/ \ell)\log n)$$ 
edges.
Choosing $\displaystyle \ell=(k2^k n^2 |\P|^{-1} \log n)^{\frac{1}{k+1}}$ bounds the number of edges in $H$ by
$O(k~ 2^k~n^{\frac{2k}{k+1}}~ |\P|^\frac{1}{k+1} ~\log^{\frac{k}{k+1}} n).$
We thus obtain the following theorem.

\begin{theorem}
There exists a Monte-Carlo algorithm that, for any $k \ge 1$ and any directed graph $G=(V,E)$ with $n$ vertices and a set $\P \subseteq V \times V$ of vertex-pairs, computes a $\kftrs(G,\P)$ in polynomial time  with $\widetilde O(k~ 2^k~n^{\frac{2k}{k+1}}~ |\P|^\frac{1}{k+1})$ edges.
\end{theorem}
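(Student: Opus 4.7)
The plan is to instantiate Algorithm~\ref{algo:kftrs} with a carefully chosen threshold $\ell$ and combine the three preceding lemmas with an optimization over $\ell$. Correctness is already packaged: by Lemma~\ref{lemma:correctness}, with high probability the output $H$ is a $\kftrs(G,\P)$. The point is that every surviving $s$--$t$ path of length at most $\ell$ under at most $k-1$ failures was added to $H$ explicitly (indexed by $(p,F)$), while any long surviving $Q^i_{p,F}$ is hit by the random sample $W$ with high probability by Lemma~\ref{lemma:hitting_property}, after which the single-source-and-sink structures $H_w$ carry through the reachability.

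It therefore remains to pick $\ell$ optimally and verify the two resulting bounds. I would count edges from two sources: the union of $H_w$ over $w \in W$ contributes $|W| \cdot O(2^k n) = O(k \cdot 2^k n^2 \log n / \ell)$ edges, using $|W| = \Theta(k(n/\ell)\log n)$ together with the $O(2^k n)$ single-source bound from Theorem~\ref{theorem:ftrs}; the explicitly added short paths contribute at most $2|\P| \cdot \ell^{k-1} \cdot \ell = O(|\P|\ell^k)$ edges, since $|\A(p)| \le \ell^{k-1}$ by Lemma~\ref{lemma:computing_A(p)}. Summing, $|E(H)| = O(|\P|\ell^k + k \cdot 2^k n^2 \log n / \ell)$. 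Balancing the two terms by setting $\ell = (k \cdot 2^k n^2 \log n / |\P|)^{1/(k+1)}$ then yields the claimed $\tO(k \cdot 2^k n^{2k/(k+1)} |\P|^{1/(k+1)})$ edge bound.

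For the polynomial running time, each $H_w$ is computable in polynomial time by Theorem~\ref{theorem:ftrs}, and each $\A(p)$ is constructed in $O(m\ell^{k-1})$ time by Lemma~\ref{lemma:computing_A(p)}. With our choice of $\ell$ we have $\ell^{k-1} = O(n^{2(k-1)/(k+1)} \polylog n) = O(n^2)$, so every step is polynomial. The place where I would be most careful is the union bound underlying Lemma~\ref{lemma:hitting_property}: it must hold simultaneously over all $|\P| \le n^2$ pairs and all $n^{O(k)}$ failure sets of size at most $k-1$, which is exactly what forces the extra factor of $k$ in $|W|$ so that any single long path is missed with probability only $n^{-\Theta(k)}$, leaving overall success probability $1 - 1/\mathrm{poly}(n)$ after the union bound.
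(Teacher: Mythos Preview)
Your proposal is correct and follows essentially the same approach as the paper: invoke Lemmas~\ref{lemma:computing_A(p)}, \ref{lemma:hitting_property}, and \ref{lemma:correctness} for correctness, count edges as $O(|\P|\ell^k + k\cdot 2^k (n^2/\ell)\log n)$, and balance by choosing $\ell=(k\,2^k n^2 |\P|^{-1}\log n)^{1/(k+1)}$. The only minor looseness is the phrase ``every surviving $s$--$t$ path of length at most $\ell$ \ldots\ was added explicitly''---what is actually added are the two specific paths $Q^1_{p,F},Q^2_{p,F}$ for $F\in\A(p)$, not all short paths---but since you immediately defer to Lemma~\ref{lemma:correctness} this does not affect the argument.
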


\section{Lower Bounds for Pairwise {\ftro} and {\ftrs}}
\label{section:lower-bounds}

In this section, we show that the size bound of pairwise 2-$\ftro$ shown in Theorem~\ref{thm:dual-oracle} is optimal upto the word size. First, we construct a hard instance for which the whole graph is required to be stored as a pairwise 2-$\ftrs$. Next, we use that graph to argue that any pairwise 2-$\ftro$ must be of size same as that graph. 

\subsection{A lower bound on the size of subgraph}
\label{subsection:lower_bound_subgraph}
We first provide construction of a directed graph $G=(V,E)$ along with a set $S$ of $\Theta(r)$ vertices such that any 2-$\ftrs(G, S\times S)$ requires $\Omega(nr)$ edges.

Let $n,r,N\geq 1$ be integers satisfying $n=2rN$.
We construct a directed graph  $G=(V,E)$ on $n$ vertices along with a source-set $S$ of $\Theta(r)$
vertices as follows:
We take $2r$ vertex-disjoint directed paths, $P_1,\ldots,P_r$, $Q_1,\ldots,Q_r$, each of length $N$. Let the path $P_i=(p_{1,i},p_{2,i},\ldots,p_{N,i})$ and the path $Q_j=(q_{1,j},q_{2,j},\ldots,q_{N,j})$, for $i,j\in[r]$.
Let $S=\{a_1,\ldots,a_r,b_1,\ldots,b_r\}$ be such that
$P_i$ starts from node $a_i$, and $Q_j$ terminates at node $b_j$, for $i,j\in [r]$.
Further, for each $k \in [N]$, let $A_k=\{p_{k,1},\ldots,p_{k,r}\}$ and  $B_k=\{q_{k,1},\ldots,q_{k,r}\}$.
The vertex set $V(G)$ of our graph $G$ is just the union of vertices on the paths $P_i$'s and $Q_j$'s; 
and the edge set $E(G)$ is union of sets $\big\{(E(P_i)\cup E(Q_j)\mid i,j\in [r]\big\}$ and 
$\big\{(A_k\times B_k) \mid k\in [N]\big\}$.

The graph $G$ has $n$ vertices and $\Omega(nr)$ edges,
since out-degree of vertices in $\bigcup_{k=1}^N A_k$ and
in-degree of vertices in $\bigcup_{k=1}^N B_k$ is at-least $r$.
Next consider a triplet $(i,j,k)\in [r]\times [r]\times [N]$.
We will show that the edge $(p_{k,i},q_{k,j})\in A_k\times B_k$ must be present in any 2-$\ftrs(G,S\times S)$. To prove this consider the source-destination pair: $(a_i,b_j)$, and
let $ f_1=(p_{k,i},p_{k+1,i}),~~ f_2=(q_{k-1,j},q_{k,j})$. Observe, on failure of $f_1$ and $f_2$, 
$R=(a_i=p_{1,i},p_{2,i},\ldots,p_{k,i})\circ (p_{k,i},q_{k,j}) \circ (q_{k,j},q_{k+1,j},\ldots,q_{N,j}=b_j)$
is the unique path from $a_i$ to $b_j$ in $G\setminus \{f_1,f_2\}$.
This shows that for any $k\in [N]$, all the edges in $A_k\times B_k$ must be contained in a 2-$\ftrs(G,S\times S)$, thereby implying the lower bound of $\Omega(Nr^2)=\Omega(n|S|)$ edges.

So we have the following theorem.

\begin{theorem}
\label{thm:lb-ftrs-source}
For every $n,r~(r\leq n)$, there exists an $n$-vertex directed graph $G$ and a set $S$ of $r$ vertices such that any $2$-$\ftrs(G,S\times S)$ requires $\Omega(r n)$ edges.
\end{theorem}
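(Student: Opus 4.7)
The plan is to construct an explicit hard instance that forces every bipartite edge to be included in any 2-$\ftrs$. Set $N = n/(2r)$, and take $2r$ vertex-disjoint directed paths of length $N$: "left" paths $P_1,\dots,P_r$ with $P_i = (p_{1,i},\dots,p_{N,i})$ and "right" paths $Q_1,\dots,Q_r$ with $Q_j = (q_{1,j},\dots,q_{N,j})$. Designate $S = \{a_1,\dots,a_r,b_1,\dots,b_r\}$ where $a_i$ is the start of $P_i$ and $b_j$ the end of $Q_j$. At each level $k \in [N]$, insert the complete directed bipartite graph from $A_k = \{p_{k,1},\dots,p_{k,r}\}$ to $B_k = \{q_{k,1},\dots,q_{k,r}\}$. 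This yields a graph on exactly $2rN = n$ vertices with $\Theta(Nr^2) = \Theta(nr)$ edges.

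Next I would argue that every bipartite edge is \emph{essential} for $2$-$\ftrs(G, S \times S)$. Fix any level-$k$ edge $e = (p_{k,i}, q_{k,j})$ and consider the source-destination pair $(a_i, b_j) \in S \times S$, together with the failure set $F = \{f_1, f_2\}$ where $f_1 = (p_{k,i}, p_{k+1,i})$ lies on $P_i$ and $f_2 = (q_{k-1,j}, q_{k,j})$ lies on $Q_j$ (with the natural boundary adjustment when $k = 1$ or $k = N$, where one can just drop $f_2$ or $f_1$ respectively). In $G \setminus F$, any $a_i$-to-$b_j$ path must first travel along $P_i$ until at most $p_{k,i}$ (beyond which $f_1$ is blocking), then cross some bipartite edge at level $\le k$, and it must reach $Q_j$ at $q_{k',j}$ for some $k' \ge k$ (because of $f_2$). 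The only way to cross from level $k$ on the left to level $\ge k$ on the right in this graph without re-entering the left paths (there are no edges back) is to take a bipartite edge from $A_k$ into $B_k$ (or go through $A_{k'}$ for some $k' < k$, but then one lands in $B_{k'}$ with $k' < k$, blocked by $f_2$). Hence the unique surviving $a_i$-to-$b_j$ path is $P_i[a_i - p_{k,i}] \circ (p_{k,i}, q_{k,j}) \circ Q_j[q_{k,j}-b_j]$, and so $e$ must belong to every $2$-$\ftrs(G, S \times S)$.

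Summing over all triples $(i,j,k) \in [r] \times [r] \times [N]$, we find that all $Nr^2 = \Omega(nr)$ bipartite edges are mandatory. Since $|S| = 2r$, this is $\Omega(n|S|) = \Omega(rn)$, as claimed.

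The main conceptual step is the uniqueness argument for the surviving path after removing $f_1, f_2$; the rest is counting. The only subtlety is handling the boundary levels $k = 1$ and $k = N$, where one of $f_1, f_2$ does not exist as defined, but in those cases a single failure suffices to force $e$ into the preserver by the same reasoning. Note also that the construction is valid in the regime $r \le n$ (so that $N \ge 1$), and produces a directed acyclic graph, so the lower bound is robust to restrictions to DAGs.
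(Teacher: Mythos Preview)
Your proposal is correct and follows essentially the same approach as the paper: the identical construction of $2r$ disjoint paths with level-wise complete bipartite crossings, the same choice of failure edges $f_1=(p_{k,i},p_{k+1,i})$ and $f_2=(q_{k-1,j},q_{k,j})$ for each bipartite edge, and the same uniqueness-of-surviving-path argument. Your version even adds a bit more care on the boundary levels $k=1$ and $k=N$, which the paper glosses over.
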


As an immediate corollary of the above theorem, we get Theorem~\ref{thm:lb-ftrs-pair}.

\paragraph*{Lower Bound under multiple failures. }
We now show how to extend the above lower bound result to multiple failures. 
Given parameters $\rho,k\geq 1$, first compute the lower bound 2-{\ftrs} graph $G$ considered
in above construction with $|S|=2r$, where 
$r=2^k \rho$.
We arbitrarily split set $\{a_1,\ldots,a_r\}$  into $\rho$ sets, say $X_1,\ldots,X_\rho$, each of size $2^k$, and similarly 
split  $\{b_1,\ldots,b_r\}$ into $\rho$ sets, say $Y_1,\ldots,Y_\rho$, again each of size $2^k$.
For $j=1$ to $\rho$, 

\begin{itemize}
\item Compute a binary tree $T_{j}$ on $2^{k+1}-1$ new nodes, and merge the $i^{th}$ leaf of
this tree with $i^{th}$ vertex in $X_j$. We assume edges in tree are directed away from the root node
(say $x_{j}$).

 \item Compute a binary tree $T'_{j}$ on $2^{k+1}-1$ new nodes, and merge the $i^{th}$ leaf of
this tree with $i^{th}$ vertex in $Y_j$. We assume edges in tree are directed towards the from root node
(say $y_{j}$).
\end{itemize}


Set $W=\{x_1,\ldots,x_\rho,y_1,\ldots,y_\rho\}$ to be our new source set.
We will show that all the edges of the resultant graph (say $H$) are essential in $W\times W$ {\ftrs} when the number of failures is
$2k+2$.
Consider a pair $(a,b)$, such that $a\in A_1$ and $b\in B_N$.
Let $(i,j)$ be such that $a\in X_i$ and $b\in Y_j$.
Let $\pi_a$ be path from $x_i$ to $a$ in tree $T_i$, and
$\pi_b$ be path from $b$ to $y_j$ in tree $T'_j$.
Let $F_0$ be set of $2k$ edge failures that comprises of:
(i) all the edges in $T_i$ with exactly one endpoint (i.e. tail of the edge) on $\pi_a$,
(ii) all the edges in $T'_j$ with exactly one endpoint (i.e. head of the edge) on $\pi_b$.
Then each path from $x_i$ to $y_j$ in the surviving graph $H\setminus F_0$ passes through vertices $a,b$.
Therefore,
preserving reachability between pairs in the set $S\times S$ under $2$ failures in $G$,
translates to 
preserving reachability between pairs in the set $W\times W$ under $2k+2$ failures in $H$.
This proves that a $(2k+2)$-{\ftrs} of $H$ with respect to $W\times W$ must contains $\Omega(\rho ~2^{k}~n)$
edges.

We thus conclude with the following theorem.



\begin{theorem}
\label{thm:lb-kftrs-source}
For every $n$, there exists an $n$-vertex directed graph $G$ and a set $S$ of $\rho$ vertices such that any $\kftrs(G,S\times S)$, for an even $k$, requires $\Omega(\rho ~2^{k/2}~n)$ edges.
\end{theorem}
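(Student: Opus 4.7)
The plan is to formalize the binary-tree amplification construction sketched in the paragraph preceding the statement. Writing $k=2k'+2$ (legal for every even $k\geq 2$) and $\rho=2\rho'$, I would start from the 2-\ftrs{} lower-bound graph $G_0$ of Theorem~\ref{thm:lb-ftrs-source} built on $n_0$ vertices with source set of size $2r$, where $r=2^{k'}\rho'$; by that theorem the graph already requires $\Omega(n_0 r)$ edges in any 2-\ftrs{} for the pair set $S\times S=\{a_i,b_j\}\times\{a_i,b_j\}$, and the key structural property is that for each triple $(i,j,k)$ there is an explicit pair $(f_1,f_2)$ of base-graph edges whose removal leaves a \emph{unique} $a_i$-to-$b_j$ surviving path that uses the designated cross-edge $(p_{k,i},q_{k,j})$.

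Next I would augment $G_0$ by gluing, on the source side, a perfect in-directed binary tree $T_j$ of $2^{k'+1}-1$ new nodes onto each block $X_j$ (of $2^{k'}$ of the $a_i$'s) so that the leaves of $T_j$ coincide with $X_j$ and edges are oriented away from a new root $x_j$; symmetrically glue an out-directed tree $T'_j$ rooted at $y_j$ onto each block $Y_j$ of $2^{k'}$ sinks. The new source set is $W=\{x_1,\dots,x_{\rho'},y_1,\dots,y_{\rho'}\}$, with $|W|=2\rho'=\rho$, and the total vertex count is $n_0+2\rho'(2^{k'}-1)=n_0+O(\rho\cdot 2^{k/2})$, which I would rescale so that the final graph has exactly $n$ vertices without affecting the order of the edge bound.

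The main step, and the mild obstacle, is to upgrade the 2-failure argument to $k$ failures in the presence of trees. Given any triple $(i,j,k)$ selecting a cross-edge $(p_{k,i},q_{k,j})$ together with endpoint blocks $X_i, Y_j$, consider the unique root-to-leaf path $\pi_a$ in $T_i$ reaching $a=a_\bullet\in X_i$ and the unique leaf-to-root path $\pi_b$ in $T'_j$ from $b=b_\bullet\in Y_j$. Let $F_0$ be the set of all tree edges with exactly one endpoint on $\pi_a$ or on $\pi_b$ but not lying on $\pi_a\cup\pi_b$; each of $\pi_a,\pi_b$ has $k'$ internal nodes, each contributing one such sibling edge, so $|F_0|=2k'$. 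After deleting $F_0$, the only $x_i$-to-$y_j$ routes in the augmented graph are those that enter the base graph at $a$ and exit at $b$, so preserving $(x_i,y_j)$-reachability under the $k=2k'+2$ failures $F_0\cup\{f_1,f_2\}$ forces the base graph edge $(p_{k,i},q_{k,j})$ to be retained. Ranging over all $(i,j,k)$ shows every cross-edge of $G_0$, a set of $\Omega(n_0 r)=\Omega(n\cdot 2^{k'}\rho')=\Omega(n\cdot 2^{k/2}\rho)$ edges, must be present in any $\kftrs(G,W\times W)$. The only subtlety is bookkeeping: verifying that $F_0$ really severs \emph{every} alternate $x_i$-to-$y_j$ path (which follows because in a binary tree the only root-to-leaf path surviving the removal of all ``wrong-turn'' sibling edges along $\pi_a$ is $\pi_a$ itself) and that the two failures $f_1,f_2$ remain the unique cut even after the tree-structure is present (which is immediate since $f_1,f_2$ lie inside $G_0$ and trees only add disjoint routes above $a$ and below $b$).
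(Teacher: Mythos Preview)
Your proposal is correct and follows essentially the same approach as the paper: start from the dual-failure lower-bound graph, partition the sources and sinks into blocks of size $2^{k'}$, attach complete binary trees to funnel each block to a single new root, and then observe that killing the $2k'$ ``wrong-turn'' tree edges along the chosen root-to-leaf paths forces any $x_i$-to-$y_j$ path through the designated base-graph leaves, reducing the $k$-failure question to the original $2$-failure one. Your parametrization $k=2k'+2$, $\rho=2\rho'$, the sibling-edge count $|F_0|=2k'$, and the final edge bound $\Omega(n\cdot 2^{k/2}\rho)$ all match the paper's argument (modulo the paper's slightly looser notation), and your explicit remarks on rescaling the vertex count and on why $F_0$ severs every alternate tree route are details the paper leaves implicit.
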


\subsection{A lower bound on the size of oracle}
\label{subsection:lower_bound_oracle}
We now modify the construction of $G$ (used in Section~\ref{subsection:lower_bound_subgraph})  
to obtain a lower bound on the size of dual-fault-tolerant reachability oracle. We will provide a reduction from a closed variant of the well-studied \emph{Index} problem to the problem of constructing 2-$\ftro$ of a graph. 

We consider the following variant of the indexing problem {\textsc{3-D Index}}: Alice is given $N$ matrices $Z_1,\cdots,Z_N$
in $\{0,1\}^{r \times r}$. Bob is given a triplet $(k,i,j)$, where $k\in [N]$, $i,j\in [r]$ and his task is to output $(i,j)^{th}$ bit of the matrix $Z_k$.

Here, we consider the randomized one-way communication complexity of the above problem. More specifically, suppose only Alice can send bits to Bob (but Bob cannot send anything to Alice). Then the question is how many bits Alice needs to send to solve the problem {\textsc{3-D Index}} with probability at least $2/3$. 

In the standard Index problem, Alice is given a string $x \in \{0,1\}^{\ell}$ and Bob is given an index $i \in [\ell]$, and Bob's task is to output the $i^{th}$ bit of $x$. It is well-known from~\cite{KNR99}, the randomized one-way communication complexity of the Index problem is $\Omega(\ell)$. It is easy to see that the standard Index problem is a special case (with $N=\ell$ and $r=1$) of {\textsc{3-D Index}}.

\begin{theorem}[\cite{KNR99}]
\label{thm:indexing-lb}
The randomized one-way communication complexity of {\textsc{3-D Index}} is $\Omega(Nr^2)$.
\end{theorem}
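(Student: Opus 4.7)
The plan is to observe that {\textsc{3-D Index}} is nothing more than the standard Index problem over a bit-string of length $Nr^2$, presented in a re-indexed form, and then invoke the known $\Omega(\ell)$ lower bound of~\cite{KNR99} on Index for strings of length $\ell = Nr^2$.

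First, I would set up a tight two-way reduction between the two problems. Fix a bijection $\phi \colon [N] \times [r] \times [r] \to [Nr^2]$, for instance $\phi(k,i,j) = (k-1)r^2 + (i-1)r + j$. Given an instance $(Z_1,\dots,Z_N)$ of {\textsc{3-D Index}}, define $x \in \{0,1\}^{Nr^2}$ by $x_{\phi(k,i,j)} = (Z_k)_{i,j}$. Conversely, given an Index instance $x \in \{0,1\}^{Nr^2}$, build matrices $Z_k$ using the inverse of $\phi$. In either direction, Bob's query $(k,i,j)$ in {\textsc{3-D Index}} corresponds exactly to Bob's query $\phi(k,i,j)$ in Index, and the correct answer is the same bit. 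The reduction uses no communication: Alice simply runs the alleged {\textsc{3-D Index}} protocol on her encoded matrices, and Bob runs it on his decoded triplet.

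Second, I would invoke the standard lower bound. As stated in the excerpt, the randomized one-way communication complexity of Index on $\ell$-bit strings with constant error probability is $\Omega(\ell)$~\cite{KNR99}. Substituting $\ell = Nr^2$ and combining with the reduction from the previous paragraph, any one-way randomized protocol that solves {\textsc{3-D Index}} with error at most $1/3$ must transmit $\Omega(Nr^2)$ bits; otherwise we would obtain a better-than-$\Omega(Nr^2)$ protocol for Index on strings of length $Nr^2$, contradicting~\cite{KNR99}.

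There is essentially no conceptual obstacle: the only point requiring a sentence of care is that the reduction preserves success probability exactly (no public or private randomness is needed beyond what the {\textsc{3-D Index}} protocol already uses, and the bijection $\phi$ is public), so the $\Omega(\ell)$ bound transfers verbatim. Hence the claimed bound of $\Omega(Nr^2)$ follows.
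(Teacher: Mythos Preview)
Your proof is correct and matches the paper's approach: the paper does not give a formal proof of this theorem but simply notes (in the paragraph preceding it) that the standard Index problem is a special case of \textsc{3-D Index} and cites the $\Omega(\ell)$ bound of~\cite{KNR99}. Your argument makes the equivalence precise via the explicit bijection $\phi$, which is exactly the right way to fill in the details the paper leaves implicit.
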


Next, we provide a reduction from {\textsc{3-D Index}} to the problem of constructing 2-$\ftro$. We modify the graph $G$ constructed in Section~\ref{section:lower-bounds} as follows: For each $i,j\in[r]$ and each $k\in[N]$, we keep the edge $(p_{k,i},q_{k,j})\in A_k\times B_k$ in $G$ if and only if the $(i,j)^{th}$ bit of matrix $Z_k$ is one. Now, we claim that for any triplet $(k,i,j)$, where $k \in [N]$ and $i,j \in [r]$, the $(i,j)^{th}$ bit of the matrix $Z_k$ can be computed by deciding whether there is a $a_i-b_j$ path in $G\setminus \{f_1,f_2\}$ for $ f_1=(p_{k,i},p_{k+1,i}),~~ f_2=(q_{k-1,j},q_{k,j})$. This is because, on failure of $f_1$ and $f_2$, there is a path from $a_i$ to $b_j$ in $G$ if and only if $(p_{k,i},q_{k,j})\in E(G)$.

So, given any instance of {\textsc{3-D Index}}, Alice will construct a graph $G$ as above, and consider a source-set $S=\{a_1,\ldots,a_r,b_1,\ldots,b_r\}$ (as in Section~\ref{section:lower-bounds}). Then she will construct a 2-$\ftro(G,S\times S)$ and send that data structure to Bob. Upon receiving the data structure, Bob (given $(k,i,j)$) can now check whether there is a $a_i-b_j$ path in $G\setminus \{f_1,f_2\}$ for $ f_1=(p_{k,i},p_{k+1,i}),~~ f_2=(q_{k-1,j},q_{k,j})$ and accordingly output the $(i,j)^{th}$ bit of the matrix $Z_k$.

Now, since randomized one-way communication complexity of {\textsc{3-D Index}} is $\Omega(Nr^2)$, any 2-$\ftro(G,S\times S)$ must be of size at least $\Omega(Nr^2)=\Omega(nr)$ bits (recall, $n=\Theta(Nr)$). So, we conclude with the following theorem.

\begin{theorem}
\label{thm:lb-oracle-source}
For any positive integers $n,r~(r\leq n)$, there exists an $n$-vertex directed graph with a source set $S$
of $r$ vertices, such that any 2-$\ftro(G,S\times S)$ must be of size $\Omega\big(n|S|\big)$ (in bits).
\end{theorem}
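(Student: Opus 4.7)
\medskip

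\noindent\textbf{Proof plan for Theorem~\ref{thm:lb-oracle-source}.} The plan is to reduce the \textsc{3-D Index} problem to the task of constructing a 2-$\ftro$ on the hard instance of Section~\ref{subsection:lower_bound_subgraph}, and then invoke Theorem~\ref{thm:indexing-lb}. Set $n=2rN$ and let $G_0=(V,E_0)$ be the directed graph from Section~\ref{subsection:lower_bound_subgraph} built on the $2r$ paths $P_1,\dots,P_r$, $Q_1,\dots,Q_r$. Given an input $Z_1,\dots,Z_N\in\{0,1\}^{r\times r}$ to \textsc{3-D Index}, Alice constructs a graph $G$ from $G_0$ by retaining, for each $k\in[N]$ and $i,j\in[r]$, the bipartite edge $(p_{k,i},q_{k,j})$ if and only if $(Z_k)_{i,j}=1$; all path edges of $P_i$'s and $Q_j$'s are kept intact. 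The source set is $S=\{a_1,\dots,a_r,b_1,\dots,b_r\}$, as before, so $|S|=2r$.

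\medskip

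\noindent The first step is to verify the decoding identity. Fix a triplet $(k,i,j)\in[N]\times[r]\times[r]$ and consider the failure pair $f_1=(p_{k,i},p_{k+1,i})$ and $f_2=(q_{k-1,j},q_{k,j})$ (with the obvious conventions at the endpoints of $P_i$, $Q_j$). After removing $f_1,f_2$, vertex $p_{k,i}$ is a sink of the subgraph induced on $V(P_i)$, and $q_{k,j}$ is a source of the subgraph induced on $V(Q_j)$, so every $a_i \to b_j$ path in $G\setminus\{f_1,f_2\}$ must enter $\{q_{k,j},q_{k+1,j},\dots,b_j\}$ from outside $V(Q_j)$, and the only such crossing edges available in the layered construction are the bipartite edges of the single layer $A_k\times B_k$ entering $q_{k,j}$. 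An identical argument on the $P_i$ side forces the path to leave $V(P_i)$ at $p_{k,i}$. Consequently there is an $a_i \to b_j$ path in $G\setminus\{f_1,f_2\}$ if and only if the single edge $(p_{k,i},q_{k,j})$ is present in $G$, which by construction is equivalent to $(Z_k)_{i,j}=1$.

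\medskip

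\noindent Next I set up the communication protocol. Alice, given $Z_1,\dots,Z_N$, constructs $G$ and builds a (possibly randomized) 2-$\ftro(G,S\times S)$ with success probability at least $2/3$ on every query; she transmits the bit-encoding of this data structure to Bob, using a number of bits equal to its size. Bob, given $(k,i,j)$, reconstructs the fixed indexing of $V$ and $S$ (this is public, as $n,r,N$ are known), computes $f_1,f_2$ as above, queries the oracle on $(a_i,b_j,\{f_1,f_2\})$, and outputs $1$ if and only if the oracle reports reachability. By the equivalence established above, Bob outputs the correct value of $(Z_k)_{i,j}$ with probability at least $2/3$. This is a valid one-way randomized protocol for \textsc{3-D Index}, whose communication cost equals the number of bits of Alice's oracle.

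\medskip

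\noindent Finally, Theorem~\ref{thm:indexing-lb} gives $\Omega(Nr^2)$ as a lower bound on the one-way randomized communication complexity of \textsc{3-D Index}, so the 2-$\ftro(G,S\times S)$ must be of size $\Omega(Nr^2)=\Omega(n\,|S|)$ bits, as claimed. The only conceptual subtlety in the argument is ensuring that the removal of bipartite edges (based on the matrices) does not create new $a_i \to b_j$ paths that bypass the critical edge $(p_{k,i},q_{k,j})$; this is handled by the layered structure of $G_0$, where all $a_i \to b_j$ routes are forced to cross from an $A$-layer to the corresponding $B$-layer, and $f_1,f_2$ isolate exactly one such crossing. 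I do not expect any further obstacle; note also that this lower bound applies regardless of the oracle's query time and even when $G$ is restricted to DAGs, since the constructed $G$ is acyclic.
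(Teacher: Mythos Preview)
Your proposal is correct and follows essentially the same approach as the paper: both reduce \textsc{3-D Index} to 2-$\ftro$ by having Alice encode the matrices into the bipartite edges of the hard instance from Section~\ref{subsection:lower_bound_subgraph}, and both use the same failure pair $f_1=(p_{k,i},p_{k+1,i}),\ f_2=(q_{k-1,j},q_{k,j})$ to isolate the single crossing edge $(p_{k,i},q_{k,j})$. Your write-up is slightly more detailed in verifying the decoding identity, but the argument is the same.
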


The above theorem is true even if we allow the oracle to err with probability at most $1/3$. As an immediate corollary of the above theorem, we get Theorem~\ref{thm:lb-oracle-pair}.

\subsection{Any lower bound of {\ftrs} implies that for {\ftro}}
\label{sec:lb-ftro-ftrs}
Above we use the hard instance for 2-$\ftrs$ to argue about the same size lower bound for 2-$\ftro$. However, in general, so far we do not know how to turn a lower bound for pairwise $\ftrs$ to that for pairwise $\ftro$. In this section, we establish such a connection. Let ${\cal S}(n,p,k)$ and ${\cal O}(n,p,k)$ be the respective bounds on the extremal size of $\ftrs$ and $\ftro$ 
for $n$ vertex graphs with $|P|=p$ demand pairs under $k$ edge failures.

\begin{theorem}
\label{thm:lb-ftrs-ftro}
For any $n,k\geq 1$, the functions $\cal O$ and $\cal S$ satisfy: ${\cal O}(2n,p,k)=\Omega(k^{-1}{\cal S}(n,p,k-1))$.
\end{theorem}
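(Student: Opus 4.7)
The plan is to prove the claim via a decoder that, given any $k$-FTRO for a suitably chosen $2n$-vertex graph, extracts a $(k-1)$-FTRS of a related $n$-vertex graph while losing only a factor of $k$ in size. First, I would take an $n$-vertex directed graph $G_0$ with pair set $P$ of size $p$ that attains $\mathcal{S}(n,p,k-1)$, and lift it to a graph $G$ on $2n$ vertices by adjoining $n$ fresh auxiliary vertices; the lifting will be done so that reachability between any pair in $P$ is unchanged, so any $k$-FTRO for $G$ in particular answers reachability queries on $G_0$ under up to $k-1$ failures. Hence it suffices to bound the bit-size of an arbitrary $k$-FTRO $\mathcal{A}$ for $G$ from below.

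Second, I would apply the standard criticality test to decode a subgraph $H \subseteq E(G_0)$: include an edge $e$ in $H$ iff there exist $(s,t)\in P$ and $F \subseteq E(G_0)\setminus\{e\}$ with $|F| \leq k-1$ such that $\mathcal{A}$ answers reachable on $(s,t,F)$ but unreachable on $(s,t,F\cup\{e\})$. By the oracle's correctness on every $\leq k$-failure query, $H$ is a valid $(k-1)$-FTRS of $G_0$, so $|H| \geq \mathcal{S}(n,p,k-1)$. The main obstacle is to bound $|H| \leq O(k\cdot|\mathcal{A}|)$, which is where the factor $k^{-1}$ enters. My plan there is a counting-plus-indistinguishability argument: for each $e \in H$ fix a witness $(s_e,t_e,F_e)$ and associate to $e$ the $k$-edge failure set $M_e := F_e \cup \{e\}$; since any given $k$-edge set can act as $M_e$ for at most the $k$ edges it contains, the collection $\{M_e : e\in H\}$ contains at least $|H|/k$ distinct critical $k$-sets. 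Using the $n$ auxiliary vertices as a pool of toggleable bypass gadgets, I would exhibit a family of $2^{|H|/k}$ graphs on $2n$ vertices, indexed by subsets of these distinct $M_e$'s, whose $k$-FTROs must pairwise disagree on the relevant witness queries; the standard counting bound then forces $|\mathcal{A}| \geq |H|/k = \Omega(k^{-1}\mathcal{S}(n,p,k-1))$.

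The most delicate step will be the gadget construction: I must ensure that introducing a bypass for one $M$ does not corrupt reachability near another $M'$, so that all $2^{|H|/k}$ family members are simultaneously realizable and pairwise distinguishable by the oracle, while the total gadget footprint stays within the $n$ auxiliary vertices afforded by doubling from $n$ to $2n$. The planned approach is to localise each gadget to a constant-size footprint by subdividing a single edge per critical $M$ and reusing fresh subdivision vertices; in the regime of the hard instances from Section~\ref{subsection:lower_bound_subgraph}, where $|H|$ is $O(n\sqrt{p})$ and $k$ is small, this accounting fits comfortably within the additional $n$ vertices.
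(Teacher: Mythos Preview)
Your overall architecture—pick an extremal $(k-1)$-FTRS instance, attach bypass gadgets for a large collection of ``critical'' failure sets, and apply a counting bound—is indeed the paper's approach. But two concrete steps in your plan do not go through as written.

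First, your claim that the decoded set $H$ is a $(k-1)$-\textsc{ftrs} of $G_0$ is false in general. Take $G_0$ to be three internally disjoint $s$--$t$ paths with $k=2$: no single edge $e$ has a witness $F$ of size $\le 1$ making $F\cup\{e\}$ an $(s,t)$-cut, so $H=\emptyset$, which is certainly not a $1$-\textsc{ftrs}. The set of individually critical edges need not be closed under simultaneous removal. The paper avoids this by taking $G_0$ to already be a \emph{minimal} $(k-1)$-\textsc{ftrs} $\G$ (so every edge of $\G$ lies in some minimal $(s,t)$-cut of size $\le k$, and $|E(\G)|=\mathcal S(n,p,k-1)$ by extremality). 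You should do the same; the ``decoder'' framing is then unnecessary, since $H=E(\G)$ regardless of the oracle.

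Second, having $|H|/k$ pairwise \emph{distinct} sets $M_e$ is not enough to make the $2^{|H|/k}$-family argument work. If $e_j\in M_{e_i}$ for $j\ne i$, then toggling the bypass for $e_j$ changes the answer to the witness query $(s_{e_i},t_{e_i},M_{e_i})$, so the family members are no longer distinguishable on the intended bit. The paper handles this not by distinctness but by a greedy selection of a set $\mathcal L$ of edges (one edge per chosen cut) with the stronger property that $(C_e\setminus\{e\})\cap\mathcal L=\emptyset$ for every $e\in\mathcal L$; this is exactly what guarantees that the bypass for $e'$ never sits inside the cut $C_e$ for another $e\in\mathcal L$, and it still yields $|\mathcal L|\ge|E(\G)|/k$. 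Your ``localise each gadget by subdividing a single edge'' is the right gadget, but you are missing the combinatorial selection step that makes the gadgets independent.
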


\begin{proof}
For a given $n,p$, let $G=(V,E)$ be an $n$ vertex graph, and $\P$ be a pair-set of size $p$, for which
the optimal $\ftrs(G,\P,k-1)$, say $\G$, contains ${\cal S}(n,p,k-1)$ edges.
For each edge $e\in E(\G)$, there must exist a pair, say $P_e=(s,t)\in \P$, and an edge-set, say $C_e$, 
of size at most $k$ satisfying that $C_e$ is a minimal $(s,t)$-cut in $\G$.
Indeed, if $e$ is in none of the minimal cuts of size $k$, then for each $F\subseteq E(\G)$ of size upto $k-1$,
$e$ would not be an essential edge to preserve reachability between pairs $\P$ in graph $\G$, and so
we could eliminate $e$ from $\G$. 
Compute a set $\L$ by
iterating over the edges of $\G$, and including an edge $e$ in $\L$
if and only if $e$ is not contained in cuts $C_{e'}$, for each $e'$ in partially computed set $\L$.
Clearly $|\L|=\Omega(k^{-1}E(\G))$. 
The set $\L$ satisfies the property that 
for each $e\in \L$, 
the set $(C_e\setminus\{e\})$ has an empty intersection with $\L$.

We will present an information-theoretic argument by encoding $|\L|$-length binary-vectors
in the fault-tolerant pair-wise reachability oracle of $2n$ vertex graphs. No data structure can store this in $o(|\L|)$ bits.

For each $J\in \{0,1\}^{|\L|}$, construct a graph $\H_J$ on $2n$ vertices as follows.
Initialize $\H_J$ to $\G$. 
For each $v\in V$, add to $\H_J$ another copy of $v$, say $v_0$. 
Finally, for each $e=(x,y)\in \L$ satisfying that $e_J=0$, add edges $(x,y_0)$ and $(y_0,y)$ to $\H_J$.
Now let $\A$ be an optimal $k$-fault-tolerant reachability oracle for graph family $\{G_J~|~J\in \{0,1\}^{\L}\}$
with respect to pair-set $\P$.
Consider an edge $e\in \L$, and the corresponding pair $P_e$ and cut-set $C_e$.
Our proof works by observing that set $C_e\in \L$ is a cut-set (of size at most $k$) for pair $P_e$ in $\H_J$
if an only if $e_J=1$.
Thus by querying reachability after $k$ failures with respect to pairs in $\P$, we can infer the set $J$ through oracle $\A$.
As the set $J$ can have $2^{|L|}$ configurations, by standard information-theoretic arguments, this 
provides a lower bound of $\Omega(|\L|)=\Omega(k^{-1}{\cal S}(n,p,k-1))$ on the size of oracle $\A$.
\end{proof}

As a direct corollary of the above theorem together with the lower bound of the reachability preserver without failure of~\cite{AB18}, we get Theorem~\ref{thm:1-ftro-lb}. By setting $d=2$ in Theorem~\ref{thm:1-ftro-lb}, for $p=O(n)$, we get a lower bound of $\Omega(n^{2/3}p^{1/2})$ on the size of $1$-FTRO.

\paragraph*{Acknowledgments. }The authors would like to thank anonymous reviewers for many helpful comments on a preliminary version of this work, especially for pointing out~\cite{bodwin2020note}.

\bibliographystyle{plainurl}
\bibliography{references}

\end{document}